\newcommand{\prodp}{\mbox{$\prod_{i=1}^nP_i$}}
\newcommand{\prodq}{\mbox{$\prod_{i=1}^nQ_i$}}
\newcommand{\bern}{\mathrm{Bern}}
\newcommand{\ber}{\mathrm{Bern}}
\newcommand{\poi}{\mathrm{Poi}}
\newcommand{\dtv}{\mbox{${d}_{\mathrm{TV}}$}}
\newcommand{\dtvsqr}{\mbox{${d}_{\mathrm{TV}}^2$}}
\newcommand{\dhel}{\mbox{$d_{\mathrm H}$}} 
\newcommand{\dhelsqr}{\mbox{$d^2_{\mathrm H}$}} 
\newcommand{\dkl}{\mbox{$d_{\mathrm {KL}}$}} 
\newcommand{\dchisqr}{\mbox{$d_{\mathrm \chi^2}$}} 
\newcommand{\e}{\mathrm{E}}
\newcommand{\var}{\mathrm{Var}} 
\newcommand{\cP}{f(P)}
\newcommand{\cQ}{f(Q)}
\declaretheorem[name=Theorem, within=section]{theoremRe}
\declaretheorem[name=Corollary, sibling=theoremRe]{corollaryRe}
\declaretheorem[name=Lemma, sibling=theoremRe]{lemmaRe}
\declaretheorem[name=Claim, sibling=theoremRe]{claimRe}
\declaretheorem[name=Definition, sibling=theoremRe]{definitionRe}
\declaretheorem[name=Fact, sibling=theoremRe]{factRe}
\newcommand{\eps}{\epsilon}
\newcommand{\ignore}[1]{}
\newcommand{\comment}[1]{}
\let\ab\allowbreak
\title{Testing Product Distributions: A Closer Look}
\author{%
 Arnab Bhattacharyya\thanks{Supported in part by the MOE2019-T2-1-152 grant.}\\\texttt{arnabb@nus.edu.sg}\\National University of Singapore
 \and
 Sutanu Gayen\thanks{Supported in part by NRF-AI Fellowship R-252-100-B13-281.}\\\texttt{sutanugayen@gmail.com}\\National University of Singapore
 \and
 Saravanan Kandasamy\\\texttt{sk3277@cornell.edu}\\Cornell University
 \and
 N. V. Vinodchandran\thanks{Supported in part by the US National Science
Foundation under grants NSF CCF-184908 and NSF HDR:TRIPODS-1934884.}\\\texttt{vinod@cse.unl.edu}\\University of Nebraska-Lincoln
}
\begin{document}

\maketitle
\begin{abstract}
We study the problems of {\em identity} and {\em closeness testing} of $n$-dimensional product distributions. Prior works by Canonne, Diakonikolas, Kane and Stewart (COLT `17) and Daskalakis and Pan (COLT `17) have established tight sample complexity bounds for {\em non-tolerant testing over a binary alphabet}: given two product distributions $P$ and $Q$ over a binary alphabet, distinguish between the cases $P=Q$ and $\dtv(P,Q)>\eps$. We build on this prior work to give a more comprehensive map of the complexity of testing of product distributions by investigating {\em tolerant testing  with respect to several natural distance measures and over an arbitrary alphabet}. \ignore{In particular,  we design efficient algorithms for tolerant identity and closeness testing. We also show nearly matching lower bounds on the sample complexity of many of these problems with respect to $n$.} Our study gives a fine-grained understanding of how the sample complexity of tolerant testing varies with the distance measures  for product distributions. In addition, we also extend one of our upper bounds on product distributions to bounded-degree Bayes nets.
\end{abstract}


\newpage

\section{Introduction}
The main goal of this work is to give a comprehensive investigation to the sample complexity of several distribution testing problems over {\em high-dimensional product distributions}. 
Testing properties of distributions from samples has been actively investigated for several decades 
from the perspectives of classical statistics and, more recently, from a property testing viewpoint in theoretical computer science. Hypothesis testing is a classical problem investigated in statistics with significant practical applications.  From the property testing viewpoint, the two most well studied distribution testing problems are {\em identity testing} and {\em closeness testing.}\footnote{Identity testing is also known as {\em goodness-of-fit testing} or {\em one-sample testing} in the literature. Similarly, closeness testing is also known as {\em two-sample testing.}}

In  the {\em identity testing problem},  we are given a known reference distribution $Q$ and sample access to an unknown distribution $P$ over the same sample space as that of $Q$, and the goal is to distinguish between the cases $P = Q$ or $P$ is 
$\epsilon$-far  from $Q$ with respect to a given distance measure. It is known that $\Theta(\sqrt{m}/\epsilon^2)$ samples are necessary and sufficient to solve the identity testing problem with respect to the total variation distance, where $m$ is the size of the sample space  \citep*{Valiant:2014:AIP:2706700.2707449,Pan08}.  In the {\em closeness testing problem}, we have sample access to a pair of unknown distributions $P$ and $Q$ on a common sample space, and the goal is to distinguish between the cases $P=Q$ or  $P$ is $\eps$-far from $Q$ with respect to a certain distance measure.
It is known that $\Theta(\max(m^{2/3}\eps^{-4/3}, \sqrt{m}\eps^{-2}))$ samples are necessary and sufficient to solve the closeness testing problem with respect to the total variation distance, where $m$ is the size of the sample space \citep*{ChanDVV14}. 
See the surveys \citet*{DBLP:journals/crossroads/Rubinfeld12, DBLP:journals/eccc/Canonne15} and the references therein for pointers to the extensive research on identity and closeness testing as well as related problems. 

\comment{ Vinod's Text
In the {\em tolerant} version of the above problems, which is arguably more practical, instead of testing for $P=Q$ vs $P$ and $Q$ are  $\epsilon$-far, the goal is to distinguish between the cases $P$ and $Q$ are $\epsilon$-close versus $P$ and $Q$ are $3\epsilon$ far. 
}

One of the main bottlenecks resulting from the above-mentioned  complexity bounds is that these testing problems are (provably) hard for arbitrary distributions over large sample spaces.  
For example, for distributions over an $n$-dimensional Boolean hypercube $m=2^n$ and hence $\Theta(2^{n\over 2})$ samples are necessary and sufficient for identity testing (for a constant $\epsilon$). 
To overcome this bottleneck, very recently researchers have started investigating testing problems over high-dimensional sample spaces by imposing natural structural assumptions over distributions. Such assumptions restrict the class of distributions and open up the possibility of designing testers with substantially smaller sample complexity than required for the the general case. Among them {\em product distributions}
over a finite alphabet are a natural class that is both practically relevant and simple enough to serve as a test ground for algorithm design. Indeed, prior works of \citet*{DBLP:conf/colt/CanonneDKS17} and ~\citet*{DBLP:conf/colt/DaskalakisP17} have established tight sample complexity bounds for identity and closeness testing of product distributions over a binary alphabet. 

A drawback of the testing problems  as stated is their one-sided or {\em non-tolerant} aspect: on the one side of the decision, we only need to distinguish from the case where two distributions are {\em exactly equal}.  This is a significant restriction specially  for high-dimensional distributions which require a large number of parameters to be specified. For example, in the case of identity testing, it is unlikely that we can ever hypothesize a reference distribution $Q$ such that it exactly equals the data distribution $P$. Similarly, for closeness testing, two data distributions $P$ and $Q$ are most likely not exactly equal. The {\em tolerant} version of testing problems 
addresses this issue  as it seeks to design  testers for identity and closeness that {\em tolerate} errors on both decision cases. That is, in the tolerant version we would like to distinguish between the cases $\dtv(P,Q)\leq \eps_1$ and $\dtv(P,Q)>\eps_2$, where $\eps_1 < \eps_2$ are user-supplied error parameters. The tolerance requirement makes the testing problems more expensive. For arbitrary distributions on a set of size $m$ it is known \citep*{ValiantV10} that  tolerant identity and closeness testing of arbitrary distributions supported on a set of size $m$ require $\Omega(m/\log m)$ samples for constants $\eps_1< \eps_2$, if the distance measure used is the total variation distance. 

The main focus of this paper is to take a closer look at the complexity of testing of product distributions by investigating {\em tolerant testing  with respect to several natural distance measures and over an arbitrary alphabet}. Such an investigation is important because a complete picture on the complexity of testing product distribution will shed light on possibilities and challenges in algorithm design for testing high dimensional structured distributions.

\ignore{In this paper, we study {\em tolerant testing} over a {\em high-dimensional} domain. In tolerant testing, the goal is to distinguish pairs of distributions which are $\eps_1$-close to each other from pairs which are $\eps_2$-far from each other (with respect to some distance), for given $0 < \eps_1 < \eps_2 < 1$. Thus, one can define the tolerant identity testing and the tolerant closeness testing problems. It is clear that in most applications, one would like identity and closeness tests to be tolerant because usually, the two input distributions $P$ and $Q$ come from different sources and are unlikely to be exactly equal in the case when they are not far from each other. We also focus on the setting where the domain consists of elements in $\Sigma^n$, i.e., strings of length $n$ over an alphabet $\Sigma$. This setting is particularly prevalent in modern machine learning and statistical applications in which each sample corresponds to a high-dimensional feature vector.

  Additionally, in our setting, the domain size $m$ equals $|\Sigma|^n$, which makes the sample complexity $\Omega(|\Sigma|^n/n\log{|\Sigma|})$. }

\subsection{Our Contributions}

We investigate tolerant testing of product distributions with respect to the  following distance measures:  total variation distance ($\dtv$), Hellinger distance ($\dhel$), Kullback-Leibler divergence ($\dkl$), and Chi-squared distance ($\dchisqr$)\footnote{Refer to \cref{prelims} for the notations and definitions.}. The following relationship is well-known among them: 
\begin{align}\label{eqn:Reldistances}
\dhelsqr(P,Q) \leq \dtv(P,Q) \leq \sqrt{2} \dhel(P,Q) \leq \sqrt{\dkl(P,Q)} \leq \sqrt{\dchisqr(P,Q)}
\end{align}
We fix a pair of distance functions $d_1 \le d_2$ from the above equation and investigate the problem of deciding $d_1(P,Q)\le \epsilon/3$ versus $d_2(P,Q) >\epsilon$ with 2/3 probability, which we call $d_1$-versus-$d_2$ testing. When both $P$ and $Q$ are only accessed by samples, this problem is called $d_1$-versus-$d_2$ closeness testing. When $Q$ is a reference distribution given to us and $P$ is accessed by samples, the problem is called $d_1$-versus-$d_2$ identity testing. The problem of distinguishing $P=Q$ versus $d_2(P,Q) > \epsilon$ is called non-tolerant testing w.r.t. $d_2$. Clearly, tolerant testing is at least as hard as non-tolerant testing.

Our contributions regarding $d_1$-versus-$d_2$ identity and closing testing problems over product distributions are summarized in \cref{table:identity} and \cref{table:closeness}. Each cell of the tables represents the sample complexity of testing whether the two product distributions are close or far in terms of the distance corresponding to that row and column respectively. The problems become harder as we traverse the table down or to the right due to \cref{eqn:Reldistances}. \citet*{Daskalakis:2018:DDS:3174304.3175479} have shown that non-tolerant testing w.r.t. $\dkl$ is not testable in a finite set of samples. Hence, only $\dtv$ and $\dhel$ are meaningful for $d_2$. 

\begin{table}[h]
\centering
\caption{{\footnotesize Sample complexity upper and lower bounds for $d_1$-{\em vs}-$d_2$ {\em identity testing} of product distributions for various distance measures. First column (row) lists $d_1(P,Q) \leq \epsilon_1$ (respectively, $d_2(P,Q) > \epsilon_2$). The problem becomes computationally more difficult, and hence the sample complexity is non-decreasing, as we traverse the table down or to the right.  \ignore{For cells shaded green, the sample complexity is at most $O(|\Sigma|^{3/2}\sqrt{n}/\eps^2)$. For cells shaded yellow, the sample complexity is at most $O(|\Sigma|n\log n/\eps^2)$. For cells shaded yellow or red, the sample complexity is at least $\Omega(n/\log n)$, even for $|\Sigma|=2$. For cells shaded gray, the problem is untestable.} [\textdagger], [\textasteriskcentered] and [\ddag] are from \citet*{DBLP:conf/colt/DaskalakisP17}, \citet*{DBLP:conf/colt/CanonneDKS17} and \citet*{bhattacharyya2020efficient}, respectively. Note that for some of the cells, to get to the bound we need to follow a chain of directions.} }
\label{table:identity}
\resizebox{.9\textwidth}{!}{
\begin{tabular}{lll}
\toprule
  & $\dtv(P,Q) > \epsilon$ & $\sqrt{2}\dhel(P,Q) > \epsilon$ \\ \midrule
$P=Q$ &  {\em UB}~:~$O(\sqrt{n}/\epsilon^2)$ (for $|\Sigma|=2$) [\textdagger,\textasteriskcentered]  & {\em UB}~:~\mbox{\em Below}\\

& {\em LB}~:~$\Omega(\sqrt{n}/\epsilon^2)$  (for $|\Sigma|=2$) [\textdagger,\textasteriskcentered] & {\em LB}~:~{\em Left}\\ 

& {\em LB}~:~$\Omega( \sqrt{n\lvert \Sigma \rvert}/\epsilon^2)$ (for $|\Sigma|>2$) \cref{thm:main}&\\
\addlinespace
$\dchisqr(P,Q) \leq \epsilon^2/9$ & {\em UB}~:~{\em Right} & {\em UB}~:~$O( \sqrt{n\lvert \Sigma \rvert}/\epsilon^2)$ \cref{1S-UB} \\ 
& {\em LB}~:~{\em Above} & {\em LB}~:~{\em Left}\\
\addlinespace
\midrule[.1pt]
$\dkl(P,Q) \leq \epsilon^2/9$ & {\em UB}~:~{\em Below} & {\em UB}~:~{\em Below}\\
& {\em LB}~:~$\Omega(n / \log n)$ \cref{1S-LB} & {\em LB}~:~{\em Left} \\ 
\addlinespace
$\sqrt{2}\dhel(P,Q) \leq \epsilon/3$ & {\em UB}~:~{\em Right} &  {\em UB}~:~$O(n|\Sigma|/\epsilon^2)$ \cref{2S-UB}  \\
 & {\em LB}~:~{\em Above} & {\em LB}~:~{\em Left} \\
\addlinespace
$\dtv(P,Q) \leq \epsilon/3$ & {\em UB}~:~$O(n|\Sigma|/\epsilon^2)$ [\ddag] & {\em Not Well Defined} \\
 & {\em LB}~:~$\Omega(n/\log n)$[*] &  
\\ \bottomrule
\end{tabular}}
\end{table}
\begin{table}[h] 
\centering
\caption{{\footnotesize Sample complexity bounds for of $d_1$-{\em vs}-$d_2$ {\em closeness testing} of product distributions. As in the case of identity testing, sample complexity is non-decreasing as we traverse the table down or to the right. [\textasteriskcentered] and [\ddag] are from \citet*{DBLP:conf/colt/CanonneDKS17} and \citet*{bhattacharyya2020efficient}, respectively.} }
\label{table:closeness}
\resizebox{\textwidth}{!}
{
\begin{tabular}{lll}
\toprule
  & $\dtv(P,Q) > \epsilon$ & $\sqrt{2}\dhel(P,Q) > \epsilon$ \\ \midrule 
$P=Q~(\Sigma=2)$ & {\em UB}~:~$O(\max(\sqrt{n}/\epsilon^2,n^{3/4}/\epsilon))$\,[*] & {\em UB}~:~ $O(n^{3/4}/\eps^2)$, \cref{thm:clexhel}\\ 
& {\em LB}~:~$\Omega(\max(\sqrt{n}/\epsilon^2,n^{3/4}/\epsilon))$ [*] & {\em LB}~:~{\em Left}\\ 
\addlinespace
{\em (Any} $\Sigma)$ & {\em UB}~:~$O\!\left(\max \left\{ \sqrt{n \lvert \Sigma \rvert}/\epsilon^2, (n \lvert \Sigma \rvert)^{3/4}/\epsilon \right\}\right)$ \cref{thm:nontolDTV}& {\em UB}~:~$O((n|\Sigma|)^{3/4}/\eps^2)$, \cref{thm:clexhel} \\ 
& {\em LB}~:~{\em Above} & {\em LB}~:~{\em Above} \\
\addlinespace
\midrule[.1pt]
$\dchisqr(P,Q) \leq \epsilon^2/9$ &  {\em UB}~:~{\em Below} & {\em UB}~:~{\em Below} \\
& {\em LB}~:~ $\Omega(n/\log n)$  \cref{2S-LB} & {\em LB}~:~{\em Left} \\ 
\addlinespace

$\dkl(P,Q) \leq \epsilon^2/9$ & {\em UB}~:~{\em Below}& {\em UB}~:~{\em Below} \\ 
& {\em LB}~:~{\em Above}& {\em LB}~:~{\em Left}\\
\addlinespace
$\sqrt{2}\dhel(P,Q) \leq \epsilon/3$ & {\em UB}~:~{Right}  & {\em UB}~:~$O(n|\Sigma|/\epsilon^2)$ \cref{2S-UB}  \\ 
& {\em LB}~:~{\em Above} & {\em LB}~:~{\em Left} \\
\addlinespace
$\dtv(P,Q) \leq \epsilon/3$ & {\em UB}~:~$O(n|\Sigma|/\epsilon^2)$ [\ddag]
& {\em Not Well-defined} \\ 
& {\em LB}~:~$\Omega(n/\log n)$[*] & \\ 
\bottomrule
\end{tabular}
}
\end{table}

We informally present our main results below.  We would like to note that the only algorithmic results known regarding the complexity of testing product distributions prior to our work are: 
\begin{enumerate}[itemsep=0pt]
\item[(1)] $\Theta(\sqrt{n}/\epsilon^2)$ sample complexity bound  for the non-tolerant identity testing over the binary alphabet \citep*{DBLP:conf/colt/DaskalakisP17, DBLP:conf/colt/CanonneDKS17}, 
\item[(2)] $\Theta(\max(\sqrt{n}/\epsilon^2,n^{3/4}/\epsilon))$ sample complexity bound for non-tolerant closeness testing problem over the binary alphabet \citep*{DBLP:conf/colt/CanonneDKS17}, and
\item[(3)] $O(n|\Sigma|/\epsilon^2)$ upper bound for  $\dtv$-vs-$\dtv$ tolerant identity and closeness testing \citep*{bhattacharyya2020efficient}. 
\end{enumerate}

\subsubsection*{Identity testing: $P$ unknown and $Q$ given}
\begin{itemize}[itemsep=0pt]

\item[-]
We present a tolerant identity testing algorithm that distinguishes $\dchisqr(P,Q)\le\epsilon^2/9$ versus $\dhel(P,Q) > \epsilon$ with  $O( \sqrt{n\lvert \Sigma \rvert}/\epsilon^2)$ sample complexity. 
Since the condition for the tester rejecting $\dhel(P,Q)>\epsilon$ is stronger than $\dtv(P,Q)>\epsilon$ due to \cref{eqn:Reldistances}, we get the same bound when the second distance is $\dtv$. 
Our algorithm applies for an arbitrary $\Sigma$ and has optimal dependence on $n,\lvert \Sigma \rvert$ and $\epsilon$. 
\item[-] We present an algorithm for $\dhel$-vs-$\dhel$ identity testing with sample complexity $O(n|\Sigma|/\epsilon^2)$.  
\item[-] Our third result is a lower bound: we establish the optimality of the sample complexity of our non-tolerant identity tester w.r.t. $\dtv$in terms of $n,\lvert \Sigma \rvert$ and $\epsilon$. Such a lower bound was previously known only for $\lvert\Sigma\rvert=2$.
\item[-] Our next result is another lower bound: we show that the identity testing problem of distinguishing $\dkl(P,Q) \leq \epsilon^2/9$ versus $\dtv(P,Q) > \epsilon$ requires at least $\Omega(n/\log n)$ samples.  This shows a jump in sample complexity when we move to $\dkl$ from  $\chi^2$ distance.  Previously, \citet*{DBLP:conf/colt/CanonneDKS17} had shown the $\Omega(n/\log n)$ lower bound for $\dtv$-vs-$\dtv$ testing; we strengthen it to $\dkl$-vs-$\dtv$ testing.
\end{itemize}

\subsubsection*{Closeness testing: $P$ and $Q$ unknown}
\ignore{Next, we consider the tolerant closeness testing problem of product distributions.  For the non-tolerant case~\citet*{DBLP:conf/colt/CanonneDKS17}  showed an algorithm for closeness testing with sample complexity $O(\max(\sqrt{n}/\epsilon^2, n^{3/4}/\epsilon))$ over $\{0,1\}^n$. Our results are summarized in  below.}

\begin{itemize}[itemsep=0pt]
\item[-]   
We design an efficient algorithm that distinguishes $\sqrt{2}\dhel(P,Q)\le\epsilon/3$ versus $\sqrt{2}\dhel(P,Q)>\epsilon$ with $O(n\lvert \Sigma \rvert / \epsilon^2)$ sample complexity.  (Note that this result appears in both \cref{table:identity} and \cref{table:closeness}). Our upper bound works for distributions over arbitrary alphabet. 
\ignore{
Along the way to proving this result, we obtain a learning algorithm that learns unstructured distributions with respect to the Hellinger distance:  Given sample access to an unknown distribution $P$ over a domain of size $N$, we show a learning algorithm that takes $O(N \log (1/\delta) / \epsilon^2)$ samples and outputs a distribution $R$ which satisfy $\dhel(P,R) < \epsilon$ with probability at least $1 - \delta$.

This learning algorithm is of independent interest.  A straightforward application is to the problem of learning causal models considered in~\citet*{AcharyaBDK18}, where they showed that when the in-degree and the c-component size of a causal graph are bounded, then it is possible to learn  up to accuracy $\epsilon$ a causal model defined over the graph using $\tilde{O}(n^2/\epsilon^4)$ samples.  Their sample complexity essentially boils down to the problem of learning an unstructured distribution of constant support size, up to accuracy $\epsilon$, with error at most $1/cn$, where $c$ is some constant.  Applying our learning algorithm to this subroutine results in an algorithm that takes $\tilde{O}(n/\epsilon^2)$ samples for learning causal models.  
}
\item[-]
We complement the above upper bound with a new lower bound. We show that given sample access to two unknown distributions $P$ and $Q$, distinguishing $\dchisqr(P,Q)\leq \eps^2/9$ from $\dtv(P,Q)>\eps$ requires $\Omega(n/\log n)$ samples, even for $|\Sigma|=2$ and constant $\epsilon$. Note that this is in contrast to identity testing, where \cref{table:identity} shows that the same problem can be solved using $O( \sqrt{n\lvert \Sigma \rvert}/\eps^2)$ samples.
This also strengthens the $\dtv$-vs-$\dtv$ lower bound of \citet*{DBLP:conf/colt/CanonneDKS17}.
\end{itemize}

We also establish new upper bounds for non-tolerant closeness testing over arbitrary alphabet. Prior work considered only the binary alphabet and the extension to arbitrary alphabet is not completely straightforward. 

\ignore{
We would like to note that while many of upper and lower bounds are close to optimal, there are several gaps to be filled for completing the sample complexity map of testing product distributions which remains the main open problem that emerge from our work. 
}

\subsubsection*{Tolerant Testing for Bayes Nets}

A more general class of probability distributions, containing product distributions as a special case, is bounded-degree {\em Bayesian networks} (or Bayes nets in short). Formally, a probability distribution $P$ over $n$ variables $X_1, \dots, X_n \in \Sigma$ is said to be a {\em Bayesian network on a directed acyclic graph $G$} with $n$ nodes if\footnote{We use the notation $X_S$ to denote $\{X_i : i \in S\}$ for a set $S \subseteq [n]$.} for every $i \in [n]$, $X_i$ is conditionally independent of $X_{\text{non-descendants}(i)}$ given $X_{\text{parents}(i)}$. Equivalently, $P$ admits the factorization:
\begin{equation}\label{eqn:bnfactor}
\Pr_{X \sim P}[X=x]= \prod_{i=1}^n \Pr_{X\sim P}[X_i = x_i \mid \forall j \in {\rm parents}(i), X_j = x_j] \qquad \text{for all } x \in \Sigma^n
\end{equation}
For example, product distributions are Bayes nets on the empty graph. A {\em degree-$d$ Bayes net} is a Bayes net on a graph with in-degree bounded by $d$. 

We consider tolerant closeness testing of degree-$d$ Bayes nets on known directed acyclic graphs. 
\citet*{bhattacharyya2020efficient} designed an algorithm for tolerant $\dtv$-vs-$\dtv$ closeness testing with  $\tilde{O}(|\Sigma|^{d+1}n \eps^{-2})$ sample complexity. Our main result for Bayes nets extends this same bound to $\dhel$-vs-$\dhel$ testing, which is the hardest variant of the tolerant testing problems considered above. Moreover, our test is computationally efficient (in terms of time complexity).
Note that a computationally inefficient test readily follows from available {\em learning} algorithms for fixed-structure Bayes nets with respect to KL divergence \citep*{dasgupta1997sample, bhattacharyya2020efficient}. Indeed, the main technical component in our result is a novel efficient estimator for Hellinger distance between two distributions when given access to samples generated from them as well as their probability mass functions. This estimator may be of independent interest.

\subsection{Related Work}
The history of identity tests goes back to Pearson's chi-squared test in 1900. The traditional spirit of analyzing such tests is to consider a fixed distribution $P$  and to let the number of samples go to infinity. Work on understanding the performance of hypothesis tests with a finite number of samples mostly started only quite recently. \citet*{GoldreichR11} studied the problem of distinguishing whether an input distribution $P$ is uniform over its support or $\eps$-far from uniform in total variation distance ({in fact, they showed a {\em tolerant} tester with respect to the $\ell_2$-norm}). Paninski showed that $\Theta(\sqrt{m}/\eps^2)$ samples are necessary for uniformity testing, and gave an optimal tester when $\epsilon>m^{-1/4}$ (where $m$ is the size of the support). For the more general problem of testing identity to an arbitrary given distribution,~\citet*{BatuFRSW13} showed an upper bound of $\tilde{O}(\sqrt{m}/\eps^6)$. This was then refined by \citet*{Valiant:2014:AIP:2706700.2707449} to the tight bound of $\Theta(\sqrt{m}/\eps^2)$. \citet*{BatuFRSW13} also studied the problem of testing closeness between two input distributions and showed an upper bound of $\tilde{O}(m^{2/3} \textrm{poly}(1/\eps))$ on the sample complexity. The tight bound of $\Theta(\max(m^{2/3}\eps^{-4/3}, \sqrt{m}\eps^{-2}))$ was achieved by~\citet*{ChanDVV14}. Tolerant versions of uniformity, identity, and closeness testing with respect to the total variation distance require $\Omega(m/\log m)$ samples \citet*{ValiantV11}, which is also tight \citet*{ValiantV10}. To circumvent this lower bound, tolerant identity testing with respect to chi-squared distance was initiated by~\citet*{AcharyaDK15} and was thoroughly studied in~\citet*{Daskalakis:2018:DDS:3174304.3175479} for a number of pairs of distances.

The study of testing distributions over high-dimensional domains was initiated recently independently and concurrently in~\citet*{DaskalakisDK18,DBLP:conf/colt/CanonneDKS17,DBLP:conf/colt/DaskalakisP17}, who recognized that since testing arbitrary distributions over $\Sigma^n$ would require an exponential number of samples, it is important to make structural assumptions on the distribution. In particular, in \citet*{DaskalakisDK18}, they make the assumption that the input distributions are drawn from an Ising model. In~\citet*{DBLP:conf/colt/CanonneDKS17} and~\citet*{DBLP:conf/colt/DaskalakisP17}, the authors considered identity testing and closeness testing for distributions given by Bayes networks of bounded in-degree. \ignore{Specifically, they showed that if $P$ and $Q$ are two distributions on the same Bayes network of in-degree $d$, then they can be non-tolerantly tested for closeness using $\tilde{O}(|\Sigma|^{3(d+1)/4} n/\eps^2)$ samples.} These works also considered the special case of product distributions (equivalently, distributions over a Bayes network consisting of isolated nodes). It's shown that $\Theta(\sqrt{n}/\eps^2)$ and $\Theta(\max(\sqrt{n}/\eps^2, n^{3/4}/\eps))$ samples are necessary and sufficient for identity testing and closeness testing respectively of pairs of product distributions when $|\Sigma|=2$. The identity tester of~\citet*{DBLP:conf/colt/CanonneDKS17} 
is claimed to have certain weaker ($O(\epsilon^2)$ in $\dtv$, see Remark 8) tolerance. A reduction from testing problems for product distributions over alphabet $\Sigma$, to that for the Bayes nets of degree $\lfloor\log_2 |\Sigma|\rfloor-1$, was given in~\citet*{DBLP:conf/colt/CanonneDKS17} (Remark 55 of their paper). \citet*{DBLP:conf/colt/CanonneDKS17} also show that for product distributions, $\Omega(n/\log n)$ samples are necessary for tolerant identity and closeness testing with respect to the total variation distance. Very recently, \citet*{bhattacharyya2020efficient} designed tolerant testers for certain classes of high-dimensional distributions (including product distributions) with respect to $\dtv$.

\ignore{
\subsection{Our techniques}
We consider product distributions over an arbitrary alphabet $\Sigma$. Let $\ell=|\Sigma|$. There are $n\ell$ parameters of the product distributions $P=\prodp$: $p_{ij}$ is the probability of seeing the symbol $j\in 
\Sigma$ in the $i$-th coordinate of the sample, for every $i\in [n]$. Similarly for $Q=\prodq$ and $Q_{ij}$s. 

Our $\dchisqr$-tolerant identity test statistic uses a function of $W_{ij}$ and $q_{ij}$ values (we have complete knowledge of $Q$), where $W_{ij}\sim \poi(p_{ij})$ are sampled using Poisson sampling. Under this sampling, the property of our test statistic is reminiscent of the identity tester of~\citet*{AcharyaDK15} for unstructured distributions over $n\ell$ items, and its behaviour is well understood. We show a separation of this statistic for the `yes' and `no' cases, using a sub-additivity result for $\dhelsqr$ from~\citet*{DBLP:conf/colt/DaskalakisP17}, and using a super-additivity result for $\dchisqr$ which we derive. Our  non-tolerant closeness testers over arbitrary alphabet are reminiscent of the closeness testers of~\citet*{ChanDVV14,Daskalakis:2018:DDS:3174304.3175479} for unstructured distributions over $n\ell$ items.

Our  Hellinger tolerant (Hellinger-versus-Hellinger) closeness tester uses the testing-by-learning approach. For distributions supported over $\Sigma^n$, getting such an approach to work efficiently is not as obvious as it may seem. For such an approach to work, we need the following components: 1) the distance satisfies triangle inequality, 2) the distance has a `localization equality' (Lemma~\ref{lem-helloc-equality}) and `localization subadditivity' (Lemma~\ref{hel-subadditive}), and 3) it has an efficient learning algorithm for small support size distributions, which we discuss in this paper. Fortunately, all of these properties hold for the Hellinger distance. We note that for other distances such as $\dtv$, $\dkl$, $\dchisqr$; one of the above conditions does not hold or is not known to hold. 

As previously mentioned, for our Hellinger tolerant tester, we design a novel efficient and high probability learning algorithm. Such an algorithm was previously known for total variation distance. We are not aware of such an algorithm for learning efficiently in Hellinger distance. Our algorithm works as follows. We first obtain an empirical distribution by sampling from the unknown distribution with enough number of samples to make the expected $\dhelsqr$ distance at most $\epsilon^2$. Then from Markov's inequality with at least 2/3 probability the distance is at most $O(\epsilon^2)$. In order to amplify the success probability, we use a `clustering trick', reminiscent of the `median trick'. The `median trick' cannot be used in the context of learning an unknown distribution.

For our lower bounds, we reduce from the testing problems on product distributions over $\{0,1\}^n$ to the testing problems on unstructured distributions over $n$ items, known from the work of~\citet*{DBLP:conf/colt/CanonneDKS17}. Our lower bounds for tolerant testing problems depend on some new upper bounds, which we derive in this paper.

\input{future.tex}
%


}
\section{Preliminaries and Formal Statements of Results}\label{prelims}

We use $\bern(\delta)$ to denote  the Bernoulli distribution with $\mathrm{Pr}[1]=\delta$.
We define various  distance measures between distributions that we use in this paper.

\begin{definitionRe}\label{def:dists} Let $P=(p_1,p_2,\ldots,p_m)$ and $Q = (q_1,q_2,\ldots, q_m)$ be two distributions over sample space $[m]$. Then the distance measures, total variational distance, chi-squared distance, Hellinger distance, and KL distance,  respectively are defined as follows.  
$$\dtv(P,Q)=\frac12\sum_i |p_i-q_i|; \qquad
\dchisqr(P,Q)=\sum_i (p_i-q_i)^2/q_i=\sum_i p_i^2/q_i -1;$$
$$\dhelsqr(P,Q)={1\over 2}\sum_i(\sqrt{p_i}-\sqrt{q_i})^2=1-\sum_i \sqrt{p_iq_i};\qquad
\dkl(P,Q)=\sum_i p_i \ln {p_i\over q_i}$$
\end{definitionRe}

\begin{lemmaRe}\label{lem-distcompare} (folklore, see \citet*{Daskalakis:2018:DDS:3174304.3175479} for a proof) For two distributions $P$ and $Q$, the following relation holds.
\[\dhelsqr(P,Q) \leq \dtv(P,Q) \leq \sqrt{2} \dhel(P,Q) \leq \sqrt{\dkl(P,Q)} \leq \sqrt{\dchisqr(P,Q)} \]
\end{lemmaRe}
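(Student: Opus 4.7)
The plan is to prove the chain of four inequalities separately; each reduces to a one-line elementary inequality applied pointwise and summed, so the arguments are short.

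For the first inequality $\dhelsqr(P,Q)\leq \dtv(P,Q)$, I would factor $|p_i-q_i|=|\sqrt{p_i}-\sqrt{q_i}|\cdot(\sqrt{p_i}+\sqrt{q_i})$ and observe that $|\sqrt{p_i}-\sqrt{q_i}|\leq \sqrt{p_i}+\sqrt{q_i}$, so $(\sqrt{p_i}-\sqrt{q_i})^2\leq |p_i-q_i|$. Summing over $i$ and dividing by two yields the claim.

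For the second inequality $\dtv(P,Q)\leq \sqrt{2}\,\dhel(P,Q)$, I would apply Cauchy–Schwarz to the same factorization:
\[
2\,\dtv(P,Q) = \sum_i |\sqrt{p_i}-\sqrt{q_i}|\cdot(\sqrt{p_i}+\sqrt{q_i}) \leq \sqrt{\sum_i (\sqrt{p_i}-\sqrt{q_i})^2}\cdot\sqrt{\sum_i (\sqrt{p_i}+\sqrt{q_i})^2}.
\]
The first factor equals $\sqrt{2}\,\dhel(P,Q)$. The second factor equals $\sqrt{2+2\sum_i\sqrt{p_iq_i}}=\sqrt{2(2-2\,\dhelsqr(P,Q))}\leq 2$. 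Multiplying gives $2\,\dtv\leq 2\sqrt{2}\,\dhel$, i.e., the claim.

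For the third inequality $2\,\dhelsqr(P,Q)\leq \dkl(P,Q)$, I would use $-\ln y\geq 1-y$ with $y=\sqrt{q_i/p_i}$, giving
\[
\dkl(P,Q) = -2\sum_i p_i\ln\sqrt{q_i/p_i} \geq 2\sum_i p_i\bigl(1-\sqrt{q_i/p_i}\bigr) = 2\Bigl(1-\sum_i\sqrt{p_iq_i}\Bigr) = 2\,\dhelsqr(P,Q).
\]

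For the fourth inequality $\dkl(P,Q)\leq \dchisqr(P,Q)$, I would apply $\ln(1+x)\leq x$ with $x=(p_i-q_i)/q_i$ to each term of the KL sum, obtaining $\dkl(P,Q)\leq \sum_i p_i(p_i-q_i)/q_i = \sum_i p_i^2/q_i - 1 = \dchisqr(P,Q)$. Taking square roots of the last two inequalities (both sides are nonnegative) completes the chain. No step here is a genuine obstacle; the main thing to be careful about is the direction of the logarithmic inequalities in steps three and four, and the degenerate cases where some $p_i$ or $q_i$ vanish, which are handled by the standard conventions $0\ln 0=0$ and $p_i\ln(p_i/0)=+\infty$ when $p_i>0$ (under which both $\dkl$ and $\dchisqr$ blow up together, preserving the inequality).
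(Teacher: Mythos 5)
The paper does not give its own proof of this lemma; it cites it as folklore with a pointer to Daskalakis--Kamath--Wright. Your four arguments are the standard ones (factor $|p_i-q_i|=|\sqrt{p_i}-\sqrt{q_i}|(\sqrt{p_i}+\sqrt{q_i})$; Cauchy--Schwarz; $-\ln y\ge 1-y$; $\ln(1+x)\le x$) and all four are correct. One small arithmetic slip in step two: $\sum_i(\sqrt{p_i}+\sqrt{q_i})^2 = 2 + 2\sum_i\sqrt{p_iq_i} = 4 - 2\,\dhelsqr(P,Q)$, not $2\bigl(2-2\,\dhelsqr(P,Q)\bigr)=4-4\,\dhelsqr(P,Q)$; since both expressions are at most $4$, your final bound $\le 2$ and the inequality $\dtv\le\sqrt{2}\,\dhel$ still go through unchanged, so the slip is cosmetic.
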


\ignore{
\subsection{Definitions of Testing Problems}

{\em $d_1$-versus-$d_2$ identity (1-sample) testing}: Given an unknown distribution $P$, which we have sample access to, and a known distribution $Q$, an error parameter $0<\epsilon<1$, and a constant gap parameter $0\le \alpha<1$, a {\em $d_1$-versus-$d_2$ identity (1-sample) tester} is an algorithm with the following behavior: (1) If $d_1(P,Q)\le \alpha\epsilon$ it outputs `yes' with probability at least 2/3, (2) If $d_2(P,Q)> \epsilon$ it outputs `no' with probability at least 2/3, (3) If neither of the above two cases hold, it outputs `yes' or `no' arbitrarily. 

{\em $d_1$-versus-$d_2$ closeness (2-sample) testing}: Given two unknown distributions $P$ and $Q$ which we have sample access to, an error parameter $0<\epsilon<1$, and a constant gap parameter $0\le \alpha<1$, a {\em a $d_1$-versus-$d_2$ closeness (2-sample) tester} is an algorithm with the following behavior: If $d_1(P,Q)\le \alpha\epsilon$ it outputs `yes' with probability at least 2/3, (2)  If $d_2(P,Q)> \epsilon$ it outputs `no' with probability at least 2/3, (3) If neither of the above two cases hold, it outputs `yes' or `no' arbitrarily. 

The special case $\alpha=0$ is the non-tolerant $d_2$ tester in both the 1-sample and 2-sample cases. The sample complexity of a tester is the number of samples it takes in the worst case. The sample complexity of {\em $d_1$-versus-$d_2$ identity (closeness) testing problem} is the minimum sample complexity over all {\em $d_1$-versus-$d_2$ identity (closeness) testers}.

The following facts are easy to observe. A $d_1$-versus-$d_2$ tester implies an non-tolerant $d_2$ tester, with the same sample complexity. A $d_1$ vs $d_2$ tester implies a $d'_1$-versus-$d'_2$ tester, with a non-larger sample complexity, when $d'_1$ is the distance right of $d_1$ and $d'_2$ is the distance left of $d_2$ in the chain from \cref{lem-distcompare}. A $d_1$-versus-$d_2$ closeness tester implies a $d_1$-versus-$d_2$ identity  tester, with the same sample complexity.

It is known from~\citet*{Daskalakis:2018:DDS:3174304.3175479} that  there is no non-tolerant $\sqrt{KL}$ identity tester with finite sample complexity. Thus, Hellinger distance is the most general among those considered, we can aim for $d_2$ (the `no' case). In this paper, we remove the constants and the square roots in the distances in \cref{lem-distcompare} when we mention a $d_1$-versus-$d_2$ tester. 

In this paper we interchangeably use identity testing (closeness testing) and 1-sample testing (respectively 2-sample testing). 
}

\subsection{Formal Statements of Main Results}
Here we list the formal statements of the main theorems we prove in the paper. First we state the two main upper bounds.

\begin{restatable}{theoremRe}{sUB}\mbox{\rm ($\dchisqr$-versus-$\dhel$ identity tester)}\label{1S-UB}
There is an algorithm with sample access to an unknown product distribution $P=\prodp$ and input a known product distribution $Q=\prodq$, both over the common sample space $\Sigma^n$, that decides between cases $\dchisqr(P,Q)\le \epsilon^2/9$ versus $\sqrt{2}\dhel(P,Q)>\epsilon$. The algorithm takes $O(\sqrt{n|\Sigma|}/\epsilon^2)$ samples from $P$ and runs  in time $O(n\ell+n^{3/2}\sqrt{\ell}/\epsilon^2)$. The algorithm has a success probability at least $2/3$. 
\end{restatable}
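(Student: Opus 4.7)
My plan is to Poissonize the sampling and apply a chi-squared-style test statistic in the spirit of Acharya, Daskalakis and Kamath, leveraging sub-additivity of squared Hellinger and super-additivity of chi-squared for product distributions. Concretely, I would draw $k \sim \poi(\lambda)$ samples from $P$ with $\lambda = \Theta(\sqrt{n|\Sigma|}/\epsilon^2)$, and for each coordinate $i \in [n]$ and symbol $j \in \Sigma$ let $W_{ij}$ count the samples whose $i$-th coordinate equals $j$. Because $P$ factorizes and $k$ is Poisson, the $n|\Sigma|$ counts are mutually independent, $W_{ij} \sim \poi(\lambda p_{ij})$, which is precisely the setting for a coordinate-wise chi-squared statistic.

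The proposed test statistic is
\[
Z \;=\; \sum_{i \in [n]}\sum_{j \in \Sigma} \frac{(W_{ij} - \lambda q_{ij})^2 - W_{ij}}{\lambda q_{ij}},
\]
with the denominator regularized against vanishing $q_{ij}$ (for example by replacing $q_{ij}$ with $\max(q_{ij},\, 1/(n|\Sigma|))$, or by discarding very small-mass $(i,j)$ pairs after checking that they contribute only $O(\epsilon^2)$ to $\sum_i \dchisqr(P_i,Q_i)$). A routine Poisson moment computation gives $\mathbb{E}[Z] = \lambda \sum_{i=1}^n \dchisqr(P_i,Q_i)$, so it suffices to separate $\sum_i \dchisqr(P_i,Q_i)$ under the two hypotheses. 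For completeness I would invoke super-additivity of chi-squared, $\dchisqr(P,Q) = \prod_i(1+\dchisqr(P_i,Q_i)) - 1 \ge \sum_i \dchisqr(P_i,Q_i)$, yielding $\mathbb{E}[Z] \le \lambda \epsilon^2/9$. For soundness I would combine sub-additivity of squared Hellinger, $\dhelsqr(P,Q) = 1 - \prod_i(1-\dhelsqr(P_i,Q_i)) \le \sum_i \dhelsqr(P_i,Q_i)$, with the pointwise inequality $\dchisqr \ge \dkl \ge 2\dhelsqr$ from \cref{lem-distcompare}, obtaining $\sum_i \dchisqr(P_i,Q_i) \ge 2\dhelsqr(P,Q) > \epsilon^2$, so $\mathbb{E}[Z] > \lambda \epsilon^2$. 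Placing the decision threshold halfway between $\lambda \epsilon^2/9$ and $\lambda \epsilon^2$ gives an expectation gap $\Delta = \Theta(\lambda \epsilon^2)$.

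The remaining task is to bound $\var[Z]$, which splits as a sum of $n|\Sigma|$ independent contributions. A direct Poisson calculation gives $\var[T_{ij}] = 2p_{ij}^2/q_{ij}^2 + 4\lambda p_{ij}(p_{ij}-q_{ij})^2/q_{ij}^2$; the first summed term contributes a factor of $n|\Sigma|$, which is exactly what forces $\lambda = \Omega(\sqrt{n|\Sigma|}/\epsilon^2)$ for Chebyshev to certify $|Z - \mathbb{E}[Z]| < \Delta/2$ with probability at least $2/3$. The main obstacle is the usual one: naively $\sum_{i,j} p_{ij}^2/q_{ij}^2$ can blow up when some $q_{ij}$ is tiny, so the regularization or truncation step above must be carried out carefully, and the second variance term must be controlled by bounds such as $p_{ij}(p_{ij}-q_{ij})^2/q_{ij}^2 \le (p_{ij}/q_{ij})\cdot(p_{ij}-q_{ij})^2/q_{ij}$ combined with $\mathbb{E}[Z]/\lambda \le \epsilon^2/9$ on the completeness side. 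Once $\var[Z] = O(\lambda^2 \epsilon^4)$ is established, Chebyshev closes the argument, and the claimed $O(n\ell + n^{3/2}\sqrt{\ell}/\epsilon^2)$ running time follows from reading $Q$ once and then processing each of the $O(\lambda)$ samples across its $n$ coordinates.
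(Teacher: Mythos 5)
Your high-level plan matches the paper's: Poissonize, run a coordinate-wise chi-squared statistic against $Q$, use super-additivity of $\dchisqr$ on the completeness side and sub-additivity of $\dhelsqr$ together with $\dchisqr \ge 2\dhelsqr$ on the soundness side, and close with Chebyshev. However, the part you treat as a routine detail is precisely the nontrivial step, and the fixes you float do not work.

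The regularization of small $q_{ij}$ is a genuine gap. Flooring the denominator, i.e.\ replacing $q_{ij}$ by $q'_{ij} = \max(q_{ij}, 1/(n|\Sigma|))$, changes the mean to $\lambda\sum_{i,j}(p_{ij}-q'_{ij})^2/q'_{ij}$, and in the completeness case this can be $\Theta(\lambda)$ rather than $\Theta(\lambda\epsilon^2)$: if $q_{ij}\to 0$ then $\dchisqr(P,Q)\le\epsilon^2/9$ only forces $p_{ij}\lesssim\sqrt{q_{ij}}\,\epsilon$, which is far below the floor $1/(n|\Sigma|)$, so each such term contributes $\approx q'_{ij}=1/(n|\Sigma|)$ and there can be up to $n|\Sigma|$ of them. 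Discarding small-$q_{ij}$ pairs fails on the soundness side because in the case $\sqrt{2}\dhel(P,Q)>\epsilon$ the entire discrepancy between $P$ and $Q$ may reside on exactly those pairs, and you cannot ``check'' their contribution since $P$ is unknown. The paper resolves this by mixing \emph{both} $P$ and $Q$ with a tiny amount of the uniform distribution (\cref{lem-modification}), yielding $R=P^\delta$, $S=Q^\delta$ with $s_{ij}\ge\delta/|\Sigma|$ for $\delta=\epsilon^2/50n$; the mixed samples from $R$ can be simulated from samples of $P$, and \cref{lem-RSfromPQ} then shows the key non-obvious fact that the two cases survive the mixing with slightly degraded constants ($\sum_{i,j}(r_{ij}-s_{ij})^2/s_{ij} < 0.12\epsilon^2$ versus $> 0.18\epsilon^2$). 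Without this two-sided reduction and the accompanying lemma, your expectation gap $\Delta$ does not exist.

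Two secondary points. First, your variance control ``combined with $\mathbb{E}[Z]/\lambda\le\epsilon^2/9$ on the completeness side'' leaves the soundness-side variance unbounded; the paper's \cref{thm-adk} establishes a uniform bound $\var[T]\le 2K+7\sqrt{K}\,\mathrm{E}[T]+4K^{1/4}\mathrm{E}[T]^{3/2}$ that is then instantiated in both cases (and note that this bound itself requires the floor $s_i\ge\epsilon^2/50K$, i.e.\ it depends on the mixing already being in place). Second, your per-term variance formula has a spurious factor of $2$ in the $p_{ij}^2/q_{ij}^2$ term, but that is cosmetic.
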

\begin{restatable}{theoremRe}{ssUB}\mbox{\rm ($\dhel$-versus-$\dhel$ closeness tester)}\label{2S-UB}
There is an algorithm with sample access to two unknown product distribution $P=\prodp$ and $Q=\prodq$, both over the common sample space $\Sigma^n$, that decides between cases $\sqrt{2}\dhel(P,Q)\le \epsilon/3$  versus $\sqrt{2}\dhel(P,Q)> \epsilon$. The algorithm takes $O(n(|\Sigma|+ \log n)/\epsilon^2)$ samples from $P$ and $Q$ and runs in time  $O(n^2(|\Sigma|+ \log n)/\epsilon^2)$. The algorithm has a success probability at least  $2/3$.
\end{restatable}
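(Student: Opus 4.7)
The plan is testing-by-learning. Draw a fresh sample set of size $m = O(n(|\Sigma|+\log n)/\eps^2)$ from each of $P$ and $Q$; since these are product distributions, each draw yields one independent observation of every marginal. Use these to learn each of the $2n$ marginals to Hellinger accuracy $\eps' = \eps/(c\sqrt{n})$ with confidence $1-1/(6n)$, producing empirical estimates $\widehat{P}_i$ and $\widehat{Q}_i$. The algorithm then computes $\dhel(\widehat{P},\widehat{Q})$ between the product distributions $\widehat{P} = \prod_i \widehat{P}_i$ and $\widehat{Q} = \prod_i \widehat{Q}_i$ using the factorization
\[
1 - \dhelsqr(\widehat{P},\widehat{Q}) \;=\; \prod_{i=1}^n\bigl(1-\dhelsqr(\widehat{P}_i,\widehat{Q}_i)\bigr),
\]
and accepts iff $\sqrt{2}\dhel(\widehat{P},\widehat{Q})$ lies below a threshold $\tau$ placed strictly between $\eps/3+\eps/5$ and $\eps-\eps/5$.

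The key subroutine is a high-probability Hellinger learner for distributions on $|\Sigma|$ elements achieving accuracy $\eps'$ with confidence $1-\delta$ from $O((|\Sigma|+\log(1/\delta))/(\eps')^2)$ samples. A single empirical estimator has expected squared-Hellinger error $O(|\Sigma|/k)$ from $k$ samples, so Markov's inequality alone gives only constant-probability success. To amplify the confidence, I would use the ``clustering trick'' flagged in the introduction: form $\Theta(\log(1/\delta))$ independent candidate estimates and output any one that lies within $O(\eps')$ Hellinger distance of a majority of the others. A Chernoff bound guarantees that with probability at least $1-\delta$ a majority of candidates are $\eps'$-close to the truth; the triangle inequality then ensures that any such ``clusterable'' candidate is itself $O(\eps')$-close. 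This plays the role the median-of-means trick cannot, since there is no natural median of distributions.

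Given this subroutine, a union bound over the $2n$ marginals ensures that with probability $\ge 2/3$ every $\widehat{P}_i$ and $\widehat{Q}_i$ is within Hellinger distance $\eps/(c\sqrt{n})$ of its target. The subadditivity of squared Hellinger on products (immediate from the factorization identity displayed above) then gives
\[
\dhelsqr(P,\widehat{P}) \;\le\; \sum_{i=1}^n \dhelsqr(P_i,\widehat{P}_i) \;\le\; n\cdot \frac{\eps^2}{c^2 n} \;=\; \frac{\eps^2}{c^2},
\]
and analogously for $Q$. Choosing $c$ large enough that $\sqrt{2}\dhel(P,\widehat{P})+\sqrt{2}\dhel(Q,\widehat{Q})\le \eps/5$, the triangle inequality for $\dhel$ sandwiches $\sqrt{2}\dhel(\widehat{P},\widehat{Q})$ within $\eps/5$ of the true $\sqrt{2}\dhel(P,Q)$, so the completeness case $\sqrt{2}\dhel(P,Q)\le \eps/3$ and soundness case $\sqrt{2}\dhel(P,Q)>\eps$ fall on opposite sides of any threshold chosen in $(\eps/3+\eps/5,\;\eps-\eps/5)$.

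The main obstacle is the high-confidence learning step: getting $\delta$ to enter only additively rather than multiplicatively on the $|\Sigma|$ term is what yields the ``$|\Sigma|+\log n$'' shape in the sample bound instead of the weaker ``$|\Sigma|\cdot\log n$'' that a black-box repeat-and-select amplification would produce; this is the content of the clustering lemma. The remaining steps are essentially book-keeping: the triangle inequality and subadditivity make the testing-by-learning reduction lossless up to constants, and the product factorization makes $\dhel(\widehat{P},\widehat{Q})$ computable in $O(n|\Sigma|)$ time, so the running time is dominated by the clustering amplification on each of the $2n$ marginals, giving $O(n^2(|\Sigma|+\log n)/\eps^2)$.
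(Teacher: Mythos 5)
Your overall plan — test-by-learning, compute the Hellinger distance between learned products exactly via the identity $1-\dhelsqr(\widehat{P},\widehat{Q})=\prod_i(1-\dhelsqr(\widehat{P}_i,\widehat{Q}_i))$, and threshold — is exactly the paper's approach. But the way you fill in the learning subroutine has a real gap, and it is worth pinpointing because the paper's final proof takes a cleaner route that sidesteps the issue entirely.

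The paper's proof is a one-liner: it invokes a known result (\cref{thm:HellLearn}, attributed to \citet*{AcharyaDK15}) that the \emph{entire product distribution}, formed from component-wise empirical marginals on $\Theta(n|\Sigma|/\epsilon^2)$ samples, is $\epsilon$-close in Hellinger to the truth with probability $9/10$. No per-marginal amplification, no union bound over $n$ coordinates. By contrast, you decompose into $2n$ independent marginal learning problems, insist on per-marginal confidence $1-1/(6n)$, and then union bound. This forces you to solve the much harder problem of \emph{high-probability} Hellinger learning of a single distribution on $|\Sigma|$ elements, and this is where the argument breaks.

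The gap is in the claimed shape of the per-marginal learning bound. You assert a learner achieving accuracy $\epsilon'$ with confidence $1-\delta$ from $O((|\Sigma|+\log(1/\delta))/(\epsilon')^2)$ samples — that is, $\log(1/\delta)$ entering \emph{additively}. But the clustering trick you describe does not give that. The clustering trick takes $\Theta(\log(1/\delta))$ independent candidate estimates, each built from a fresh batch of $O(|\Sigma|/(\epsilon')^2)$ samples (each batch only guarantees constant success probability via Markov on the expected squared Hellinger error). The total is therefore $O\!\left(|\Sigma|\log(1/\delta)/(\epsilon')^2\right)$ — \emph{multiplicative} in $\log(1/\delta)$. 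With $\epsilon' = \epsilon/(c\sqrt{n})$ and $\delta = \Theta(1/n)$, this yields $O(n|\Sigma|\log n/\epsilon^2)$ total samples, which exceeds the stated $O(n(|\Sigma|+\log n)/\epsilon^2)$ by a $\log n$ factor whenever $|\Sigma| \gtrsim \log n$. For TV distance there is a direct VC/McDiarmid-style concentration bound giving additive $\log(1/\delta)$, but the analogous statement for Hellinger is not known to hold (square roots break the usual bounded-differences argument), and your clustering construction does not prove it. So either you need a genuinely new high-probability Hellinger concentration inequality (not sketched here), or you should instead appeal, as the paper does, to the constant-probability product learning bound from \citet*{AcharyaDK15}, which already gives the whole product in one shot and in fact beats the theorem's stated bound.

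The rest — triangle inequality for $\dhel$, subadditivity of $\dhelsqr$ across coordinates, exact computation of $\dhel(\widehat{P},\widehat{Q})$ in $O(n|\Sigma|)$ time, and the threshold placement — is correct and matches the paper.
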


We complement the above upper bounds on sample complexity with the following lower bounds. 

\begin{restatable}{theoremRe}{thmMain}\label{thm:main}
Uniformity testing with w.r.t. $\dtv$ distance for product distributions over $[\ell]^n$ needs $\Omega(\sqrt{n\ell}/\epsilon^{2})$ samples.
\end{restatable}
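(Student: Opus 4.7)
The plan is to apply Ingster's $\chi^2$/Le Cam two-point method with a randomized hard instance. Assume $\ell$ is even (otherwise use $\ell-1$ pairs and one unperturbed symbol; this affects only constants). Partition $[\ell]$ into $\ell/2$ pairs $\{2j-1, 2j\}$, and for each sign matrix $S=(s_{i,j}) \in \{\pm 1\}^{n\times \ell/2}$ define a product distribution $D_S = \prod_i P_i^{s_i}$ with marginals $P_i^{s_i}(2j-1) = (1+s_{i,j}\delta)/\ell$ and $P_i^{s_i}(2j) = (1-s_{i,j}\delta)/\ell$. I will choose $\delta = c_0\epsilon/\sqrt{n}$ for a suitable absolute constant $c_0$ and take $S$ uniformly at random. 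Write $U$ for the uniform distribution on $[\ell]^n$.

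\textbf{Step 1: every $D_S$ is $\epsilon$-far from $U$.} The key observation is a symmetry: for any fixed $s_i$, if $x_i \sim U_\ell$ and $\tau_i$ is the unique sign in $\{\pm 1\}$ with $P_i^{s_i}(x_i) = (1+\tau_i\delta)/\ell$, then $\tau_i$ is Rademacher, because each pair $\{2j-1, 2j\}$ has equal mass under $U_\ell$ and conditional on the pair $j$ one has $\tau_i = \pm s_{i,j}$ with equal probability. Consequently, under $x \sim U$, the likelihood ratio $L(x) = D_S(x)/U(x) = \prod_i(1+\tau_i\delta)$ is equidistributed with $Y := \prod_i(1+\eta_i\delta)$ for iid Rademachers $\eta_i$, independently of $S$, so $\dtv(D_S,U) = \tfrac{1}{2}\mathbb{E}|Y-1|$ is a fixed deterministic quantity. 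I then expand $Y-1 = \delta W + R$ in the Walsh--Fourier basis where $W = \sum_i \eta_i$, so that $\mathbb{E}[R^2] = (1+\delta^2)^n - 1 - n\delta^2 = O((n\delta^2)^2)$. By Khintchine's inequality $\mathbb{E}|W| \geq c\sqrt{n}$, hence
\[
\mathbb{E}|Y-1| \;\geq\; \delta\,\mathbb{E}|W| - \mathbb{E}|R| \;\geq\; c\delta\sqrt{n} - O(n\delta^2) \;\geq\; 2\epsilon,
\]
where the last inequality holds once $c_0$ is chosen as a suitable absolute constant and $\epsilon$ is small enough.

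\textbf{Step 2: $\chi^2$ bound on the mixture and Le Cam.} Let $M = \mathbb{E}_S[D_S^{\otimes k}]$ be the $k$-sample mixture. Unrolling the standard Ingster identity using the product structure of $D_S$,
\[
1 + \dchisqr(M, U^{\otimes k}) \;=\; \mathbb{E}_{S,S'}\!\prod_{i=1}^n \!\left(1 + \tfrac{2\delta^2}{\ell}\,\langle s_i, s'_i\rangle\right)^{\!k},
\]
where $\langle s_i, s'_i\rangle = \sum_{j=1}^{\ell/2} s_{i,j}s'_{i,j}$ is a sum of $\ell/2$ iid Rademachers. Factoring across $i$ by independence, applying $(1+y)^k \leq e^{ky}$ (valid since $|y| \leq \delta^2 < 1$), and then $\cosh^{\ell/2}(t) \leq \exp(\ell t^2/4)$, the right-hand side is at most $\exp(n k^2 \delta^4/\ell)$. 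With $\delta^2 = c_0^2\epsilon^2/n$ this is $\leq 5/4$ whenever $k \leq c_1\sqrt{n\ell}/\epsilon^2$ for a sufficiently small constant $c_1$; hence $\dtv(M, U^{\otimes k}) \leq \tfrac{1}{2}\sqrt{\dchisqr} < 1/3$. Combined with Step~1, Le Cam's two-point lemma then rules out any uniformity tester on $k$ samples, establishing $k = \Omega(\sqrt{n\ell}/\epsilon^2)$.

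\textbf{Main obstacle.} The delicate point is the tight bound $\dtv(D_S, U) \geq c\delta\sqrt{n}$ in Step~1. The naive coordinate-wise estimate $\dtv \geq \dhelsqr = \Theta(n\delta^2)$ would force $\delta = \Theta(\sqrt{\epsilon/n})$ and yield only the weaker $\Omega(\sqrt{n\ell}/\epsilon)$. Extracting the additional factor of $\sqrt{n}$ relies on the Rademacher walk $W = \sum_i \eta_i$ contributing at the CLT scale (captured rigorously via Khintchine), while the higher-order Walsh terms in $L - 1$ are controlled as a negligible $L^2$ remainder of order $n\delta^2$.
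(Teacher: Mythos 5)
Your proposal is correct and reaches the same bound with the same hard instance (uniform on $[\ell]^n$ versus a random product of pairwise $\pm\delta$ perturbations with $\delta = \Theta(\epsilon/\sqrt{n})$), but both steps are argued differently from the paper.

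In Step~1 the paper does not do a Walsh--Fourier / Khintchine computation: it applies the coordinatewise parity map (data-processing) to reduce $\dtv(D_S,U)$ to $\dtv\bigl(\bern(\tfrac12+\tfrac{\epsilon}{2\sqrt n})^{\otimes n},\bern(\tfrac12)^{\otimes n}\bigr)$ and then cites the known $\Theta(\epsilon)$ estimate. Your direct argument --- observing that under $x\sim U$ the likelihood ratio is equidistributed with $\prod_i(1+\eta_i\delta)$ for Rademacher $\eta_i$ independently of $S$, extracting the linear Walsh tier via Khintchine, and killing the higher-order tail as an $L^2$ remainder of order $(n\delta^2)^2$ --- is self-contained and proves the fact the paper only cites; the two routes are essentially the same estimate viewed from different angles.

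In Step~2 the two proofs genuinely diverge. The paper controls $\dtv(P^{\otimes k},Q^{\otimes k})$ by Pinsker, tensorizes KL across coordinates, then reduces the single-coordinate $[\ell]$-valued problem to a Boolean product ($[\ell]\to\{0,1\}^\ell$) at the cost of a constant factor in KL (their Claim~4.17), and finally carries out an explicit $\chi^2$ computation on the joint count in a fixed pair of Boolean coordinates. You instead apply the Ingster/Pollard identity directly to the $k$-sample mixture $M=\mathbb{E}_S D_S^{\otimes k}$, obtaining $1+\dchisqr(M,U^{\otimes k}) = \mathbb{E}_{S,S'}\prod_i\bigl(1+\tfrac{2\delta^2}{\ell}\langle s_i,s'_i\rangle\bigr)^k$ and closing with $(1+y)^k\le e^{ky}$ and $\cosh^{\ell/2}(t)\le e^{\ell t^2/4}$ to get $\exp(nk^2\delta^4/\ell)$. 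This is the standard modern machinery for uniformity lower bounds, is noticeably shorter, avoids the Bernoulli reduction lemma altogether, and gives cleaner constants; the paper's route through KL is more circuitous. Both arguments are sound.

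One cosmetic point: make sure the constant in "$\dtv(M,U^{\otimes k})\le\tfrac12\sqrt{\dchisqr}$" is either cited (it is the Cauchy--Schwarz refinement of the chain in the paper's Lemma~2.2, which only states $\dtv\le\sqrt{\dchisqr}$) or the threshold on $\dchisqr$ tightened to, say, $1/16$.
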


\begin{restatable}{theoremRe}{sLB}\mbox{\rm ($\dkl$-versus-$\dtv$ identity testing lower bound)}\label{1S-LB}
There exists a constant $0<\epsilon<1$ and three product distributions $F^{yes}, F^{no}$ and $F$, each over the sample space $\{0,1\}^n$ such that $\dkl(F^{yes},F)\le \epsilon^2/9$, whereas $\dtv(F^{no},F)> \epsilon$, and given only sample accesses to $F^{yes}, F^{no}$, and complete knowledge about $F$, distinguishing $F^{yes}$ versus $F^{no}$ with probability $> 2/3$, requires $\Omega(n/\log n)$ samples.
\end{restatable}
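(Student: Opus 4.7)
The plan is a reduction from Valiant--Valiant's $\Omega(N/\log N)$ lower bound for tolerant $\dtv$ testing of unstructured distributions on $[N]$, following the Bernoullization paradigm of \citet*{DBLP:conf/colt/CanonneDKS17}, but strengthening the YES-side guarantee from $\dtv$ to $\dkl$. Fix a small absolute constant $\epsilon$ and take $N = n$. I would work on $[N]$ with multiplicative perturbations of the uniform $U_N$: let $\pi^{yes}_i = (1+x_i)/N$ with $x_i \in \{\pm\delta\}$ and $\pi^{no}_i = (1+y_i)/N$ with $y_i \in \{\pm\Delta\}$, both balanced so that $\sum_i x_i = \sum_i y_i = 0$, where $\delta = \Theta(\epsilon)$ is small and $\Delta$ is a larger absolute constant. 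A direct Taylor expansion gives $\dkl(\pi^{yes}\|U_N) = \delta^2/2 + O(\delta^3) = \Theta(\epsilon^2)$ and $\dtv(\pi^{no},U_N) = \Delta/2 = \Omega(1)$, placing both ensembles at constant $\dtv$ distance from $U_N$ so that Valiant--Valiant's fingerprint lower bound certifies statistical indistinguishability with $o(N/\log N)$ samples.

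Next I would lift these distributions to product distributions on $\{0,1\}^n$: for $\pi$ on $[N]$, set $P^{(\pi)} := \prod_{i=1}^n \ber(c\pi_i)$ for a small constant $c \in (0,1/2]$, and define $F := P^{(U_N)}$, $F^{yes} := P^{(\pi^{yes})}$, $F^{no} := P^{(\pi^{no})}$. Because every marginal is $O(1/n)$, a Taylor analysis of the Bernoulli KL yields the clean coordinate-wise accounting
\[
\dkl(F^{yes}\|F) = \sum_{i=1}^n \dkl\bigl(\ber(c\pi^{yes}_i)\,\|\,\ber(c/N)\bigr) = c\cdot\dkl(\pi^{yes}\|U_N) + O(1/n) = \Theta(c\delta^2),
\]
while the product-Hellinger identity $1-\dhelsqr(P,Q) = \prod_i(1-\dhelsqr(P_i,Q_i))$ combined with $\dhelsqr(\ber(p),\ber(q)) = \Theta((\sqrt p - \sqrt q)^2)$ yields $\dhelsqr(F^{no},F) = \Theta(c\Delta^2)$. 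Using $\dtv \geq \dhelsqr$ and choosing constants appropriately (which is legitimate since $\epsilon$ is a fixed constant) delivers $\dkl(F^{yes}\|F) \leq \epsilon^2/9$ and $\dtv(F^{no},F) > \epsilon$ as required.

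The sample-complexity transfer is standard Poissonization: the count of $1$'s in coordinate $i$ observed across $s$ product samples is $\mathrm{Bin}(s,c\pi_i) \approx \poi(sc\pi_i)$, which matches precisely the count of element $i$ among $sc$ Poissonized samples from $\pi$. Hence an $s$-sample tester for the product problem simulates a $\Theta(s)$-sample tester for the unstructured problem, and the $\Omega(N/\log N)$ unstructured bound forces $s = \Omega(n/\log n)$. The principal obstacle lies in the first step: generic tolerant-$\dtv$ hard instances can assign vanishing mass to some elements and thus have unbounded $\dkl$ to the reference, so one cannot merely black-box invoke the Valiant--Valiant theorem. The key technical content is to verify that the fingerprint-indistinguishability argument still applies when the histograms are restricted to the bounded-ratio profiles $(1\pm\delta)/N$ versus $(1\pm\Delta)/N$ — the two profiles remain distinguishable only through sample counts at every element, so the $\Omega(N/\log N)$ bottleneck from low-multiplicity elements survives up to changes in the absolute constants.
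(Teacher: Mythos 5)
The reduction paradigm you outline (Bernoullize an unstructured hard instance, then argue that $\dkl$ survives the lift and $\dtv$ is lower-bounded on the image) is indeed the paper's paradigm, via the map $F_{\delta}(P)_i = \bern(1-e^{-\delta p_i})$ of \cref{def-fdeltap}. But your proposed base instance breaks the argument before it starts. You take $\pi^{yes}_i = (1\pm\delta)/N$ and $\pi^{no}_i = (1\pm\Delta)/N$ with $\delta,\Delta$ absolute constants. These are two-level histograms, and two-level multiplicative perturbations of uniform are \emph{not} hard to distinguish: a collision count estimates $\sum_i \pi_i^2$ to additive $o(1/N)$ precision with $O(\sqrt N)$ samples, and $\sum_i (\pi^{yes}_i)^2 = (1+\delta^2)/N$ versus $\sum_i (\pi^{no}_i)^2 = (1+\Delta^2)/N$ differ by $\Theta(1/N)$, so $O(\sqrt N)$ samples suffice to tell them apart. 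The $\Omega(N/\log N)$ lower bound is not a generic property of ``constant TV gap'' instances; it hinges on the Valiant--Valiant moment-matching construction $p^{-}_{\log k,\phi}$, whose histogram has $\Theta(\log N)$ probability levels carefully chosen so that the first $\Theta(\log N)$ fingerprint moments agree. You flag the fingerprint step as ``the key technical content'' to verify, but there is nothing to verify — for your instance the fingerprints are trivially separable, so the bottleneck you are trying to carry through simply isn't there.

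The paper sidesteps this by not designing a hard instance at all: it invokes the existing $\dkl$-versus-$\dtv$ hard instance on $[n]$ from \citet*{Daskalakis:2018:DDS:3174304.3175479} (restated as \cref{thm-unstr-KL1sample}), which is built on $p^{-}_{\log k,\phi}$, and then verifies the one extra fact needed for the lift, namely $\lVert P^{yes}\rVert_2^2 = O(\log^2 n / n)$. That $\ell_2$ bound is essential because the KL reduction lemma (\cref{lem-KLFdeltaPQ}) gives $\dkl(F_\delta(P),F_\delta(Q)) \le (\delta+\delta^2/2)\dkl(P,Q) + \frac{3\delta^2}{2}\lVert P\rVert_2^2$: without controlling $\lVert P^{yes}\rVert_2^2$, the additive term could swamp the $\epsilon^2/9$ budget. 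Your write-up does not account for this term at all — in part because your simple instance makes it trivially small, but that is precisely the instance that fails to be hard. Your ``obstacle'' about unbounded $\dkl$ for generic $\dtv$-hard instances is the right concern, but the resolution is to use the known bounded-$\dkl$ hard instance and prove the $\ell_2$ bound, not to replace the hard instance with an easy one. The Poissonization/sample-transfer step you sketch is essentially correct and matches the paper's (each product sample costs $\poi(\delta)$ unstructured samples), though for your map $\bern(c\pi_i)$ you would also need to handle the possibility $c\pi_i > 1$ for heavy elements of the Valiant--Valiant instance, which the paper's $1-e^{-\delta p_i}$ form avoids automatically.
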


\begin{restatable}{theoremRe}{ssLB}\mbox{\rm ($\dchisqr$-versus-$\dtv$ closeness testing lower bound)}\label{2S-LB}
There exists a constant $0<\epsilon<1$ and three product distributions $F^{yes}, F^{no}$ and $F$, each over the sample space $\{0,1\}^n$ such that $\dchisqr(F^{yes},F)\le \epsilon^2/9$, whereas $\dtv(F^{no},F)> \epsilon$, and given only sample accesses to $F^{yes}, F^{no}$ and $F$, distinguishing $F^{yes}$ versus $F^{no}$ with probability $> 2/3$, requires $\Omega(n/\log n)$ samples.
\end{restatable}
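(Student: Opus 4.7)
The strategy is to reduce product-distribution closeness testing on $\{0,1\}^n$ to tolerant closeness testing on unstructured distributions over $[n]$, where $\Omega(n/\log n)$ lower bounds due to \citet*{ValiantV10} are available. Given any distribution $\pi$ on $[n]$, define the product distribution $P_\pi=\prod_{i=1}^n \ber(\pi_i)$ on $\{0,1\}^n$. Under Poissonization, $N$ samples from $\pi$ yield independent counts $N_i\sim\poi(N\pi_i)$ by Poisson thinning; these counts form a sufficient statistic, and they describe exactly the joint law of coordinate counts obtained from $N$ Poissonized samples drawn from $P_\pi$ (again by thinning). Hence the sampling oracles for $\pi$ and $P_\pi$ are information-theoretically equivalent up to constant Poissonization loss, so any tester for $F^{yes}=P_{\pi^{yes}}$ versus $F^{no}=P_{\pi^{no}}$ that additionally receives samples from $F=P_{\pi^0}$ yields a distinguisher between $\pi^{yes}$ and $\pi^{no}$ with the same sample complexity.

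Applied to a suitable hard triple $(\pi^{yes},\pi^{no},\pi^0)$ on $[n]$ with $\pi^0_i=\Theta(1/n)$, the distance properties transfer as follows. For the yes side, the product tensorization identity
\[
1+\dchisqr(P_{\pi^{yes}},P_{\pi^0})=\prod_{i=1}^n\left(1+\dchisqr(\ber(\pi^{yes}_i),\ber(\pi^0_i))\right)
\]
combined with $\dchisqr(\ber(p),\ber(q))=(p-q)^2/(q(1-q))\le 2(p-q)^2/q$ for $q\le 1/2$ and $\prod_i(1+x_i)\le\exp(\sum_i x_i)$, gives $\dchisqr(F^{yes},F)=O(\dchisqr(\pi^{yes},\pi^0))$ whenever $\dchisqr(\pi^{yes},\pi^0)=O(1)$. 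For the no side, I apply the data processing inequality to the ``unique-$1$'' filter $f\colon\{0,1\}^n\to[n]\cup\{\perp\}$ which outputs $i$ if the input has a single $1$ at coordinate $i$ and $\perp$ otherwise. Since $\pr_{x\sim P_\pi}[f(x)=i]=\pi_i\prod_{j\ne i}(1-\pi_j)=\Theta(\pi_i)$ whenever $\max_i \pi_i=O(1/n)$, this filter preserves total variation up to a constant factor, giving $\dtv(F^{no},F)=\Omega(\dtv(\pi^{no},\pi^0))$. Rescaling the base parameter then produces $\dchisqr(F^{yes},F)\le\epsilon^2/9$ and $\dtv(F^{no},F)>\epsilon$ for a suitable constant $\epsilon$.

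The main obstacle is establishing the base lower bound in the required form: a $\dchisqr$-vs-$\dtv$ tolerant closeness testing lower bound of $\Omega(n/\log n)$ on $[n]$, where $\pi^0$ is accessed via samples rather than given explicitly. This strengthens the classical $\dtv$-vs-$\dtv$ lower bound of \citet*{ValiantV10} and the product-level $\dtv$-vs-$\dtv$ bound of \citet*{DBLP:conf/colt/CanonneDKS17} by requiring the yes case to be close to $\pi^0$ in the much stronger $\chi^2$ sense. The moment-matching construction of \citet*{ValiantV10} produces pairs of distributions indistinguishable from $o(n/\log n)$ samples but does not automatically bound the $\chi^2$ distance to any common reference. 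I would address this by taking $\pi^0$ to be a small convex combination of a uniform distribution with the hard instance, so that $\pi^0_i$ is uniformly $\Omega(1/n)$; this ensures every ratio $(\pi^{yes}_i-\pi^0_i)/\pi^0_i$ is bounded and hence $\dchisqr(\pi^{yes},\pi^0)=O(\epsilon^2)$, while $\dtv$ separation in the no case is preserved up to constants. Once this base bound is in place, the Poissonization reduction and distance-transfer calculations above are essentially mechanical.
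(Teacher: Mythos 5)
Your high-level strategy — reduce to a hard $\chi^2$-vs-$\dtv$ closeness instance on unstructured distributions over $[n]$, then transport the distance gaps to products on $\{0,1\}^n$ — is the same as the paper's, but the two most important steps are missing or broken.

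First, the mapping $\pi \mapsto P_\pi = \prod_i \bern(\pi_i)$ does not admit the simulation you need, and the claimed Poissonization equivalence is false. The reduction direction required for a lower bound is: given sample access to $\pi$, simulate sample access to $P_\pi$. A sample from $P_\pi$ is a vector with coordinate $i$ independently Bernoulli$(\pi_i)$; you cannot produce this from finitely many draws of $\pi \in [n]$. The standard Poisson trick (draw $\poi(\delta)$ items, take indicator of presence) gives $\bern(1-e^{-\delta\pi_i})$, not $\bern(\pi_i)$; the paper's $F_\delta$ construction (\cref{def-fdeltap}, \cref{fact-fdeltap}) is exactly this corrected map, and its third bullet (``each sample from $F_\delta(P)$ can be simulated by $\poi(\delta)$ samples from $P$'') is what makes the reduction go through. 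Moreover, the claim that the coordinate counts from Poissonized samples of $P_\pi$ are jointly independent Poissons is incorrect: if $M \sim \poi(N)$ and $C_i = \sum_{k \le M} Y^{(k)}_i$ with $Y^{(k)}_i \sim \bern(\pi_i)$, then each $C_i$ is marginally $\poi(N\pi_i)$, but $\mathrm{Cov}(C_i,C_j) = N\pi_i\pi_j > 0$, so they are not independent. The ``information-theoretic equivalence'' you rely on therefore does not hold as stated. Your $\chi^2$ tensorization and the ``unique-$1$'' projection for $\dtv$ do mirror the paper's \cref{lem-chisqFdeltaPQ} and \cref{lem-dtv-pfdeltap}, but applied to the wrong map.

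Second, the base $\Omega(n/\log n)$ lower bound for $\dchisqr$-vs-$\dtv$ closeness testing of unstructured distributions — which you flag as ``the main obstacle'' and propose to construct by mixing Valiant--Valiant's instance with uniform — is in fact the crux of the paper's argument and is cited as a black box (\cref{thm-unstr-chisq2sample}, from Daskalakis, Kamath, and Wright, which also requires the reference distribution to be accessed only by samples, matching the closeness setting). Leaving it as a sketch (``probably works by mixing'') means the proof is incomplete: you would need to verify that the mixing preserves both the $\Omega(n/\log n)$ indistinguishability and the $\dtv$ separation, and actually yields $O(\epsilon^2)$ chi-squared closeness, none of which is established. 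Until both the mapping is fixed to $F_\delta$ and the base lower bound is in hand, this is not a proof.
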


Earlier work has designed non-tolerant closeness tester for product distribution over a binary alphabet. Here we extend it to arbitrary alphabets. 
\begin{restatable}{theoremRe}{clexhel}\mbox{\rm (Exact-versus-$\dhel$ closeness tester)}\label{thm:clexhel}
There is an algorithm with sample access to two unknown product distribution $P=\prodp$ and $Q=\prodq$, both over the common sample space $\Sigma^n$, that decides between cases 
$P=Q$ versus $\sqrt{2}\dhel(P,Q)>\epsilon$. The algorithm takes $m=O((n|\Sigma|)^{3/4}/\epsilon^2)$ samples from $P$ and $Q$ and runs in time $O(mn)$. The algorithm has a success probability at least $2/3$.
\end{restatable}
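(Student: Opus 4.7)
The plan is to adapt the chi-squared-style closeness testing statistic of \citet*{ChanDVV14} and \citet*{Daskalakis:2018:DDS:3174304.3175479} to the product-distribution setting, exploiting coordinate independence via Poissonization. First I would Poissonize the samples: draw $m' \sim \poi(m)$ samples from $P$ and, independently, $m'' \sim \poi(m)$ samples from $Q$, and for every coordinate $i \in [n]$ and symbol $j \in \Sigma$ let $X_{ij}$ (resp.\ $Y_{ij}$) denote the number of samples whose $i$-th coordinate equals $j$. Standard Poissonization then makes the whole family $\{X_{ij}\} \cup \{Y_{ij}\}$ jointly independent, with $X_{ij} \sim \poi(m p_{ij})$ and $Y_{ij} \sim \poi(m q_{ij})$.

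Next I would consider the statistic
\[ Z \;=\; \sum_{i=1}^n \sum_{j \in \Sigma} \frac{(X_{ij} - Y_{ij})^2 - X_{ij} - Y_{ij}}{X_{ij} + Y_{ij}}, \]
with the convention that a summand is zero whenever $X_{ij} + Y_{ij} = 0$. In the yes case $P = Q$, conditioning on $X_{ij} + Y_{ij} = s$ makes $X_{ij} \sim \mathrm{Bin}(s, 1/2)$, so $\e[(X_{ij}-Y_{ij})^2 \mid X_{ij}+Y_{ij}=s] = s$ and each summand has conditional expectation zero, giving $\e[Z] = 0$. In the no case, the Hellinger product identity $1 - \dhelsqr(P,Q) = \prod_i (1 - \dhelsqr(P_i, Q_i))$ combined with $1 - \prod_i(1 - x_i) \le \sum_i x_i$ for $x_i \in [0,1]$ yields $\sum_i \dhelsqr(P_i, Q_i) \ge \dhelsqr(P,Q) > \epsilon^2/2$. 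The elementary pointwise bound $(\sqrt{p}-\sqrt{q})^2 \le (p-q)^2/(p+q)$ then gives $\sum_{i,j}(p_{ij}-q_{ij})^2/(p_{ij}+q_{ij}) \ge 2\sum_i \dhelsqr(P_i, Q_i) > \epsilon^2$, from which a short calculation shows $\e[Z] = \Omega(m\epsilon^2)$.

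It then remains to bound $\var[Z]$ in both cases and choose a threshold $\tau \approx m\epsilon^2/2$ so that Chebyshev's inequality correctly distinguishes the two cases with probability at least $2/3$. The independence of the $(i,j)$ summands granted by Poissonization makes variances combine additively across the $n|\Sigma|$ buckets rather than being inflated by the $n$-dimensional product structure, and the per-coordinate running time is $O(n)$, giving the claimed $O(mn)$ total time.

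The main obstacle I anticipate is the variance calculation. A naive reduction via \cref{lem-distcompare} (using $\dtv \ge \dhelsqr$) to the total-variation tester of \cref{thm:nontolDTV} would invoke that result with $\epsilon' = \epsilon^2/2$, producing a bound of the form $O(\max(\sqrt{n|\Sigma|}/\epsilon^4, (n|\Sigma|)^{3/4}/\epsilon^2))$, which is loose in the small-$\epsilon$ regime. To obtain the claimed $(n|\Sigma|)^{3/4}/\epsilon^2$ bound throughout, I would instead carry out the bucketing-by-magnitude variance analysis of \citet*{ChanDVV14, Daskalakis:2018:DDS:3174304.3175479}, splitting the pairs $(i,j)$ according to whether $m(p_{ij}+q_{ij})$ exceeds a carefully chosen threshold and bounding the variance contribution of each bucket separately; this is where the $N^{3/4}$-style exponent (with $N = n|\Sigma|$) enters.
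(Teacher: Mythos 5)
Your plan replaces the paper's two-statistic heavy/light split with a single normalized chi-squared statistic $Z = \sum_{i,j}\frac{(X_{ij}-Y_{ij})^2 - X_{ij} - Y_{ij}}{X_{ij}+Y_{ij}}$ applied across all $n|\Sigma|$ pairs. The paper's \cref{algo-nontolerant-hellinger-2sample} instead first estimates a heavy set $U'$ and light set $V'$ from an initial batch of samples and then uses this chi-squared normalization only on $U'$, while on $V'$ it uses the unnormalized $\ell_2^2$ statistic $\sum_{V'}(W_{ij}-V_{ij})^2 - (W_{ij}+V_{ij})$; the light-set analysis, via \cref{lem-dhelcases}, \cref{lem:light1}, and \cref{lem:light2}, is the sole source of the $(n|\Sigma|)^{3/4}/\epsilon^2$ term. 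The single-statistic route is not obviously wrong and would avoid spending a batch of samples on the heavy/light pre-classification, but it is a genuinely different decomposition from the paper's.

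However, there is a concrete gap: the assertion that $\e[Z] = \Omega(m\epsilon^2)$ in the no case does not follow from $\sum_{i,j}\frac{(p_{ij}-q_{ij})^2}{p_{ij}+q_{ij}} > \epsilon^2$. For a light pair with $\lambda+\mu = m(p_{ij}+q_{ij}) < 1$, conditioning $X_{ij} + Y_{ij} = s$ gives $X_{ij}\sim\mathrm{Bin}(s,\alpha)$ with $\alpha=p_{ij}/(p_{ij}+q_{ij})$, and a direct computation yields $\e\!\left[\frac{(X-Y)^2 - (X+Y)}{X+Y}\,\middle|\,X+Y=s\right]=(s-1)(2\alpha-1)^2$; taking expectation over $s$ gives $\e[Z_{ij}] = (2\alpha-1)^2(\lambda+\mu - 1 + e^{-(\lambda+\mu)}) \approx \tfrac12 m^2(p_{ij}-q_{ij})^2$, \emph{not} $m\tfrac{(p_{ij}-q_{ij})^2}{p_{ij}+q_{ij}}$, a factor $\tfrac12 m(p_{ij}+q_{ij}) \ll 1$ smaller. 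If the chi-squared mass sits on light pairs, $\e[Z]$ can therefore be as small as $\Theta(m^2\epsilon^4/n|\Sigma|)$, which at $m = \Theta((n|\Sigma|)^{3/4}/\epsilon^2)$ is only $\Theta(\sqrt{n|\Sigma|})$ rather than $\Theta(m\epsilon^2)$. This is exactly where the $(n|\Sigma|)^{3/4}$ exponent comes from, and your proposal loses it by assuming the stronger mean bound before the variance analysis even starts. To repair the argument you need a case decomposition like \cref{lem-dhelcases} showing either the heavy chi-squared mass is $\gtrsim\epsilon^2$ or the light $\ell_2^2$ mass is $\gtrsim\epsilon^4/n|\Sigma|$, together with the \cref{lem-modification}-style preprocessing that lower-bounds every $p_{ij},q_{ij}$ by $\Omega(\epsilon^2/n|\Sigma|)$ --- that lower bound is what makes the light-case conclusion of \cref{lem-dhelcases} true, and you nowhere invoke it. After adding both pieces and carrying the mean and variance computations separately per bucket, one essentially recovers the paper's proof.
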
  

\begin{restatable}{theoremRe}{nontolDTV}\mbox{\rm (Exact-versus-$\dtv$ closeness tester)}
\label{thm:nontolDTV}
There is an algorithm with sample access to two unknown product distribution $P=\prodp$ and $Q=\prodq$, both over the common sample space $\Sigma^n$, that decides between cases 
 $P=Q$ versus $\dtv(P,Q)>\epsilon$. The algorithm takes $m=O(\max\{\sqrt{n|\Sigma|}/\epsilon^2,(n|\Sigma|)^{3/4}/\epsilon\})$ samples and runs in time $O(mn)$. The algorithm has a  success probability at least $2/3$.
\end{restatable}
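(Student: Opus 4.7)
The plan is to extend the non-tolerant binary-alphabet closeness tester of \citet*{DBLP:conf/colt/CanonneDKS17} to arbitrary alphabets. The key observation is that if $P=\prodp$ and $Q=\prodq$, then by Hellinger subadditivity for product distributions (an immediate consequence of Bernoulli's inequality)
\[
\dhelsqr(P,Q) \;\le\; \sum_{i=1}^n \dhelsqr(P_i,Q_i),
\]
while $\dtv(P,Q)>\eps$ forces $\dhelsqr(P,Q)\ge\eps^2/2$ via \cref{lem-distcompare}. Hence $\sum_i\dhelsqr(P_i,Q_i)\gtrsim\eps^2$ in the soundness case, so it suffices to estimate this per-coordinate quantity via a Chan--Diakonikolas--Valiant--Valiant (CDVV) style chi-squared statistic on the joint counts over the $n|\Sigma|$ bins indexed by (coordinate, symbol).

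Concretely, set $\ell=|\Sigma|$ and $m=C\max\{\sqrt{n\ell}/\eps^2,\,(n\ell)^{3/4}/\eps\}$ for a large enough constant $C$. Draw $\poi(m)$ samples from each of $P$ and $Q$ independently, and let $X_{ij}$ and $Y_{ij}$ count the samples in which coordinate $i$ takes symbol $j$ in the $P$-stream and the $Q$-stream, respectively. Since $P,Q$ are product distributions, the standard Poissonization of the multinomial gives that, for each fixed $i$, the counts $(X_{ij})_j$ are independent $\poi(mp_{ij})$ variables (and analogously for $Y$); across coordinates they are coupled only through the Poisson sample total, which is well-concentrated. I define the statistic
\[
Z \;=\; \sum_{i=1}^n\sum_{j\in\Sigma}\frac{(X_{ij}-Y_{ij})^2-X_{ij}-Y_{ij}}{X_{ij}+Y_{ij}},
\]
with the convention that a term is dropped when $X_{ij}+Y_{ij}=0$. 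A direct Poisson moment calculation gives $\mathbb{E}[Z]\asymp m\sum_{i,j}(p_{ij}-q_{ij})^2/(p_{ij}+q_{ij})$; combining this with the elementary bound $(a-b)^2/(a+b)\ge(\sqrt{a}-\sqrt{b})^2$ and the displayed chain above yields $\mathbb{E}[Z]\gtrsim m\eps^2$ under soundness, while $\mathbb{E}[Z]=0$ when $P=Q$.

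The main obstacle I anticipate is establishing a variance bound sharp enough to recover the $(n\ell)^{3/4}/\eps$ regime, since a direct application of the CDVV analysis to the ``flattened'' sub-distribution on $n\ell$ bins (of total mass $n$) would only give $(n\ell)^{2/3}/\eps^{4/3}$ samples. To sharpen this I will exploit the product structure via a Cauchy--Schwarz step that uses $\sum_{i,j}(p_{ij}+q_{ij})=2n$ together with $p_{ij},q_{ij}\le 1$, in the same spirit as the binary CDKS analysis, saving a factor of $\sqrt n$ relative to the unstructured bound. The cross-coordinate coupling through the Poisson sample total contributes only a lower-order term handled by conditioning on this total. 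A Chebyshev argument then separates the two cases with constant probability at the stated $m$, and the $O(mn)$ runtime follows from a single pass over the $m$ length-$n$ samples to accumulate $X_{ij},Y_{ij}$ and evaluate $Z$.
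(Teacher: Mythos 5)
Your plan has a genuine gap: routing the soundness case through Hellinger, and then running a single normalized CDVV statistic over all $n\ell$ bins, cannot reach the claimed sample complexity $O(\max\{\sqrt{n\ell}/\eps^2,\ (n\ell)^{3/4}/\eps\})$. Going through $\dtv(P,Q)>\eps\Rightarrow\dhelsqr(P,Q)>\eps^2/2$ loses the crucial $1/\eps$ scaling: the Hellinger-based separation you get is $\sum_{i,j}\frac{(p_{ij}-q_{ij})^2}{p_{ij}+q_{ij}}\gtrsim\eps^2$, which is exactly the bound underlying the $\dhel$-tester in \cref{thm:clexhel} and yields only $O((n\ell)^{3/4}/\eps^2)$ samples. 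This is strictly worse than the target when $\eps\gg(n\ell)^{-1/4}$, so no amount of variance sharpening can rescue it.

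The paper avoids this by decomposing in $\dtv$ directly. Using $\dtv(P,Q)\le\sum_i\dtv(P_i,Q_i)$ one gets $\sum_{i,j}|p_{ij}-q_{ij}|\ge 2\eps$, and then \cref{lem-dtvcases} splits the bins into heavy ($\max\{p_{ij},q_{ij}\}\ge 1/m$) and light, applying Cauchy--Schwarz separately: the light bins satisfy $\sum_{(i,j)\in V}(p_{ij}-q_{ij})^2\ge\eps^2/n\ell$ (note $\eps^2$, not $\eps^4$, which is what the Hellinger route would give as in \cref{lem-dhelcases}), and the heavy bins satisfy $\sum_{(i,j)\in U}\frac{(p_{ij}-q_{ij})^2}{p_{ij}+q_{ij}}\ge\eps^2/4$. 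Correspondingly, the algorithm uses two distinct statistics, as in \cref{algo-nontolerant-hellinger-2sample}: the normalized CDVV statistic $W_{heavy}$ on the (empirically identified) heavy bins, whose variance bound (\cref{lem:heavy1}) requires the $p_{ij},q_{ij}\ge 1/m$ assumption to hold, and the \emph{unnormalized} statistic $W_{light}=\sum(W_{ij}-V_{ij})^2-(W_{ij}+V_{ij})$ on the light bins, whose variance (\cref{lem:light1}) crucially exploits $b\le 50n\ell/m^2$ coming from $p_{ij},q_{ij}<1/m$. Your single statistic $Z$ includes light bins, for which the normalized statistic's expectation is \emph{not} $m\cdot\chi^2$-like (the CDKS claim you would be invoking holds only above the $1/m$ threshold) and for which the $\mathrm{Var}\lesssim n\ell+\sqrt{n\ell}\,\mathbb{E}$ bound fails. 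Finally, your proposed ``Cauchy--Schwarz sharpening'' is aimed at the wrong place: in the paper's argument Cauchy--Schwarz appears in the \emph{separation lemma}, not in the variance analysis; the $(n\ell)^{3/4}/\eps$ term is delivered by the light-bin mass bound $b\le 50n\ell/m^2$, which is only available after you have explicitly isolated the light bins.
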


Finally, we state our result for $\dhel$-vs-$\dhel$ closeness testing of fixed-structure Bayes nets.
\begin{restatable}{theoremRe}{bnTolH}\mbox{\rm ($\dhel$-vs-$\dhel$ closeness tester for Bayes nets)}
\label{thm:bnlTOlH}
Given samples from two unknown Bayesian networks $P$ and $Q$ over $\Sigma^n$ on potentially different but known pair of graphs of indegree at most $d$, we can distinguish the cases $\dhel(P,Q)\le \epsilon/2$ versus $\dhel(P,Q)> \epsilon$ with 2/3 probability using $m=O(|\Sigma|^{d+1}n\log(|\Sigma|^{d+1} n)\epsilon^{-2})$ samples and $O(|\Sigma|^{d+1}mn+n\epsilon^{-4})$ time.
\end{restatable}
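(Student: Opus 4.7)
The plan has three stages: (i) learn both networks in Hellinger distance by estimating all of their conditional probability tables, (ii) design an efficient estimator of the Hellinger affinity between the two learned networks using PMF queries, and (iii) combine the two via the triangle inequality for $\dhel$.

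In stage~(i), I would spend $m=O(|\Sigma|^{d+1}n\log(|\Sigma|^{d+1}n)/\epsilon^2)$ samples each from $P$ and $Q$ to form Laplace-smoothed empirical conditional probability tables along the two given graphs, producing explicit fixed-structure Bayes nets $\hat P$ and $\hat Q$ that we can both sample from and evaluate at any point using the factorization in \cref{eqn:bnfactor}. Standard sample-complexity results for learning fixed-structure Bayes nets (e.g.~\citet*{dasgupta1997sample, bhattacharyya2020efficient}) combined with the chain rule for KL divergence and \cref{lem-distcompare} should yield $\dhel(P,\hat P),\dhel(Q,\hat Q)\le\epsilon/10$ with high probability; the logarithmic factor in $m$ comes from a union bound over the $n$ nodes and their $|\Sigma|^d$ parent configurations.

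For stage~(ii) --- the novel efficient estimator highlighted in the abstract --- I would exploit the identity
\[
1-\dhelsqr(\hat P,\hat Q)\;=\;\sum_x \sqrt{\hat P(x)\hat Q(x)}\;=\;\mathbb{E}_{X\sim\hat P}\!\left[\sqrt{\hat Q(X)/\hat P(X)}\right],
\]
and estimate the right-hand side by drawing $m'=O(\epsilon^{-4})$ fresh samples from $\hat P$ (possible since $\hat P$ is an explicit Bayes net) and averaging $\sqrt{\hat Q(X_i)/\hat P(X_i)}$. The variance of a single term is at most $\mathbb{E}_{X\sim\hat P}[\hat Q(X)/\hat P(X)]=1$, so Chebyshev's inequality delivers an $\epsilon^2/100$-accurate estimate of $\dhelsqr(\hat P,\hat Q)$ with constant probability, boosted to high probability via a median-of-$O(1)$ independent runs. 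Each sample costs $O(nd)$ to generate and each PMF evaluation costs $O(nd)$, so this stage runs in $O(n\epsilon^{-4})$ time and uses no further samples from $P$ or $Q$.

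Stage~(iii) is routine: by the triangle inequality, $\dhel(P,Q)\le\epsilon/2$ forces $\dhel(\hat P,\hat Q)\le 7\epsilon/10$, while $\dhel(P,Q)>\epsilon$ forces $\dhel(\hat P,\hat Q)>8\epsilon/10$, and the resulting $\dhelsqr$-gap of $\Omega(\epsilon^2)$ comfortably exceeds the estimator error of stage~(ii). The main obstacle I anticipate is stage~(i): converting a KL bound on $\dkl(P,\hat P)$ into a uniform Hellinger bound across all $n$ CPTs --- especially for parent configurations of small marginal probability --- requires Laplace smoothing together with a weighted chain-rule argument tight enough that rare contexts contribute negligibly to the global learning error. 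Laplace smoothing additionally guarantees that $\hat P(X)>0$ on every sample drawn in stage~(ii), so the ratio in the estimator is always well-defined.
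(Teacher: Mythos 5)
Your proposal is essentially the paper's proof: stage~(i) is exactly Theorem~\ref{thm:bnlearning} (cited from \citet*{bhattacharyya2020efficient}) applied with accuracy $\epsilon/12$, stage~(ii) is precisely the importance-sampling estimator of Theorem~\ref{thm:distEst} based on the identity $1-\dhelsqr(\hat P,\hat Q)=\mathbb{E}_{X\sim\hat P}\bigl[\sqrt{\hat Q(X)/\hat P(X)}\bigr]$ with a unit variance bound and Chebyshev, and stage~(iii) is the same triangle-inequality separation, differing only in constants ($\epsilon^2/9$ estimation error rather than $\epsilon^2/100$). One small remark: Laplace smoothing is not actually required for the estimator to be well-defined, since drawing $X\sim\hat P$ automatically guarantees $\hat P(X)>0$; the concerns you raise about rare parent configurations belong inside the proof of the learning lemma, which both you and the paper invoke as a black box.
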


\section{Efficient Tolerant Testers}\label{sec-upper}

\subsection{$\dchisqr$-vs-$\dhel$ Tolerant Identity Tester}\label{sec:chisq-1S}

In this section, we generalize the testers of~\citet*{DBLP:conf/colt/DaskalakisP17,DBLP:conf/colt/CanonneDKS17} that distinguishes $P=Q$ (`yes class') versus $\dtv(P,Q)\ge \epsilon$ (`no class') using $O(\sqrt{n}/\epsilon^2)$ samples, where $P$ and $Q$ are product distributions over $\{0,1\}^n$. Our first contribution is to generalize their tester in the following three ways. Firstly, our `no class' is defined as $\sqrt{2}\dhel(P,Q)\ge \epsilon$, which is more general than $\dtv(P,Q)\ge \epsilon$. Secondly, our tester works for any general alphabet size $|\Sigma| \ge 2$. Finally, we give a $\dchisqr$ tolerant tester i.e. our `yes class' is defined as $\dchisqr(P,Q)< \epsilon^2/9$.

Our tester relies on certain factorizations of $\dchisqr$ and $\dhelsqr$ for product distributions.  We shall proceed to discuss those relations.

\begin{restatable}{lemmaRe}{chisqmult}\label{lem-chisq-mult}(folklore, see \citet*{AcharyaDK15} for a proof)
Let $P=\prodp$ and $Q=\prodq$
be two distributions, both over the common sample space $\Sigma^n$.
Then $\dchisqr(P,Q)=\prod_{i=1}^n (1+\dchisqr(P_i,Q_i))-1$. In particular, $\dchisqr(P,Q)\ge\sum_{i} \dchisqr(P_i,Q_i)$
\end{restatable}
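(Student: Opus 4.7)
The plan is to start from the definition $\dchisqr(P,Q) = \sum_{x \in \Sigma^n} P(x)^2/Q(x) - 1$ and exploit the product structure of $P$ and $Q$ coordinatewise. Since $P(x) = \prod_i P_i(x_i)$ and $Q(x) = \prod_i Q_i(x_i)$, the ratio $P(x)^2/Q(x)$ factorizes as $\prod_i P_i(x_i)^2/Q_i(x_i)$, so the sum over $x \in \Sigma^n$ becomes a sum over a product, which by Fubini/distributivity is the product of the per-coordinate sums.

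Concretely, I would write
\[
\sum_{x \in \Sigma^n} \frac{P(x)^2}{Q(x)} \;=\; \sum_{x_1 \in \Sigma}\cdots \sum_{x_n \in \Sigma} \prod_{i=1}^n \frac{P_i(x_i)^2}{Q_i(x_i)} \;=\; \prod_{i=1}^n \sum_{x_i \in \Sigma} \frac{P_i(x_i)^2}{Q_i(x_i)},
\]
and then identify each inner sum with $1 + \dchisqr(P_i,Q_i)$ using the alternative formula $\dchisqr(P_i,Q_i) = \sum_j P_i(j)^2/Q_i(j) - 1$ from \cref{def:dists}. Subtracting $1$ from both sides yields the claimed factorization.

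For the ``in particular'' inequality, I would note that $\dchisqr(P_i,Q_i) \ge 0$ for every $i$ (it is a sum of squares divided by positive numbers). Setting $a_i = \dchisqr(P_i,Q_i) \ge 0$, the identity $\prod_{i=1}^n (1+a_i) = \sum_{S \subseteq [n]} \prod_{i \in S} a_i$ immediately gives $\prod_i (1+a_i) \ge 1 + \sum_i a_i$, from which the bound $\dchisqr(P,Q) \ge \sum_i \dchisqr(P_i,Q_i)$ follows by subtracting $1$.

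I do not foresee a genuine obstacle: the proof is a direct calculation, the only mild care needed is to handle the convention for coordinates $i$ and symbols $j \in \Sigma$ with $Q_i(j) = 0$ (in which case $P_i(j)$ must also be $0$ for $\dchisqr$ to be finite, and we adopt the standard convention $0/0 = 0$), and to note that the interchange of sum and product requires no convergence hypothesis since everything is a finite sum of nonnegative terms.
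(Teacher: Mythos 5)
Your proof is correct and follows essentially the same route as the paper's: expand $\dchisqr(P,Q)=\sum_x P(x)^2/Q(x)-1$, factor through the product structure, and identify each per-coordinate sum as $1+\dchisqr(P_i,Q_i)$; the inequality then follows from $\prod_i(1+a_i)\ge 1+\sum_i a_i$ for $a_i\ge 0$. Your additional care about the $0/0$ convention is a sensible clarification but does not change the argument.
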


\ignore{
\begin{proof}
Let $P(i)$ and $Q(i)$ be the probability of an item $i = (i_1i_2\dots i_n)\in \Sigma^n$ in $P$ and $Q$ respectively. Then $P(i)=P_1(i_1)\dots P_n(i_n)$ and $Q(i)=Q_1(i_1)\dots Q_n(i_n)$, where $P_j(i_j)$ and $Q_j(i_j)$ is the probability of item $i_j$ in $P_j$ and $Q_j$ respectively.
\begin{align*}
\dchisqr(P,Q)&=\sum_{i \in \Sigma^n} (P(i)-Q(i))^2/Q(i) \\
&=\sum_{i \in \Sigma^n} P^2(i)/Q(i) -1\\
&=\sum_{(i_1i_2\dots i_n) \in \Sigma^n} {P_1^2(i_1)\dots P_n^2(i_n)\over Q_1(i_1)\dots Q_n(i_n)} -1\\
&=\sum_{i_1 \in \Sigma} {P_1^2(i_1) \over Q_1(i_1)} \dots \sum_{i_n \in \Sigma} {P_n^2(i_n)\over Q_n(i_n)} -1\\
&= (1+\dchisqr(P_1,Q_1))\dots (1+\dchisqr(P_n,Q_n))-1
\end{align*}
\end{proof}

We get the following useful corollary from \cref{lem-chisq-mult}.
\begin{corollaryRe}\label{cor-yesclass}
Let $P=\prodp$ and  $Q=\prodq$ be two distributions, both over the common sample space $\Sigma^n$.
Then $\dchisqr(P,Q)\ge\sum_{i} \dchisqr(P_i,Q_i)$.
\end{corollaryRe}

We also need the following upper bound of Hellinger distance by chi-squared distance.

\begin{restatable}{lemmaRe}{hubchisq}\label{lem-hubchisq}(\citet*{Daskalakis:2018:DDS:3174304.3175479})
Let $P$ and $Q$ be two distributions over a common sample space. Then $2\dhelsqr(P,Q)\le \dchisqr(P,Q)$.
\end{restatable}

\begin{proof}
For any item $i$ in the common sample space let $P(i)$ and $Q(i)$ denote the probability values of item 
$i$
under $P$ and $Q$ respectively.
\begin{align*}
2\dhelsqr(P,Q)&=\sum_i (\sqrt{P(i)}-\sqrt{Q(i)})^2\\
&=\sum_i {(P(i)-Q(i))^2\over (\sqrt{P(i)}+\sqrt{Q(i)})^2}\\
&\le \sum_i {(P(i)-Q(i))^2\over Q(i)}\\
&= \dchisqr(P,Q)
\end{align*}
\end{proof}
}

\begin{restatable}{factRe}{helequality}\label{lem-helloc-equality}(folklore)
Let $P=\prod_{i=1}^n P_i$ and $Q=\prod_{i=1}^n Q_i $ be two distributions over $\Sigma^n$. It holds that $1-\dhelsqr(P,Q)= \Pi_{i=1}^n (1-\dhelsqr(P_i,Q_i))$. In particular, $\dhel^2(P,Q) \le \sum_i \dhel^2(P_i,Q_i)$.
\end{restatable}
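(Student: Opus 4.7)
The plan is to start from the $1 - \sum \sqrt{P(x)Q(x)}$ form of squared Hellinger distance and exploit the product structure of $P$ and $Q$ to turn the sum over $\Sigma^n$ into a product of $n$ sums over $\Sigma$. Concretely, using the identity $\dhelsqr(R,S) = 1 - \sum_x \sqrt{R(x)S(x)}$ from \cref{def:dists}, I would write, for any $x = (x_1, \ldots, x_n) \in \Sigma^n$,
\[
\sqrt{P(x) Q(x)} \;=\; \sqrt{\prod_{i=1}^n P_i(x_i) Q_i(x_i)} \;=\; \prod_{i=1}^n \sqrt{P_i(x_i) Q_i(x_i)},
\]
where I use that each $P_i(x_i) Q_i(x_i) \ge 0$ so the square root distributes. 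Summing over $x \in \Sigma^n$ and applying the standard distributive law for a product of finite sums gives
\[
\sum_{x \in \Sigma^n} \sqrt{P(x) Q(x)} \;=\; \prod_{i=1}^n \sum_{x_i \in \Sigma} \sqrt{P_i(x_i) Q_i(x_i)} \;=\; \prod_{i=1}^n \bigl( 1 - \dhelsqr(P_i,Q_i) \bigr),
\]
which on rearranging yields the main identity $1 - \dhelsqr(P,Q) = \prod_{i=1}^n (1 - \dhelsqr(P_i,Q_i))$.

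For the ``in particular'' clause, I would use the elementary inequality $\prod_{i=1}^n (1 - a_i) \ge 1 - \sum_{i=1}^n a_i$ valid whenever each $a_i \in [0,1]$, which follows by a one-line induction on $n$ (base case trivial; for the step, $(1-\sum_{i \le n-1} a_i)(1 - a_n) = 1 - \sum_{i \le n} a_i + a_n \sum_{i \le n-1} a_i \ge 1 - \sum_{i \le n} a_i$). Plugging in $a_i = \dhelsqr(P_i,Q_i) \in [0,1]$ and rearranging gives $\dhelsqr(P,Q) \le \sum_{i=1}^n \dhelsqr(P_i,Q_i)$.

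There is no real obstacle here: the whole argument hinges only on the product structure of $P$ and $Q$ making the square root of their joint mass factorize, together with an elementary Weierstrass-type inequality. The only thing to keep an eye on is ensuring that each $\dhelsqr(P_i,Q_i) \in [0,1]$, which is immediate from the definition since Hellinger squared distance between probability distributions always lies in $[0,1]$.
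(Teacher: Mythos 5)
Your proof is correct and follows essentially the same route as the paper's (commented-out) derivation: write $1-\dhelsqr(P,Q)=\sum_{x}\sqrt{P(x)Q(x)}$, use the product structure of $P$ and $Q$ to factor the sum over $\Sigma^n$ into a product of $n$ sums over $\Sigma$, and identify each factor as $1-\dhelsqr(P_i,Q_i)$. The Weierstrass-type bound $\prod_i(1-a_i)\ge 1-\sum_i a_i$ for $a_i\in[0,1]$ is the standard way to get the subadditivity clause, and your inductive justification of it is fine.
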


We get the following useful corollary from \cref{lem-distcompare} and \cref{lem-helloc-equality}.
\begin{corollaryRe}\label{cor-noclass}
Let $P=\prodp$ and $Q=\prodq$ be two distributions, both over the common sample space $\Sigma^n$.
Then $\dhelsqr(P,Q)\le\sum_{i} \dchisqr(P_i,Q_i)/2$.
\end{corollaryRe}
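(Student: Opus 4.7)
The plan is to chain together two already-established facts. First, the subadditivity of squared Hellinger distance for product distributions (Fact~\ref{lem-helloc-equality}) gives
\[
\dhelsqr(P,Q) \;\le\; \sum_{i=1}^n \dhelsqr(P_i,Q_i).
\]
This is the ``in particular'' clause of the fact that has just been stated, so I would invoke it without reproof.

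Next, I would apply the pointwise comparison from Lemma~\ref{lem-distcompare} coordinate by coordinate. That lemma asserts $\sqrt{2}\,\dhel(R,S) \le \sqrt{\dchisqr(R,S)}$ for any two distributions $R,S$ on a common sample space, which after squaring yields $2\dhelsqr(R,S) \le \dchisqr(R,S)$. Applying this with $R = P_i$ and $S = Q_i$ for each $i \in [n]$ and summing gives
\[
\sum_{i=1}^n \dhelsqr(P_i,Q_i) \;\le\; \tfrac{1}{2}\sum_{i=1}^n \dchisqr(P_i,Q_i).
\]
Combining the two displayed inequalities immediately yields the claim.

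Since the corollary is a direct two-line consequence of results the excerpt has already stated, there is no real technical obstacle; the only thing to be careful about is the factor of $\sqrt{2}$ in Lemma~\ref{lem-distcompare}, which becomes the factor of $\tfrac{1}{2}$ appearing in the statement after squaring. No additional tools (such as the multiplicative identity for $\dchisqr$ in Lemma~\ref{lem-chisq-mult}) are needed for this corollary, though one could alternatively derive the bound by first applying $2\dhelsqr(P,Q) \le \dchisqr(P,Q)$ globally and then using subadditivity of $\dchisqr$ from Lemma~\ref{lem-chisq-mult}; I prefer the route above because it keeps the coordinatewise comparison at the last step, which is the form in which the corollary is typically used downstream.
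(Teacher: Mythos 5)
Your main argument is correct and matches the route the paper intends: the paper introduces \cref{cor-noclass} as a direct consequence of \cref{lem-distcompare} and \cref{lem-helloc-equality}, and your chain (subadditivity of $\dhelsqr$ over the product factors, then the coordinatewise comparison $2\dhelsqr(P_i,Q_i)\le\dchisqr(P_i,Q_i)$ from the last inequality of \cref{lem-distcompare}) is exactly that.

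The alternative route you sketch at the end, however, does not work. You propose first applying $2\dhelsqr(P,Q)\le\dchisqr(P,Q)$ globally and then invoking ``subadditivity of $\dchisqr$'' from \cref{lem-chisq-mult}. But that lemma gives the opposite direction: $\dchisqr(P,Q)\ge\sum_i\dchisqr(P_i,Q_i)$, i.e.\ $\dchisqr$ is \emph{super}additive for product distributions (this superadditivity is exactly what the paper needs for the yes-case of the $\dchisqr$-vs-$\dhel$ tester). Chaining $2\dhelsqr(P,Q)\le\dchisqr(P,Q)$ with $\dchisqr(P,Q)\ge\sum_i\dchisqr(P_i,Q_i)$ yields nothing: the inequalities point the wrong way. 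So of the two routes you mention, only the coordinatewise one you actually carry out is valid.
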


To avoid low probabilities in the denominator of the test statistic, we need to ensure that for each distribution $Q_i$, each element in the sample space $\Sigma$ gets at least a sufficiently large probability $\Omega(\epsilon^2/|\Sigma| n)$. We do this by {\em slightly randomizing} $Q$ to get a new distribution $S$. The randomization process to get $S$ from $Q$ is given below. This is similar to the reduction given in~\citet*{DBLP:conf/colt/DaskalakisP17} and~\citet*{DBLP:conf/colt/CanonneDKS17} for the case $\Sigma = \{0,1\}$ and for the case when the `no' class is defined with respect to $\dtv$. Let $\bern(\delta)^n$ be the product distribution of $n$ copies of $\bern(\delta)$.

\begin{restatable}{lemmaRe}{modification}\label{lem-modification}
For a product distribution $P = \prod_{i=1}^n P_i$, where the $P_i's$ are over a sample space $\Sigma$, and $0<\delta < 1$, let $P^{\delta}$ be the distribution over $\Sigma^n$ defined by the following sampling process. In order to produce a sample $(X_1,X_2,\ldots,X_n)$ of $P^{\delta}$, 
\begin{enumerate}[itemsep=0mm]
\item[-] Sample $(r_1,r_2,\dots,r_n)\sim \bern(\delta)^n$ and sample $(Y_1,Y_2,\dots,Y_n)\sim P$
\item[-] For every $i$, if $r_i=1$, $X_i\leftarrow$ uniform sample from $\Sigma$, if $r_i=0$, $X_i \leftarrow Y_i$. 
\end{enumerate} 
Then, the following is true. 
\begin{enumerate}[itemsep=0mm] 
\item[-] $P^{\delta}$ is a product distribution $\prod_{i} P_i^{\delta}$ and each sample from $P^{\delta}$ can be simulated by 1 sample from $P$.
\item[-] For every $i: 1\leq i \leq n$ and $j\in \Sigma$, $P_i^{\delta}(j)\ge \delta/|\Sigma|$. 
\item[-] $\dhelsqr(P,P^{\delta}) \leq 2n\delta$
\end{enumerate}
\end{restatable}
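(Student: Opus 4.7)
The plan is to handle the three items in turn; only the third requires more than a direct unpacking of the sampling procedure.

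For item~1, I would observe that conditional on $r_i \in \{0,1\}$, coordinate $X_i$ is either equal to $Y_i \sim P_i$ (when $r_i = 0$) or a fresh uniform draw on $\Sigma$ (when $r_i = 1$). Because the Bernoulli bits $r_i$, the coordinates $Y_i$, and the uniform fallback draws are mutually independent across $i$, the $X_i$'s are mutually independent and
\[
P_i^\delta(j) = (1-\delta)\,P_i(j) + \frac{\delta}{|\Sigma|}
\]
for every $j \in \Sigma$. This simultaneously yields the product form and shows that a single draw $(Y_1,\ldots,Y_n)\sim P$, supplemented by independently generated auxiliary randomness, suffices to simulate one sample from $P^\delta$. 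Item~2 then follows immediately, since $(1-\delta)\,P_i(j) \ge 0$ gives $P_i^\delta(j) \ge \delta/|\Sigma|$.

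For item~3, the plan is to bound the per-coordinate Hellinger distance and then lift to the full product via \cref{lem-helloc-equality}. A direct computation from the mixture form gives
\[
\dtv(P_i, P_i^\delta) = \frac{\delta}{2}\sum_{j\in\Sigma}\Bigl|P_i(j) - \tfrac{1}{|\Sigma|}\Bigr| = \delta \cdot \dtv(P_i, U_\Sigma) \le \delta,
\]
where $U_\Sigma$ is uniform on $\Sigma$; \cref{lem-distcompare} then upgrades this to $\dhelsqr(P_i, P_i^\delta) \le \dtv(P_i, P_i^\delta) \le \delta$. Plugging into \cref{lem-helloc-equality} and applying Bernoulli's inequality,
\[
1 - \dhelsqr(P, P^\delta) = \prod_{i=1}^n \bigl(1 - \dhelsqr(P_i, P_i^\delta)\bigr) \ge 1 - \sum_{i=1}^n \dhelsqr(P_i, P_i^\delta) \ge 1 - n\delta,
\]
so $\dhelsqr(P, P^\delta) \le n\delta \le 2n\delta$, with the factor of two absorbed as slack.

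There is no serious obstacle in this argument; the one point that warrants any care is choosing the correct direction in the chain from \cref{lem-distcompare}, since only $\dhelsqr \le \dtv$ (rather than the weaker $\dtv \le \sqrt{2}\,\dhel$) yields the desired linear-in-$\delta$ per-coordinate bound that survives multiplication across the $n$ factors.
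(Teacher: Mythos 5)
Your proof is correct and in fact gives the sharper bound $\dhelsqr(P,P^\delta)\le n\delta$, a factor of two better than the stated $2n\delta$. The structure of both arguments is the same at the top level: bound the per-coordinate squared Hellinger distance and lift to the product via the subadditivity packaged in \cref{lem-helloc-equality}. Where you diverge from the paper is the per-coordinate step. The paper works directly with the square-root form, conditioning on the event $\{r_i=0\}$ and invoking two ad hoc inequalities, namely $(\sqrt{a+b}-\sqrt{c+d})^2 \le (\sqrt a-\sqrt c)^2+(\sqrt b-\sqrt d)^2$ and $(1-\sqrt{1-x})^2\le x$, to land on $\dhelsqr(P_i,P_i^\delta)\le 2\delta$. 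You instead exploit the fact that $P_i^\delta$ is an affine mixture, so $P_i(j)-P_i^\delta(j)=\delta\bigl(P_i(j)-1/|\Sigma|\bigr)$, which makes $\dtv(P_i,P_i^\delta)=\delta\,\dtv(P_i,U_\Sigma)\le\delta$ immediate, and then route through the one-line inequality $\dhelsqr\le\dtv$ from \cref{lem-distcompare}. This is cleaner, avoids the square-root manipulations, and tightens the constant. Your closing remark about the direction of the comparison is exactly right: it is crucial to use $\dhelsqr\le\dtv$ rather than $\dtv\le\sqrt2\,\dhel$, since only the former converts an upper bound on $\dtv$ into an upper bound on $\dhelsqr$ that stays linear in $\delta$. (One terminological quibble: the product inequality $\prod_i(1-a_i)\ge 1-\sum_i a_i$ for $a_i\in[0,1]$ is the Weierstrass product inequality rather than Bernoulli's, but it is precisely the ``In particular'' clause of \cref{lem-helloc-equality}, so you could simply cite that.)
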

\begin{proof}
The first part is obvious from the sampling process. For the second part, $P_i^{\delta}(j)= (1-\delta)P_i(j)+\delta/|\Sigma| \ge \delta/|\Sigma|$ for every $i,j$. 

The proof of the third part can be obtained by generalizing the proof of \citet*{Daskalakis:2018:DDS:3174304.3175479}. Consider the $i$-th component of $P$ and $P^\delta$, denoted $P_i$ and $Q_i$ respectively for convenience. Let $E_i$ be the event that $r_i=0$. Also note that conditioned on the event $E_i$, for any item $j\in\Sigma,$ the probability values satisfy $Q_i(j\mid E_i)=P_i(j)$.
\begin{align*}
\dhelsqr(P_i,Q_i) &= \sum_{j\in \Sigma} \left( \sqrt{Q_i(j)} -\sqrt{P_i(j)} \right)^2\\
&=  \sum_{j\in \Sigma} \left( \sqrt{Q_i(j \mid E_i)\mathrm{Pr}(E_i)+Q_i(j \mid \bar{E}_i)\mathrm{Pr}(\bar{E}_i)} -\sqrt{P_i(j)} \right)^2\\
&=  \sum_{j} \left( \sqrt{P_i(j)\mathrm{Pr}(E_i)} -\sqrt{P_i(j)} \right)^2+\sum_j Q_i(j \mid \bar{E}_i)\mathrm{Pr}(\bar{E}_i)&&\tag{Using $(\sqrt{a+b}-\sqrt{c+d})^2\le (\sqrt{a}-\sqrt{c})^2+(\sqrt{b}-\sqrt{d})^2$ for non-negative $a,b,c,d$}\\
&=(1-\sqrt{\mathrm{Pr}(E_i)})^2+\mathrm{Pr}(\bar{E}_i)\\
&=(1-\sqrt{1-\mathrm{Pr}(\bar{E}_i)})^2+\mathrm{Pr}(\bar{E}_i)\\
&\le 2\mathrm{Pr}(\bar{E}_i)=2\delta&&\tag{Using $(1-\sqrt{1-x})^2\le x$ for $0\le x \le1$}\\
\end{align*}

Then \cref{lem-helloc-equality} gives us $\dhelsqr(P,P^\delta) \le 2n\delta$ due to sub-additivity.
\ignore{
For convenience, we denote $P^{\delta}$ by $R$. Define the event $E$ to be $a_i=0$ for every $i$. Then $\mathrm{Pr}(E)=(1-\delta)^n \ge (1-n\delta)$ and $\mathrm{Pr}[\bar{E}]\le n\delta$. Also note that conditioned on the event $E$, for any item $i\in\Sigma^n,$ the probability values satisfy $R(i|E)=P(i)$.

\begin{align*}
\dhelsqr(P,R)&=\sum_i (\sqrt{R(i)}-\sqrt{P(i)})^2\\
&=\sum_i (\sqrt{R(i|E)\mathrm{Pr}(E)+R(i|\bar{E})\mathrm{Pr}(\bar{E})}-\sqrt{P(i)})^2\\
&=\sum_i (\sqrt{P(i)\mathrm{Pr}(E)+R(i|\bar{E})\mathrm{Pr}(\bar{E})}-\sqrt{P(i)})^2\\
&\le\sum_i [(\sqrt{P(i)\mathrm{Pr}(E)}-\sqrt{P(i)})^2+(\sqrt{R(i|\bar{E})\mathrm{Pr}(\bar{E})})^2]
&&\tag{Using $(\sqrt{a+b}-\sqrt{c+d})^2\le (\sqrt{a}-\sqrt{c})^2+(\sqrt{b}-\sqrt{d})^2$ for non-negative $a,b,c,d$}\\
&=(1-\sqrt{\mathrm{Pr}(E)})^2\sum_i P(i)+\mathrm{Pr}(\bar{E})\sum_i R(i|\bar{E})\\
&=(1-\sqrt{1-\mathrm{Pr}(\bar{E})})^2+\mathrm{Pr}(\bar{E})\\
&\le 2\mathrm{Pr}(\bar{E}) &&\tag{Using $(1-\sqrt{1-x})^2\le x$ for $0\le x \le1$}\\
&\le 2n\delta
\end{align*}
}
\end{proof}

\begin{restatable}{lemmaRe}{RSfromPQ}\label{lem-RSfromPQ}
Let $P=\prod_{i=1}^{n}P_{i}$ and $Q=\prod_{i=1}^n Q_i$ be two distributions, both over the common sample space $\Sigma^n$. Let $\ell=|\Sigma|$, $R=P^{\delta}$, $S=Q^{\delta}$  with $\delta=\epsilon^2/50n$.
 $R=\prod_{i=1}^{n}R_i$ and $S=\prod_{i=1}^n S_i$, where
$R_i=\langle r_{i1},r_{i2},\dots, r_{i\ell}\rangle$ and $S_i=\langle s_{i1},s_{i2},\dots, s_{i\ell}\rangle$ for
every $i$. Then
\begin{itemize}[itemsep=0mm]
\item[(1)] If $\dchisqr(P,Q) \le \epsilon^2/9$ then $\sum_{i,j} {(r_{ij}-s_{ij})^2\over s_{ij}} < 0.12\epsilon^2$.
\item[(2)] If $\sqrt{2}\dhel(P,Q)\ge \epsilon$ then $\sum_{i,j} {(r_{ij}-s_{ij})^2\over s_{ij}} > 0.18\epsilon^2$.
\end{itemize}
\end{restatable}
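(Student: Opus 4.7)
\medskip
\noindent\textbf{Proof plan.} The first observation is a structural one: the randomization of \cref{lem-modification} acts on each coordinate independently, so we may write
\[
R_i = (1-\delta)P_i + \tfrac{\delta}{\ell}\mathbf{1}, \qquad S_i = (1-\delta)Q_i + \tfrac{\delta}{\ell}\mathbf{1},
\]
and the quantity of interest factors as $\sum_{i,j}(r_{ij}-s_{ij})^2/s_{ij} = \sum_{i=1}^{n}\dchisqr(R_i,S_i)$. Both parts reduce to estimating this coordinate-wise sum.

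\medskip
\noindent\textbf{Part (1).} From the mixture form, $r_{ij}-s_{ij} = (1-\delta)(p_{ij}-q_{ij})$, and $s_{ij} \ge (1-\delta)q_{ij}$. Plugging in yields the termwise bound
\[
\frac{(r_{ij}-s_{ij})^2}{s_{ij}} \le (1-\delta)\,\frac{(p_{ij}-q_{ij})^2}{q_{ij}},
\]
so summing over $j$ gives $\dchisqr(R_i,S_i)\le (1-\delta)\dchisqr(P_i,Q_i)$. Summing over $i$ and invoking the chi-squared subadditivity for product distributions (\cref{lem-chisq-mult}),
\[
\sum_{i=1}^{n}\dchisqr(R_i,S_i) \le \sum_{i=1}^{n}\dchisqr(P_i,Q_i) \le \dchisqr(P,Q) \le \tfrac{\epsilon^2}{9} < 0.12\,\epsilon^2.
\]

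\medskip
\noindent\textbf{Part (2).} I will chain three facts. First, the pointwise inequality $2\dhelsqr \le \dchisqr$ (\cref{lem-distcompare}) gives $\dchisqr(R_i,S_i)\ge 2\dhelsqr(R_i,S_i)$; summing and applying the Hellinger subadditivity for product distributions (\cref{lem-helloc-equality}) yields
\[
\sum_{i}\dchisqr(R_i,S_i) \;\ge\; 2\sum_{i}\dhelsqr(R_i,S_i) \;\ge\; 2\,\dhelsqr(R,S).
\]
So it suffices to lower bound $\dhelsqr(R,S)$ by $0.09\,\epsilon^2$. For this I transfer the lower bound from $(P,Q)$ to $(R,S)$ via the triangle inequality for $\dhel$: $\dhel(R,S) \ge \dhel(P,Q) - \dhel(P,R) - \dhel(Q,S)$. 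By \cref{lem-modification} applied with $\delta = \epsilon^2/(50n)$,
\[
\dhelsqr(P,R)\le 2n\delta = \tfrac{\epsilon^2}{25}, \qquad \text{so } \dhel(P,R)\le \tfrac{\epsilon}{5},
\]
and similarly $\dhel(Q,S)\le \epsilon/5$. Combined with the hypothesis $\dhel(P,Q)\ge \epsilon/\sqrt{2}$,
\[
\dhel(R,S) \ge \tfrac{\epsilon}{\sqrt{2}} - \tfrac{2\epsilon}{5} \ge 0.3\,\epsilon,
\]
so $2\dhelsqr(R,S) \ge 2(0.3)^2\epsilon^2 = 0.18\,\epsilon^2$, which finishes the proof.

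\medskip
\noindent\textbf{Where the work is.} The argument is almost entirely bookkeeping: the two genuinely non-trivial ingredients (chi-squared subadditivity and Hellinger subadditivity for product distributions, plus the Hellinger perturbation bound for $P^\delta$) are already in hand. The only place that requires any care is making the numerical constants line up — specifically, ensuring that the slack between $\epsilon^2/9$ and $0.12\epsilon^2$ in (1) and between $0.18\epsilon^2$ and $2(1/\sqrt{2}-2/5)^2\epsilon^2 \approx 0.188\epsilon^2$ in (2) is positive, which is precisely what pins down the choice $\delta = \epsilon^2/(50n)$.
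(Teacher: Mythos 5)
Your proof is correct and follows essentially the same route as the paper: part (1) uses $r_{ij}-s_{ij}=(1-\delta)(p_{ij}-q_{ij})$, $s_{ij}\ge(1-\delta)q_{ij}$, and the superadditivity $\sum_i\dchisqr(P_i,Q_i)\le\dchisqr(P,Q)$ from \cref{lem-chisq-mult}; part (2) uses the Hellinger triangle inequality, the perturbation bound $\dhelsqr(P,P^\delta)\le 2n\delta$ from \cref{lem-modification}, and then lower-bounds $\sum_i\dchisqr(R_i,S_i)$ by $2\dhelsqr(R,S)$. The only cosmetic difference is that the paper invokes \cref{cor-noclass} as a packaged inequality while you rederive it from Hellinger subadditivity and the per-coordinate bound $2\dhelsqr\le\dchisqr$; the numerics $2(1/\sqrt{2}-2/5)^2\approx 0.1886>0.18$ match.
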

\begin{proof}
({\it Proof of (1)}) We have that $r_{ij}=(1-\delta)p_{ij}+\delta/\ell$ and $s_{ij}=(1-\delta)q_{ij}+\delta/\ell$. Then,
$
\sum_{i,j} {(r_{ij}-s_{ij})^2\over s_{ij}} = \sum_{i,j} {(1-\delta)^2(p_{ij}-q_{ij})^2\over (1-\delta)q_{ij}+\delta/\ell}
\le \sum_{i,j} {(1-\delta)^2(p_{ij}-q_{ij})^2\over (1-\delta)q_{ij}}
= (1-\delta) \sum_{i,j} {(p_{ij}-q_{ij})^2\over q_{ij}}
< \sum_{i}\sum_{j} \ab {(p_{ij}-q_{ij})^2\over q_{ij}}
= \sum_{i}\dchisqr(P_i,Q_i)
\le \dchisqr(P,Q) 
< 0.12\epsilon^2
$
. The second last step is due to \cref{lem-chisq-mult}.

\noindent({\it Proof of (2)}). From \cref{lem-modification}, for $\delta=\epsilon^2/50n$,  it  follows that $\dhelsqr(P,R)\le \epsilon^2/25$ and $\dhelsqr(Q,S)\le \epsilon^2/25$. By triangle inequality we get
$\dhel(P,Q)\le \dhel(R,S)+\dhel(P,R)+\dhel(Q,S)$. It follows that if $\sqrt{2}\dhel(P,Q)\ge\epsilon$ then $\dhel(R,S)\ge \epsilon(1/\sqrt{2}-2/5)$.
Then \cref{cor-noclass} gives $\sum_{i,j} {(r_{ij}-s_{ij})^2\over s_{ij}}=\sum_{i} \dchisqr(R_i,S_i)) \ge 2\dhelsqr(R,S) > 0.18\epsilon^2$.
\end{proof}

At this point it remains to test $\sum_{i,j} {(r_{ij}-s_{ij})^2\over s_{ij}} > 0.18\epsilon^2/10$ versus $< 0.12\epsilon^2/10$, which we perform using the tester of~\citet*{AcharyaDK15}. We state their tester with necessary modifications and prove it in the Appendix.  

\begin{restatable}{theoremRe}{adk}\mbox{\rm (Modified from~\citet*{AcharyaDK15})}\label{thm-adk} Let $m$ be an integer and $0 < \epsilon < 1$ 
be an error parameter. Let $r_1, r_2,\ldots, r_K$ be $K$ non-negative real numbers. Let  $s_1,s_2,\ldots,s_K$ be non-negative real numbers such that $s_i \geq \epsilon^2/50K$.  
For $1\leq i \leq K$, let $N_i \sim {\rm Poi}(mr_i)$ be independent samples from ${\rm Poi}(mr_i)$. Then there exists a test statistic $T,$ computable in time $O(K)$ from inputs $N_i$s and $s_i$s, with the following guarantees. 
\begin{itemize}
\item[-] $\mathrm{E}[T]=m\sum_i {(r_i-s_i)^2\over s_i}$
\item[-] $\mathrm{Var}[T]\le 2K + 7\sqrt{K}\e[T]+4K^{1/4}(\e[T])^{3/2}$, for a constant $c$ and $m\ge c\sqrt{K}/\epsilon^2$.  
\end{itemize}
\end{restatable}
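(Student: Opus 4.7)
The plan is to use the bias-corrected Poisson chi-squared statistic of \citet{AcharyaDK15}, adapted to the setting where $r_i,s_i$ are arbitrary non-negative reals rather than normalized probabilities. Define
\[
T \;=\; \sum_{i=1}^K \frac{(N_i - m s_i)^2 - N_i}{m s_i}.
\]
This is computable in $O(K)$ time. Since $N_i \sim \poi(m r_i)$ has mean and variance both equal to $m r_i$, the correction $-N_i$ cancels the $m r_i$ bias from the square, giving $\e[(N_i - m s_i)^2 - N_i] = m r_i + m^2(r_i-s_i)^2 - m r_i = m^2(r_i - s_i)^2$. Dividing by $m s_i$ and summing yields the claimed expectation.

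For the variance, write $N_i - m s_i = Y_i + \mu_i$ with $Y_i = N_i - m r_i$ (mean zero) and $\mu_i = m(r_i-s_i)$, so that $(N_i - m s_i)^2 - N_i = Y_i^2 + (2\mu_i-1)Y_i + (\mu_i^2 - m r_i)$. Plugging in the Poisson central moments $\e[Y_i^2] = \e[Y_i^3] = m r_i$ and $\e[Y_i^4] = 3(m r_i)^2 + m r_i$ and using independence of the $N_i$'s, a direct calculation produces
\[
\var[T] \;=\; 2 \sum_{i=1}^K \frac{r_i^2}{s_i^2} \;+\; 4 m \sum_{i=1}^K \frac{r_i (r_i - s_i)^2}{s_i^2}.
\]

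The remaining task is to convert this raw variance into a bound involving only $K$ and $\e[T]$. For the first sum I would use $r_i^2 \le 2 s_i^2 + 2(r_i-s_i)^2$, together with the lower bounds $s_i \ge \epsilon^2/(50K)$ and $m \ge c\sqrt{K}/\epsilon^2$; the former yields $2K$ directly, and the latter converts $\sum(r_i-s_i)^2/s_i^2 \le (1/\min_i s_i)\sum(r_i-s_i)^2/s_i$ into an $O(\sqrt{K})\,\e[T]$ contribution. For the second sum, write $r_i \le s_i + |r_i-s_i|$; the $s_i$ piece is exactly $4\e[T]$, and the remaining piece reduces to controlling $m\sum_i |r_i-s_i|^3/s_i^2$.

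This cubic sum is the main obstacle, and its careful treatment is what produces the nonstandard $K^{1/4}(\e[T])^{3/2}$ term. The key idea is to change variables to $u_i = (r_i-s_i)/\sqrt{s_i}$, for which $\|u\|_2^2 = \e[T]/m$ and $|r_i-s_i|^3/s_i^2 = |u_i|^3/\sqrt{s_i}$. Bounding $1/\sqrt{s_i} \le \sqrt{50K}/\epsilon$ and invoking the $\ell_3 \le \ell_2$ monotonicity $\|u\|_3^3 \le \|u\|_2^3$, one obtains
\[
m\sum_i \frac{|r_i-s_i|^3}{s_i^2} \;\le\; \frac{m\sqrt{50K}}{\epsilon}\left(\frac{\e[T]}{m}\right)^{3/2} \;=\; \frac{\sqrt{50K}}{\epsilon\sqrt{m}}\,(\e[T])^{3/2},
\]
and the hypothesis $m \ge c\sqrt{K}/\epsilon^2$ makes $\sqrt{K}/(\epsilon\sqrt{m}) = O(K^{1/4})$. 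Summing all the contributions and choosing $c$ sufficiently large yields $\var[T] \le 2K + 7\sqrt{K}\,\e[T] + 4K^{1/4}(\e[T])^{3/2}$, as claimed.
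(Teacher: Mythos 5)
Your proof is correct and follows the same skeleton as the paper's (same statistic, same expectation computation, same split of the variance into a ``diagonal'' term $\sum_i r_i^2/s_i^2$ and a ``cross'' term $4m\sum_i r_i(r_i-s_i)^2/s_i^2$), but you handle the cross term by a genuinely different route. The paper applies Cauchy--Schwarz to write $4m\sum_i \frac{r_i}{s_i}\cdot\frac{(r_i-s_i)^2}{s_i} \le 4m\sqrt{\sum_i r_i^2/s_i^2}\sqrt{\sum_i (r_i-s_i)^4/s_i^2}$, then uses $\|\cdot\|_2 \le \|\cdot\|_1$ on the second factor and recycles the already-derived bound on $\sum_i r_i^2/s_i^2$. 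You instead split $r_i \le s_i + |r_i-s_i|$, which peels off a clean $4\,\e[T]$ term and isolates a cubic sum $4m\sum_i |r_i-s_i|^3/s_i^2$; the change of variable $u_i = (r_i-s_i)/\sqrt{s_i}$ together with $\|u\|_3 \le \|u\|_2$ then yields the $K^{1/4}(\e[T])^{3/2}$ shape directly, decoupled from the diagonal term. Both decompositions work; yours is arguably cleaner because the two pieces of the variance are bounded independently rather than one feeding into the other. One thing worth flagging in your favor: your variance formula $\var[T] = 2\sum_i r_i^2/s_i^2 + 4m\sum_i r_i(r_i-s_i)^2/s_i^2$ is actually the \emph{correct} one -- the paper's displayed computation contains a typo (it has $\var[N_i^2] = \lambda(1+5\lambda+4\lambda^2)$ where the Poisson identity gives $\lambda(1+6\lambda+4\lambda^2)$), which drops a factor of $2$ on the diagonal term. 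As a consequence, the leading constant in the theorem should read $4K$ rather than $2K$ if one tracks constants honestly; your write-up inherits the paper's $2K$ without remarking on this, and your own intermediate accounting (``the former yields $2K$ directly'') slips by the same factor of $2$ since you forgot the leading $2$ in $2\sum_i r_i^2/s_i^2$. None of this affects the downstream use of the theorem, since the $O(K)$ term is anyway dominated by choosing $c$ large, but you should state that you are proving the bound with a possibly larger absolute constant in front of $K$ rather than silently claiming the paper's exact constant.
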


\paragraph{Remark} The test $T$ of~\citet*{AcharyaDK15} is given by $T=\Sigma_{i=1}^n {(N_i-ms_i)^2 - N_i\over ms_i}$.    
Their paper gives the upper bound 
$\mathrm{Var}[T]\le 4n+9\sqrt{n}\mathrm{E}[T]+{2\over 5}n^{1\over 4}\mathrm{E}[T]^{3/2}$ under the assumption $s_i \ge \epsilon/50n$ for every $i$. In our application, $\ell$ is the alphabet size and we will need the bound to depend on $\ell$. In addition, we also need the bounds to work when $s_i \geq \epsilon^2/50n\ell$. Both these can be achieved by modifying 
their proof.

It remains to sample numbers $N_{ij}\sim \poi(mr_{ij})$ independently for every $i,j$. We do this via poissonization followed by sampling from each coordinate of the product distribution $R$ independently. We present \cref{Algo-chisqtest} with its correctness.

\begin{algorithm2e}
\caption{Given samples from an unknown distribution $R=\prod_{i=1}^nR_i$ and a known distribution $S=\prod_{i=1}^nS_i$ over $\Sigma^n$, decide $\dchisqr(R,S)\le \epsilon^2/9$ (`yes') versus $\sqrt{2}\dhel(R,S)> \epsilon$ (`no'). Let $\ell=|\Sigma|$, $R_i=\langle r_{i1},r_{i2},\dots,r_{i\ell}\rangle$, $S_i=\langle s_{i1},s_{i2},\dots,s_{i\ell}\rangle$ with $s_{ij}\ge \epsilon^2/50n\ell$ for every $j$, for every $i$}
 \label{Algo-chisqtest}
\For{$i=1$ to $n$}{
Sample $N_i\sim \poi(m)$ independently\;
}
$N=\max_i N_i$\;
$X\leftarrow$ Take $N$ samples from $R$\;
\For{$i=1$ to $n$}{
$X_i\leftarrow$ Sequence of symbols in the $i$-th coordinate of first $N_i$ samples of $X$\;
$\langle N_{i1},N_{i1},\dots,N_{i\ell} \rangle\leftarrow $ histogram of symbols in $X_i$\;
}
Compute statistic $T$ of \cref{thm-adk} using $N_{ij}$ and $s_{ij}$ values for every $i,j$\;
\eIf{$T \le 0.15 m\epsilon^2$}{output `yes.'\;}{output `no.'}
\end{algorithm2e}

\begin{proof} ({\em of \cref{1S-UB}})
Let $\ell=|\Sigma|$.
First, we transform the distributions $P$ and $Q$ into the distributions $R$ and $S$ respectively according to
the modification process mentioned in \cref{lem-RSfromPQ}. This gives:
\begin{enumerate}[itemsep=0pt]
\item[-] Each sample from $R$ can be simulated by 1 sample from $P$.
\item[-] $R$ and $S$ are product distributions, $R=\prod_{i=1}^nR_i$ and $S=\prod_{i=1}^nS_i$, where
$R_i=\langle r_{i1},r_{i2},\dots, r_{i\ell}\rangle$ and $S_i=\langle s_{i1},s_{i2},\dots, s_{i\ell}\rangle$ for
every $i$.
\item[-] For every $i,j$, $s_{ij}\ge \epsilon^2/50n\ell$.
\item[-] If $\dchisqr(P,Q) \le \epsilon^2/9$ then $\sum_{i,j} {(r_{ij}-s_{ij})^2\over s_{ij}} < 0.12\epsilon^2$.
\item[-] If $\sqrt{2}\dhel(P,Q)> \epsilon$ then $\sum_{i,j} {(r_{ij}-s_{ij})^2\over s_{ij}} > 0.18\epsilon^2$.
\end{enumerate}
Henceforth, we focus on distinguishing $\sum_{i,j} {(r_{ij}-s_{ij})^2\over s_{ij}}<0.12\epsilon^2$ versus $> 0.18\epsilon^2$, under the assumption $s_{ij}\ge \epsilon^2/50n\ell$ for every $i,j$, by sampling from $R$. We use the tester $T$ of~\citet*{AcharyaDK15} stated in \cref{thm-adk} with $K=n\ell$, for this.
Firstly, note that in \cref{Algo-chisqtest}, the samples $S_i$ is a set of $N_i\sim \poi(m)$ samples from $R_i$, independently for every $i$'s. This is because the set of samples are taken from the product distribution $R=R_1\times R_2 \times \dots \times R_n$ and the $N_i$ values are independent for different $i$'s. Due to Poissonization it follows $N_{ij}\sim \poi(r_{ij})$ independently for every $i,j$. The tester $T$ requires $m\ge c\sqrt{n\ell}/\epsilon^2$, for some constant $c$ and satisfies $\mathrm{E}[T]=m\sum_{i,j} {(r_{ij}-s_{ij})^2\over s_{ij}}$, $\mathrm{Var}[T] \le 2n\ell + 7\sqrt{n\ell}\e[T]+4(n\ell)^{1/4}(\e[T])^{3/2}$.

If $\sum_{i,j} {(r_{ij}-s_{ij})^2\over s_{ij}}<0.12\epsilon^2$ then we get $\mathrm{E}[T]\le 0.12m\epsilon^2$ and
$\mathrm{Var}[T]\le \left({2 \over c^2}+{0.84 \over c}+{4(0.12)^3\over\sqrt{c}}\right)m^2\epsilon^4$, using $m\ge c\sqrt{nl}/\epsilon^2$ and the upper bound for $\mathrm{E}[T]$. By Chebyshev's inequality $T < 0.15m\epsilon^2$ with probability at least 4/5, where $c=\Omega(1)$ is an appropriate constant.

If $\sum_{i,j} {(r_{ij}-s_{ij})^2\over s_{ij}}> 0.18\epsilon^2$ then we get $\mathrm{E}[T]> 0.18m\epsilon^2 \ge 0.18c\sqrt{nl}$ and
$\mathrm{Var}[T]\le ({2\over (0.18c)^2}+{7 \over 0.18c}+{4 \over \sqrt{0.18c}})\mathrm{E}^2[T]$. By Chebyshev's inequality $T > 0.15m\epsilon^2$ with probability at least 4/5, for an appropriate constant $c=\Omega(1)$.

Hence, for some constant $c'$, $m\ge c'\sqrt{n\ell}/\epsilon^2$ suffices for the tester $T$ to distinguish the above two cases. It also follows from the concentration of the Poisson distribution that the number of samples required is $\max_i N_i\le 2m$, except for probability at most $n\cdot \exp(-m)<1/10$ (using union bound).

The histograms can be computed by a single pass over the $n$-dimensional sample set $S$. The statistic $T$ can be computed in time 
$O(n\ell)$. So the time complexity is $O(n\ell+n^{3/2}\sqrt{\ell}/\epsilon^2)$.
\end{proof}

\subsection{$\dhel$-vs-$\dhel$ Tolerant Closeness Tester}\label{sec:heltol-2S}
In this section, we give a tester for distinguishing $\dhel(P,Q)\le\epsilon$ versus $\dhel(P,Q)>3\epsilon$ for two
unknown product distributions $P$ and $Q$ over support $\Sigma^n$. To get \cref{2S-UB}, we rescale $\epsilon$ down to $\epsilon/\sqrt{2}$. We take a testing-by-learning approach: we first learn $P$ and $Q$ in Hellinger distance $\epsilon/2$ using the following known result. Then the Hellinger distance between the learnt distributions can be computed exactly.

\begin{theoremRe} \citep*{AcharyaDK15}\label{thm:HellLearn}
Given samples from an unknown product distribution $D$ over $\Sigma^n$, $\hat{D}$, the product of component-wise empirical distributions on $m$ samples satisfy $\dhel(D,\hat{D}) \le \epsilon$ with $9/10$ probability if $m \ge \Theta(n|\Sigma|/\epsilon^2)$.
\end{theoremRe}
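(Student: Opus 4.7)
The plan is to combine the sub-additivity of squared Hellinger distance over product distributions (already stated in \cref{lem-helloc-equality}) with a per-coordinate expected-error bound, and then apply Markov's inequality. The only nontrivial piece is controlling $\mathbb{E}[\dhelsqr(D_i,\hat D_i)]$ where $\hat D_i$ is the empirical distribution on $\Sigma$ from $m$ i.i.d.\ samples of $D_i$.

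First, by \cref{lem-helloc-equality} applied to $D=\prod_i D_i$ and $\hat D = \prod_i \hat D_i$, we have
\[
\dhelsqr(D,\hat D)\;\le\;\sum_{i=1}^n \dhelsqr(D_i,\hat D_i).
\]
So it suffices to bound each term in expectation by $O(|\Sigma|/m)$ and sum.

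Next, I would bound the per-coordinate expectation using the elementary identity $(\sqrt{a}-\sqrt{b})^2=(a-b)^2/(\sqrt{a}+\sqrt{b})^2\le (a-b)^2/a$ (for $a>0$). Writing $N_j$ for the count of symbol $j\in\Sigma$ in the $m$ samples, $N_j\sim\mathrm{Bin}(m,D_i(j))$ and $\hat D_i(j)=N_j/m$, so for each $j$ with $D_i(j)>0$,
\[
\mathbb{E}\!\left[(\sqrt{D_i(j)}-\sqrt{N_j/m})^2\right]\;\le\;\frac{\mathrm{Var}(N_j/m)}{D_i(j)}\;=\;\frac{1-D_i(j)}{m}.
\]
(The $j$ with $D_i(j)=0$ contribute zero because then $N_j=0$ a.s.) Summing over $j$ gives $\mathbb{E}[\dhelsqr(D_i,\hat D_i)]\le (|\Sigma|-1)/(2m)$, and then summing over $i$ gives $\mathbb{E}[\dhelsqr(D,\hat D)]\le n(|\Sigma|-1)/(2m)$.

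Finally, by Markov's inequality,
\[
\Pr\!\left[\dhelsqr(D,\hat D)>\epsilon^2\right]\;\le\;\frac{n(|\Sigma|-1)}{2m\epsilon^2},
\]
so choosing $m\ge 5n|\Sigma|/\epsilon^2$ makes this probability at most $1/10$, yielding $\dhel(D,\hat D)\le\epsilon$ with probability at least $9/10$, as claimed. There is no real obstacle here; the one step to justify carefully is the per-coordinate inequality $(\sqrt{a}-\sqrt{b})^2\le (a-b)^2/a$ together with the observation that zero-probability symbols never appear in the sample, which avoids any division-by-zero issues. Everything else is sub-additivity and Markov.
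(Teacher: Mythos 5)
Your proof is correct. The paper itself does not prove \cref{thm:HellLearn}; it simply cites \citet{AcharyaDK15}, so there is no in-paper argument to compare against directly. Your route is the natural one: sub-additivity of squared Hellinger for product distributions (\cref{lem-helloc-equality}), a per-coordinate bound $\mathrm{E}[\dhelsqr(D_i,\hat D_i)]\le (|\Sigma|-1)/(2m)$, and Markov. The per-coordinate step is handled via the pointwise inequality $(\sqrt a-\sqrt b)^2 = (a-b)^2/(\sqrt a+\sqrt b)^2 \le (a-b)^2/a$ (valid for $a>0$) followed by the binomial variance $\mathrm{Var}(N_j/m)=D_i(j)(1-D_i(j))/m$, and you are right that symbols of zero probability never appear so cause no division issues; also you correctly carry the $1/2$ from the paper's definition of $\dhelsqr$ so the final constant works out. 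As a point of comparison, a commented-out earlier draft in the paper's source proves the same per-coordinate expectation $\le (N-1)/(2m)$ using instead a lower bound on $\mathrm{E}[\sqrt{\mathrm{Bin}(m,p)}]$, namely $\mathrm{E}[\sqrt X]\ge \sqrt{mp}-(1-p)/(2\sqrt{mp})$; your variance-based argument arrives at the identical bound via more elementary means and avoids invoking that special fact. Both give the same leading constant, so there is no loss.
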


\ignore{
\begin{proof}
Using \cref{def:dists} for $\dhelsqr(\cdot,\cdot)$ distance:
\begin{align*}
1-\dhelsqr(P,Q)
&=\sum_{i_1\in\Sigma} \sqrt{P_1(i_1)Q_1(i_1)}\cdots \sum_{i_n\in\Sigma} \sqrt{P_n(i_n)Q_n(i_n)}
=\prod_{i=1}^n (1-\dhelsqr(P_i,Q_i))
\end{align*}
\end{proof}
}
\begin{proof} ({\em of \cref{2S-UB}})
We first learn $P$ and $Q$ as $\hat{P}$ and $\hat{Q}$ using \cref{thm:HellLearn} such that $\dhel(P,\hat{P})\le \epsilon/2$ and $\dhel(Q,\hat{Q})\le \epsilon/2$, together with 4/5 probability. Conditioned on this. we compute $\dhel(\hat{P},\hat{Q})$ exactly using \cref{lem-helloc-equality}. 

Due to triangle inequality, $\dhel(\hat{P},\hat{Q})\le 2\epsilon$ or not would decide $\dhel(P,Q)\le \epsilon$ or $>3\epsilon$.
\end{proof}
\ignore{
In this section we give a tester for distinguishing $\dhel(P,Q)\le\epsilon$ versus $\dhel(P,Q)>3\epsilon$ for two
unknown product distributions $P$ and $Q$ over support $\Sigma^n$. To get Theorem~\ref{2S-UB}, we rescale $\epsilon$ downto $\epsilon/\sqrt{2}$. We take a testing-by-learning approach. In general, such an approach becomes intractable for high dimensional distributions since due to large support size, it is intractable to compute the distance between the two learnt distributions. But as we show, for product distributions and Hellinger distance this computation is tractable.  

We first present a general learning algorithm of an unknown distribution in Hellinger distance with high probability with near-optimal sample complexity. Such a result for learning distributions in variation distance is well-known, but to the best of our knowledge, the same proof does not extend to the tighter problem of learning in Hellinger distance.
In the Section~\ref{sec:clearning-appendix}, we show an application of this result to learning causal models with known structure, improving upon the bounds of \citet*{AcharyaBDK18}.


\subsubsection{Learning Distributions in Hellinger Distance}\label{sec:hel-learning}

\begin{restatable}{theoremRe}{Hellingerhighprob}\label{thm-Hellinger-highprob}
Let $P$ be a distribution over $N$ items. Then there is an algorithm that takes $O(N\log {1\over\delta}/\epsilon^2)$ samples from $P$ and outputs a distribution $R$ which satisfy $\dhel(P,R) \le \epsilon$ except with error probability at most $\delta$. The running time of the algorithm is $O(N\log N\log^2 {1\over\delta}/\epsilon^2)$.
\end{restatable}

We first prove that the empirical distribution over $O(N/\epsilon^2)$ independent samples is close in square of the Hellinger distance to the unknown distribution in expectation. We will use the following bound on the expectation of square root of a Binomial random variable. 

\begin{factRe}[\citet*{binsqrtlb-SEurl}]\label{sqrt-bin}
Let $X\sim {\mathrm{Bin}}(n,p)$. Then $E(\sqrt{X}) \geq \sqrt{np}-{(1-p)\over 2\sqrt{np}}$
\end{factRe}

\begin{lemmaRe}\label{lem-empirical-Hellinger}
Let $P$ be a distribution over a sample space of $N$ items. Let $R$ be the empirical distribution obtained by taking $m\ge(N-1)/2\epsilon^2$ i.i.d. samples from $P$. Then, ${\rm E}[\dhelsqr(P,R)]\le \epsilon^2$.
\end{lemmaRe}
\begin{proof}
Let $\Omega$ be the sample space of $P$ with $p_i$ being the probability of item $i\in \Omega$. Let $S$ be the set of $m$ independent samples obtained. For each $i \in \Omega$ we
denote by $m_i$ the number of samples in $S$ which are item $i$. Then $m_i$ is distributed according to $\mathrm{Bin}(m,p_i)$. $\dhelsqr(P,R)=1-\sum_i \sqrt{p_im_i/m}$. So ${\rm E}[\dhelsqr(P,R)]=1-\sum_i {\rm E}[\sqrt{p_im_i/m}]=
1-\sum_i \sqrt{p_i/m}{\rm E}[\sqrt{m_i}]$. Note that ${\rm E}[\sqrt{m_i}]\ge \sqrt{mp_i}-(1-p_i)/2\sqrt{mp_i}$ from Fact~\ref{sqrt-bin}. We get ${\rm E}[\dhelsqr(P,R)] \le 1- \sum_i \sqrt{p_i/m}[\sqrt{mp_i}-(1-p_i)/2\sqrt{mp_i}]=1-\sum_i [p_i-(1-p_i)/2m]=(N-1)/2m \leq \epsilon^2$ for  $m \ge (N-1)/2\epsilon^2$.
\end{proof}
We get the following corollary from Markov's inequality.
\begin{corollaryRe}\label{cor-Hellinger-markov}
Let $P$ and $R$ be as in Lemma~\ref{lem-empirical-Hellinger}. Then $\dhel(P,R) \le \sqrt{3}\epsilon$ with probability at least $2/3$.
\end{corollaryRe}

In order to make the error probability arbitrary small, we 
repeat the above construction $k$ time to get $k$ distributions and use a ``clustering trick." The details follow.

\begin{proof}(of Theorem~\ref{thm-Hellinger-highprob})
We use the construction of Lemma~\ref{lem-empirical-Hellinger} $k$ times using independent samples to obtain the distributions $R_i$ for $1\leq i\leq k$. 
From Corollary~\ref{cor-Hellinger-markov}, the output distribution $R_i$ from each repetition satisfies $\dhel(P,R_i) \le \sqrt{3}\epsilon$ with probability at least $2/3$. For a given $R_i$ if this event succeeds we call $R_i$ a `good' distribution. We use the following process to choose the final distribution $R$ from $R_i$s:

\begin{tabbing}
{\sc Amplify}$(R_1,\ldots,R_k)$\\
1. $d_{ij}\leftarrow \dhel(R_i,R_j)$ for every $1\leq i,j \leq k$ \\
2. $count_i\leftarrow |\{j | d_{ij}\le 2\sqrt{3}\epsilon\}|$ \\
3. $R\leftarrow R_{i^*}$ where $i^*$ is the least $i$ such that $count_i\ge 7k/12$  \\
4. Output $R$
\end{tabbing}

\comment{
\begin{enumerate}[itemsep=0pt]
\item $d_{ij}\leftarrow \dhel(R_i,R_j)$ for every $i,j$
\item $count_i\leftarrow |\{j | d_{ij}\le 2\sqrt{3}\epsilon\}|$
\item $R\leftarrow R_{i^*}$ where $i^*$ is the least $i$ such that $count_i\ge 7k/12$
\end{enumerate}
}

By the Chernoff bound, the probability that at least $7k/12$ repetitions produce a good distribution
is at least $(1-\exp(-k/288))$. 
Condition on the success of this event. Due to triangle inequality of Hellinger distance, the pairwise distances between the good distributions are at most $2\sqrt{3}\epsilon$. 
This ensures the existence of a $R_{i^*}$ as above.
On the other hand, since majority of the repetitions produce a good set, any choice of $7k/12$ distributions must
include a good distribution. Due to triangle inequality it follows, $\dhel(P,R_{i^*})\le 3\sqrt{3}\epsilon$.
We choose $k=288 \ln {1\over\delta}$ to make the error probability $\leq \delta$. We also scale down $\epsilon$ by a factor of $3\sqrt{3}$. The time complexity for computing the pairwise distances is ${k\choose 2}N$, since computing the Hellinger distance once can be done in time $N$. The time complexity for obtaining the empirical distributions is $k\cdot m\log N$, where $m$ is the sample size.
\end{proof}

\subsubsection{Learning product distributions with very high probability}
We start with an algorithm for learning a product distribution in Hellinger distance.
\begin{lemmaRe}\label{lem:helprodlearning}
Let $P=\prodp$ be a product distribution over $\Sigma^n$. Then there is a $O(|\Sigma|\log |\Sigma|\ab n^2\log^2 (n/\delta)/\epsilon^2)$ time algorithm that takes $O(n|\Sigma|\log (n/\delta)/\epsilon^2)$ samples from $P$ and produces a product distribution $R=\Pi_{i=1}^n R_i$ such that $\dhel(P,R)\le \epsilon$ with probability at least $(1-\delta)$.
\end{lemmaRe}
\begin{proof}
We learn $P$ component wise, ensuring $\dhel(P_i,R_i) \le \epsilon/\sqrt{n}$ using the algorithm of Theorem~\ref{thm-Hellinger-highprob}, with probability at least $(1-\delta/n)$, for every $i$. Let $\ell=|\Sigma|$. By a union bound, the overall failure probability is $\le\delta$ and the sample complexity is $O(n\ell\log (n/\delta)/\epsilon^2)$. The running time is $O(n\ell\log \ell\log^2 (n/\delta)/\epsilon^2)$ for each marginal.

The following known subadditivity result [e.g.~\citet*{DBLP:conf/colt/DaskalakisP17}] for product distributions then directly implies that $\dhel(P,R) \le \epsilon$.
\end{proof}
\begin{lemmaRe}\label{hel-subadditive}
Let $P=\prodp$ and $Q=\prodq$ be two product distributions, both over the common sample space $\Sigma^n$. Then
$\dhelsqr(P,Q)\le \sum_i \dhelsqr(P_i,Q_i)$
\end{lemmaRe}

\begin{corollaryRe}\label{cor:proddtvlearn}
Let $P=\prodp$ be a product distribution over $\Sigma^n$. Then there is a $O(|\Sigma|\log |\Sigma|\ab n^2\log^2 (n/\delta)/\epsilon^2)$ time algorithm that takes $O(n|\Sigma|\log (n/\delta)/\epsilon^2)$ samples from $P$ and produces a product distribution $R=\Pi_{i=1}^n R_i$ such that $\dtv(P,R)\le \epsilon$ with probability at least $(1-\delta)$.
\end{corollaryRe}
\begin{proof}
Follows from $\dtv(P,R) \le \sqrt{2}\dhel(P,R)$ and Lemma~\ref{lem:helprodlearning}.
\end{proof}

\ignore{
\paragraph{Remark} Note that using the relation $\dtv(P,R) \le \sum_i\dtv(P_i,R_i)$ for two product distributions $P=\prodp$ and $R=\Pi_{i=1}^n R_i$ it is possible to show a sample complexity of $O(n^2(|\Sigma|+\log (n/\delta))/\epsilon^2)$, since a distribution over $\Sigma$ can be learnt in $\dtv$ distance $\epsilon$ using $O((|\Sigma|+\log {1\over \delta})/\epsilon^2)$ samples. Here we show an improved dependence on $n$, by going through Hellinger distance. This also extends the $\dtv$ learning result for product distributions that follows from~\citet*{DBLP:conf/colt/CanonneDKS17} (Section A.1). For a constant success probability and for the boolean alphabet their result implies a sample complexity of $O(n\log n/\epsilon^2)$ for learning product distributions.
}
\paragraph{Remark} A $\dtv$-learning algorithm with success probability 2/3 for product distributions over $\{0,1\}^n$ with sample complexity $O(n\log n/\epsilon^2)$ follows from~\citet*{DBLP:conf/colt/CanonneDKS17} (Section A.1). Corollary~\ref{cor:proddtvlearn} extends it for the case when the success probability is $1-\delta$ and the support is $\Sigma^n$. It goes through a $\dhel$-learning algorithm using Theorem~\ref{thm-Hellinger-highprob} and $\dhel$-subadditivity. Going through a $\dtv$-learning algorithm and $\dtv$-subadditivity ($\dtv(P,R) \le \sum_i\dtv(P_i,R_i)$)  would give a sample complexity of $O(n^2(|\Sigma|+\log (n/\delta))/\epsilon^2)$.

\subsubsection{Closeness Tester via Learning}
We need the following localization equality result for efficiently computing the Hellinger distance of a pair of known product distributions.

\begin{restatable}{lemmaRe}{helequality}\label{lem-helloc-equality}
Let $P=\prod_{i=1}^n P_i$ and $Q=\prod_{i=1}^n Q_i $ be two distributions over $\Sigma^n$. It holds that $1-\dhelsqr(P,Q)= \Pi_{i=1}^n (1-\dhelsqr(P_i,Q_i))$.
\end{restatable}
\begin{proof}
Using Definition \ref{def:dists} for $\dhelsqr(\cdot,\cdot)$ distance:
\begin{align*}
1-\dhelsqr(P,Q)
&=\sum_{i_1\in\Sigma} \sqrt{P_1(i_1)Q_1(i_1)}\cdots \sum_{i_n\in\Sigma} \sqrt{P_n(i_n)Q_n(i_n)}
=\prod_{i=1}^n (1-\dhelsqr(P_i,Q_i))
\end{align*}
\end{proof}

\ignore{Our algorithm for closeness testing is  presented below.
\begin{algorithm2e}
\SetAlgoLined
\caption{Given two unknown distributions $P=\prod_{i=1}^n P_i$ and $Q=\prod_{i=1}^n Q_i$  over $\Sigma^n$, decide $\dhel(P,Q)\le \epsilon$ (`yes') versus $\dhel(P,Q)\ge 3\epsilon$ (`no')} 
$dist=1$\;
\For{$i=1$ to $n$}{
$R_i\leftarrow$ Learn $P_i$ up to Hellinger distance $\epsilon/3\sqrt{n}$ and error probability $1/n^2$ using Algorithm of Theorem~\ref{thm-Hellinger-highprob}\; 
$S_i\leftarrow$ Learn $Q_i$ up to Hellinger distance $\epsilon/3\sqrt{n}$ and error probability $1/n^2$ using Algorithm of Theorem~\ref{thm-Hellinger-highprob}\;
$dist\leftarrow dist*(1-\dhelsqr(R_i,S_i))$\;
}
$dist\leftarrow \sqrt{1-dist}$\;
If $dist\le 5\epsilon/3$ output `yes'\;
Else output `no'\; \label{Algo-heltest}
\end{algorithm2e}
}
\begin{proof}(of Theorem~\ref{2S-UB})
Given the two unknown distributions $P$ and $Q$ we first learn them as $R$ and $S$, satisfying $\sqrt{2}\dhel(P,R)\le \epsilon/12$ and $\sqrt{2}\dhel(Q,S)\le \epsilon/12$, using the algorithm of Lemma~\ref{lem:helprodlearning}. It remains to distinguish $\sqrt{2}\dhel(R,S)\le \epsilon/2$ versus $\sqrt{2}\dhel(R,S)> 5\epsilon/6$. We compute $\dhel(R,S)$ exactly using Lemma~\ref{lem-helloc-equality}.
\end{proof}

\paragraph{Remark.} The technique introduced in this section could give an analogous tolerant tester for any distance $dist$ for which the following properties hold: 1) it satisfies triangle inequality, 2) it has a localization equality and subadditivity as in Lemma~\ref{lem-helloc-equality} and Lemma~\ref{hel-subadditive} respectively, and 3) it has an efficient learning algorithm for small support size distributions. 

}

\section{Lower Bounds}\label{sec-lowerbounds}
In this section, we give lower bounds for tolerant testing of product distributions. Our lower bounds use a reduction from testing the class of unstructured distributions over $n$ items to testing the class of product distributions over $\{0,1\}^n$, given by~\citet*{DBLP:conf/colt/CanonneDKS17} (Section 4.5 of their paper). However, in order to apply this reduction, we need to establish certain new bounds relating the distances in the unstructured setting to the setting of product distribution. 
We first define how to construct a product distribution from the corresponding unstructured distribution. In particular, for a $\delta<1$, this construction produces a product distribution $F_{\delta}(P)$ over $\{0,1\}^n$ from a given distribution $P$ over $n$ symbols. 

\begin{definitionRe}\label{def-fdeltap}{\mbox{\rm (Construction of $F_\delta(P)$)}}
Let $P$ be a distribution over a sample space of $n$ items and $0<\delta\le 1$ be a constant. Let $S$ be a random set of $\poi(\delta)$ samples from $P$. For every item $i\in [n]$, let $x_i$ be the indicator variable such that $x_i=1$ iff $i$ appears in $S$. Let $F_{\delta}(P)$ be the joint distribution of $\langle x_1,x_2,\dots,x_n\rangle$ over the sample space $\{0,1\}^n$.
\end{definitionRe}

The following property can be observed using the property of Poissonization.

\begin{factRe}\label{fact-fdeltap}
Let $P$ be a distribution over a sample space of $n$ items with probability vector $\langle p_1,p_2,\dots,p_n\rangle$ and $0<\delta\le 1$ be a constant. Then $F_{\delta}(P)$ is a product distribution such that $F_{\delta}(P)=\prod_{i=1}^{n}F_{\delta}(P_i)$ where  $F_{\delta}(P_i)\sim \bern(1-e^{-\delta p_i})$.
\end{factRe}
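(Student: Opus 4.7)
The plan is to prove both claims of the fact directly by exploiting the standard Poissonization identity for sampling with a Poisson-distributed sample size. Concretely, I want to identify $F_\delta(P)$ coordinate-by-coordinate with independent Bernoulli random variables.

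First I would set up notation. Let $N \sim \poi(\delta)$ be the random sample size, and for each $i \in [n]$ let $N_i$ be the number of times item $i$ appears in the multiset $S$. By the standard Poissonization fact, since $N \sim \poi(\delta)$ and the items are drawn i.i.d.\ from $P$ with $\pr[i] = p_i$, the counts $N_1, N_2, \ldots, N_n$ are \emph{mutually independent}, with $N_i \sim \poi(\delta p_i)$. This is the single nontrivial ingredient; I would either cite it as a standard fact or, if a short justification is desired, verify it by computing $\pr[N_1=k_1,\dots,N_n=k_n]$ via conditioning on $N = \sum k_i$ and using the Poisson pmf together with the multinomial expansion.

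Next I would push this identity through the indicator map. By construction, $x_i = \mathbf{1}[i \in S] = \mathbf{1}[N_i \ge 1]$, so each $x_i$ is a deterministic function of $N_i$. Since the $N_i$'s are independent, the $x_i$'s are independent as well, and therefore the joint distribution of $(x_1,\dots,x_n)$ factorizes: $F_\delta(P) = \prod_{i=1}^n F_\delta(P_i)$, where $F_\delta(P_i)$ is the marginal law of $x_i$.

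Finally I would compute the marginal. Since $N_i \sim \poi(\delta p_i)$,
\[
\pr[x_i = 0] = \pr[N_i = 0] = e^{-\delta p_i},
\]
so $\pr[x_i = 1] = 1 - e^{-\delta p_i}$, i.e.\ $F_\delta(P_i) \sim \bern(1 - e^{-\delta p_i})$. This completes both conclusions of the fact. The only subtle point — and the one I would be most careful about — is invoking independence of the $N_i$'s under Poisson sample size; the rest is mechanical.
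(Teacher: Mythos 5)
Your proof is correct and uses exactly the Poissonization argument the paper alludes to (the paper gives no explicit proof, stating only that the fact "can be observed using the property of Poissonization"). Identifying $x_i = \mathbf{1}[N_i \geq 1]$ with $N_i \sim \poi(\delta p_i)$ mutually independent, and reading off $\pr[x_i = 0] = e^{-\delta p_i}$, is precisely the intended reasoning.
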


We use the following crucial lemma.
\begin{restatable}{lemmaRe}{relFdelP}\label{lem-dtv-pfdeltap}
For any $0<\delta\le 1$ and distributions $P,Q$, $\dtv(F_{\delta}(P),F_{\delta}(Q))\ge \delta e^{-\delta} \dtv(P,Q)$, with equality holding iff $P=Q$.
\end{restatable}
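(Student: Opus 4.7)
The plan is to exhibit a single well-chosen test event $A \subseteq \{0,1\}^n$ whose probability gap between $F_\delta(P)$ and $F_\delta(Q)$ already realizes the claimed lower bound, and then conclude by the variational characterization of total variation.

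Concretely, I would start by invoking Fact \ref{fact-fdeltap} to write $F_\delta(P)_i = \bern(1-e^{-\delta p_i})$ and $F_\delta(Q)_i = \bern(1-e^{-\delta q_i})$ independently across $i$. Let $I = \{i \in [n] : p_i > q_i\}$ and set $A = \{x \in \{0,1\}^n : x_i = 1 \text{ for some } i \in I\}$. By independence across coordinates,
\[
F_\delta(P)(A^c) = \prod_{i \in I} e^{-\delta p_i} = e^{-\delta s_P}, \qquad F_\delta(Q)(A^c) = e^{-\delta s_Q},
\]
where $s_P := \sum_{i \in I} p_i$ and $s_Q := \sum_{i \in I} q_i$. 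Recall that the TV distance admits the classical representation $\dtv(P,Q) = \sum_{i \in I}(p_i-q_i) = s_P - s_Q$ (this is where the choice of $I$ pays off). Hence
\[
\dtv(F_\delta(P), F_\delta(Q)) \;\ge\; |F_\delta(P)(A) - F_\delta(Q)(A)| \;=\; e^{-\delta s_Q} - e^{-\delta s_P}.
\]

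Next I would apply the mean-value theorem to the function $t \mapsto e^{-\delta t}$ on the interval $[s_Q, s_P]$: there exists $\xi \in (s_Q, s_P)$ with $e^{-\delta s_Q} - e^{-\delta s_P} = \delta e^{-\delta \xi}(s_P - s_Q)$. Since $s_P \le \sum_i p_i = 1$, we have $\xi \le 1$, so $e^{-\delta \xi} \ge e^{-\delta}$, yielding
\[
\dtv(F_\delta(P), F_\delta(Q)) \;\ge\; \delta e^{-\delta}\,(s_P - s_Q) \;=\; \delta e^{-\delta}\,\dtv(P,Q).
\]

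For the equality condition: if $P = Q$ both sides are $0$. Conversely, if $P \ne Q$ then $I$ is nonempty and $s_P > s_Q$, so the MVT point $\xi$ lies strictly inside $[s_Q, s_P]$; since $s_P \le 1$ and $s_Q < s_P$, we get $\xi < 1$ strictly, giving $e^{-\delta\xi} > e^{-\delta}$ and hence strict inequality. I do not anticipate a real obstacle here; the only subtle point is recognizing that the "sum-over-$I$" event is the right one so that the $A^c$-probabilities collapse into a single exponential via independence, and that $\sum_{i\in I}(p_i-q_i)$ is exactly $\dtv(P,Q)$.
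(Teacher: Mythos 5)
Your proof is correct, and it takes a genuinely different route from the paper's. The paper lower bounds $\dtv(F_\delta(P),F_\delta(Q))$ by restricting the $\ell_1$ sum to the $n$ singleton events $\{e_i\}$, uses the identity $\prod_j e^{-\delta p_j}=e^{-\delta}$ to reduce each term to $e^{-\delta}\,\lvert e^{\delta p_i}-e^{\delta q_i}\rvert$, and then expands the exponentials as power series, observing that every term carries the same sign as $p_i-q_i$ so the series is bounded below by its first term. You instead pick the single optimal event $A=\{x: \exists\, i\in I,\ x_i=1\}$ for $I=\{i:p_i>q_i\}$, exploit the product structure so that $F_\delta(\cdot)(A^c)$ collapses to a single exponential of the aggregate mass $s_P,s_Q$, and then apply the mean value theorem to $t\mapsto e^{-\delta t}$ on $[s_Q,s_P]\subseteq[0,1]$. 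Your route buys a cleaner closed-form bound (one MVT application instead of a termwise series comparison) and, more importantly, it delivers the stated equality characterization essentially for free, since the MVT point satisfies $\xi<s_P\le 1$ strictly whenever $P\ne Q$; the paper's proof never actually addresses the ``equality iff $P=Q$'' clause, though one could extract it by noting that all power-series terms beyond the first are strictly positive when $p_i\ne q_i$. The paper's route, on the other hand, keeps everything coordinatewise, which matches the structure of the other lemmas in that section ($\dchisqr$ and $\dkl$ versions of the same reduction), so it is arguably more uniform in presentation. Both arguments are sound.
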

\begin{proof}
Let $P=\langle p_1,\dots,p_i,\dots,p_n \rangle$ and $Q=\langle q_1,\dots,q_i,\dots,q_n \rangle$ be the probability values of $P$ and $Q$.
\begin{align*}
\dtv(F_{\delta}(P),F_{\delta}(Q))&=\sum_{x \in \{0,1\}^n} |F_{\delta}(P)(x)-F_{\delta}(Q)(x)|\\ 
&\ge \sum_{i=1}^n |F_{\delta}(P)(e_i)-F_{\delta}(Q)(e_i)| &&\tag{unit vector $e_i$ has $i$-th value 1}\\
&= \sum_{i=1}^n |(1-e^{-\delta p_i})\Pi_{j\neq i} e^{-\delta p_j}-(1-e^{-\delta q_i})\Pi_{j\neq i} e^{-\delta q_j}|\\
&= \sum_{i=1}^n e^{-\delta} |e^{\delta p_i}-e^{\delta q_i}| &&\tag{Since $\Pi_{j} e^{-\delta p_j}=\Pi_{j} e^{-\delta q_j}=e^{-\delta}$}\\
&= e^{-\delta} \sum_{i=1}^n |\delta(p_i-q_i)+\delta^2(p_i^2-q_i^2)/2!+\dots+\delta^j(p_i^j-q_i^j)/j!+\dots|.
\end{align*}
We analyze the expression under modulus under two cases: 1) if $p_i > q_i$, it is more than $\delta(p_i-q_i)$, 2) if $p_i < q_i$, it is more than $\delta(q_i-p_i)$.
\begin{align*}
\dtv(F_{\delta}(P),F_{\delta}(Q))&\ge e^{-\delta}\sum_{i=1}^n |\delta(p_i-q_i)|\\
&=\delta e^{-\delta} \dtv(P,Q).
\end{align*} 
\end{proof}

\ignore{
\begin{proof}
Let $P=\langle p_1,\dots,p_i,\dots,p_n \rangle$ and $Q=\langle q_1,\dots,q_i,\dots,q_n \rangle$ be the probability values of $P$ and $Q$.
\begin{align*}
\dtv(F_{\delta}(P),F_{\delta}(Q))&=\sum_{x \in \{0,1\}^n} |F_{\delta}(P)(x)-F_{\delta}(Q)(x)|\\ 
&\ge \sum_{i=1}^n |\dtv(F_{\delta}(P)(e_i)-F_{\delta}(Q))(e_i)| &&\tag{unit vector $e_i$ has $i$-th value 1}\\
&= \sum_{i=1}^n |(1-e^{-\delta p_i})\Pi_{j\neq i} e^{-\delta p_j}-(1-e^{-\delta q_i})\Pi_{j\neq i} e^{-\delta q_j}|\\
&= \sum_{i=1}^n e^{-\delta} |e^{\delta p_i}-e^{\delta q_i}| &&\tag{Since $\Pi_{j} e^{-\delta p_j}=\Pi_{j} e^{-\delta q_j}=e^{-\delta}$}\\
&= e^{-\delta} \sum_{i=1}^n |\delta(p_i-q_i)+\delta^2(p_i^2-q_i^2)/2!+\dots+\delta^j(p_i^j-q_i^j)/j!+\dots| 
\end{align*}
We analyze the expression under modulus under two cases: 1) if $p_i > q_i$, it is $>\delta(p_i-q_i)$,
2) if $p_i < q_i$, it is $>\delta(q_i-p_i)$.
\begin{align*}
\dtv(F_{\delta}(P),F_{\delta}(Q))&\ge e^{-\delta}\sum_{i=1}^n |\delta(p_i-q_i)|\\
&=\delta e^{-\delta} \dtv(P,Q)
\end{align*} 
\end{proof}
}

\subsection{Hardness of $\dchisqr$-vs-$\dtv$ Tolerant Closeness Testing}
Here we show that for two unknown product distribution $P,Q$ over $\{0,1\}^n$, distinguishing $\dchisqr(P,Q)\le \epsilon^2/9$ versus $\dtv(P,Q)> \epsilon$, for a constant $\epsilon$, can not be decided in general with a truly sublinear sample complexity. We use a reduction to the following difficult problem, for hardness of $\chi^2$-tolerance for closeness testing of unstructured distributions over $n$ items, given in~\citet*{Daskalakis:2018:DDS:3174304.3175479}. We restate the theorem with changes in the constants. 
\begin{theoremRe}\label{thm-unstr-chisq2sample}
There exists a constant $0<\epsilon<1$ and three distributions $P^{yes}, P^{no}$ and $Q$, each over the sample space $[n]$ such that: (1) $\dchisqr(P^{yes},Q)\le \epsilon^2/216$, whereas $\dtv(P^{no},Q)\ge \epsilon$ and (2) given only sample accesses to one of $P^{yes}$ or $P^{no}$, and $Q$, distinguishing $P^{yes}$ versus $P^{no}$ with probability $>4/5$, requires $\Omega(n/\log n)$ samples.
\end{theoremRe}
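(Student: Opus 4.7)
The plan is to follow the moment-matching / Poissonization framework of Valiant and Valiant, adapted so that the ``yes'' case is $\chi^2$-close rather than merely $\dtv$-close. I would fix $Q$ to be the uniform distribution on $[n]$. Since $Q$ is symmetric and known by the tester (having sample access to $Q$ only makes the problem easier than revealing $Q$), the lower bound reduces to lower bounding identity testing against uniform.

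I would construct $P^{yes}$ and $P^{no}$ as typical draws from two random ensembles $\mathbf{P}^{yes}, \mathbf{P}^{no}$ built coordinate-wise: for each $i \in [n]$, independently draw a scaling $X_i$ from a measure $\mu_y$ (yes ensemble) or $\mu_n$ (no ensemble), set $p_i = X_i / n$, and renormalize at the end. I would choose $\mu_y$ supported in a narrow window around $1$ of width $O(\epsilon)$, so that with high probability $\dchisqr(P^{yes},Q) = n \sum_i (p_i - 1/n)^2 \approx \var_{\mu_y}(X)$ is at most $\epsilon^2/216$. In contrast I would choose $\mu_n$ bimodal, supported near two values $\alpha \ll 1$ and $\beta \gg 1$ with mean $1$, so that a constant fraction of coordinates satisfies $|X_i - 1| \gg \epsilon$ and $\dtv(P^{no},Q) = \tfrac{1}{2n}\sum_i |X_i - 1|$ exceeds $\epsilon$ with high probability.

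The heart of the argument is to make the two ensembles indistinguishable from $m$ samples when $m = o(n/\log n)$. By Poissonization, the sufficient statistic reduces to independent counts $N_i \sim \poi(m p_i)$, so the chi-squared divergence between the two samplers' output distributions factorizes across coordinates. The per-coordinate chi-squared between the Poisson mixtures $\e_{X\sim \mu_y}\!\left[\poi(mX/n)\right]$ and $\e_{X\sim \mu_n}\!\left[\poi(mX/n)\right]$ expands as a series in the moment gaps $\e_{\mu_y}[X^k] - \e_{\mu_n}[X^k]$ weighted by $(m/n)^{2k}$. I would therefore choose $\mu_y$ and $\mu_n$ to have matching first $k$ moments for $k = \Theta(\log n / \log\log n)$, so that the series starts at order $k+1$; summing the $n$ per-coordinate contributions and using $\dtv^2 \le \tfrac12 \dchisqr$ (via \cref{lem-distcompare}) yields $\dtv = o(1)$ between the induced sample distributions as long as $m = o(n/\log n)$.

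The main obstacle is the joint design of $\mu_y$ and $\mu_n$: they must simultaneously (i) match $k = \Theta(\log n / \log\log n)$ moments, (ii) give variance at most $\epsilon^2/216$ for $\mu_y$, and (iii) give $\e_{\mu_n}|X - 1| \ge 2\epsilon$. Such measures exist by a standard application of Chebyshev-polynomial duality for the truncated moment problem, producing two measures on a bounded interval whose low moments agree but whose first absolute moments differ by a constant. Once the measures are fixed, concentration of $\dchisqr(P^{yes},Q)$ around $\var_{\mu_y}(X)$ and of $\dtv(P^{no},Q)$ around its mean follows from coordinate-wise independence and McDiarmid-type tail bounds, and the $O(1/\sqrt{n})$ fluctuations of $\sum_i X_i / n$ that arise from renormalization are absorbed into constants by conditioning on the high-probability event that this sum is close to $1$.
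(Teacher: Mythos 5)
The paper does not prove this theorem; it cites it directly from Daskalakis, Kamath, and Wright (\citet*{Daskalakis:2018:DDS:3174304.3175479}), restated with adjusted constants. So your proposal is attempting a from-scratch derivation of a black-box ingredient. That is fine in principle, but the approach has two fatal flaws.

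\textbf{The reduction to identity testing is backwards for this problem.} You propose to fix $Q$ uniform and then lower bound identity testing against uniform, on the grounds that revealing $Q$ only makes the tester's job easier. That logic is sound in one direction (a lower bound for identity testing would imply a lower bound for closeness testing), but it cannot produce the desired bound, because $\dchisqr$-versus-$\dtv$ identity testing against a \emph{known} $Q$ admits an $O(\sqrt{n}/\epsilon^2)$ algorithm (Acharya, Daskalakis, and Kamath~\citet*{AcharyaDK15}; this is also reflected in the paper's own Table~\ref{table:identity}). The $\Omega(n/\log n)$ lower bound in \cref{thm-unstr-chisq2sample} is genuinely a \emph{closeness} phenomenon: it relies crucially on $Q$ being accessible only via samples, and on $Q$ itself being a non-trivial (non-uniform) distribution. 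By fixing $Q$ to be uniform and known, you have moved to a strictly easier problem for which the claimed lower bound is false.

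\textbf{The moment-matching construction is internally inconsistent.} Even setting the reduction aside, the measures $\mu_y$ and $\mu_n$ you ask for cannot exist. If $\mu_y$ and $\mu_n$ match the first two moments and both have mean $1$, then $\var_{\mu_n}(X) = \var_{\mu_y}(X) \le \epsilon^2/216$. Under your coordinate-wise model $p_i = X_i/n$ with $Q$ uniform, $\dchisqr(P^{no},Q) \approx \var_{\mu_n}(X) \le \epsilon^2/216$. But then \cref{lem-distcompare} gives $\dtv(P^{no},Q) \le \sqrt{\dchisqr(P^{no},Q)} \le \epsilon/\sqrt{216} < \epsilon$, directly contradicting the requirement $\dtv(P^{no},Q) \ge \epsilon$. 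Equivalently, Cauchy--Schwarz gives $\e_{\mu_n}\lvert X-1\rvert \le \sqrt{\var_{\mu_n}(X)}$, so the first absolute moment of $\mu_n$ cannot exceed its standard deviation; Chebyshev-polynomial duality does not and cannot circumvent a first-versus-second moment inequality. In the actual Valiant--Valiant~\citet*{ValiantV10} construction that underlies the cited theorem, $Q$ is \emph{not} uniform, the per-element probabilities range up to $\Theta(\log n/n)$ rather than being pinned near $1/n$, and the two $\chi^2$ distances to $Q$ differ by exactly exploiting the \emph{joint} arrangement of $p_i$ versus $q_i$ (which is invisible to histogram moments when $Q$ is non-uniform). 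Your construction, with uniform $Q$ and $p_i = \Theta(1/n)$, collapses precisely those degrees of freedom and leaves no way to escape the Cauchy--Schwarz obstruction.
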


We use the following important property about the $\chi^2$-distance between the reduced distributions. 

\begin{restatable}{lemmaRe}{chisqFdelta}\label{lem-chisqFdeltaPQ}
$\dchisqr(F_{\delta}(P),F_{\delta}(Q))\le \exp(4 \delta \cdot \chi^2(P,Q))-1$, for any $0<\delta\le 1$.
\end{restatable}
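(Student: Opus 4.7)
\medskip

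\noindent\textbf{Proof proposal for \cref{lem-chisqFdeltaPQ}.}

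The plan is to use the multiplicativity of $1 + \dchisqr$ for product distributions (\cref{lem-chisq-mult}) together with a per-coordinate bound comparing $\dchisqr(F_{\delta}(P_i), F_{\delta}(Q_i))$ to $(p_i - q_i)^2/q_i$. By \cref{fact-fdeltap}, each $F_{\delta}(P_i)$ is $\bern(1-e^{-\delta p_i})$, so writing $a = e^{-\delta p_i}$, $b = e^{-\delta q_i}$, a direct computation for Bernoullis gives
\[
\dchisqr(F_\delta(P_i), F_\delta(Q_i)) \;=\; \frac{(a-b)^2}{b(1-b)}.
\]
The key claim I would prove is that for every $i$,
\[
\dchisqr(F_\delta(P_i), F_\delta(Q_i)) \;\le\; 4\delta \cdot \frac{(p_i-q_i)^2}{q_i}.
\]

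To establish this claim, let $\alpha = \delta p_i$, $\beta = \delta q_i$, $\gamma = \alpha-\beta$, all of which lie in $[0,1]$ (respectively $[-1,1]$) since $\delta, p_i, q_i \in [0,1]$. Algebraic rewriting yields
\[
\frac{(a-b)^2}{b(1-b)} \cdot \frac{q_i}{\delta(p_i-q_i)^2}
\;=\; \frac{\beta\, e^{-\beta}}{1 - e^{-\beta}} \cdot \frac{(e^{-\gamma}-1)^2}{\gamma^2}.
\]
I would then verify the two elementary inequalities (i) $\beta/(e^\beta - 1) \le 1$ for $\beta \ge 0$ (immediate from $e^\beta \ge 1+\beta$), and (ii) $|(e^{-\gamma}-1)/\gamma| \le e-1$ for $\gamma \in [-1,1]$, since the function $\gamma \mapsto (1-e^{-\gamma})/\gamma$ is decreasing on $[-1,1]$ with maximum $e-1$ at $\gamma = -1$. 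Together these give the ratio $\le (e-1)^2 < 4$, proving the per-coordinate claim.

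With the per-coordinate bound in hand, I would invoke \cref{lem-chisq-mult} to write
\[
1 + \dchisqr(F_\delta(P), F_\delta(Q)) \;=\; \prod_{i=1}^n \bigl(1 + \dchisqr(F_\delta(P_i), F_\delta(Q_i))\bigr),
\]
and then use $1+x \le e^x$ on each factor followed by the per-coordinate bound:
\[
\prod_i (1+\dchisqr_i) \;\le\; \exp\!\Bigl(\textstyle\sum_i \dchisqr(F_\delta(P_i), F_\delta(Q_i))\Bigr)
\;\le\; \exp\!\Bigl(4\delta \textstyle\sum_i (p_i-q_i)^2/q_i\Bigr)
\;=\; \exp(4\delta\,\dchisqr(P,Q)).
\]
Subtracting $1$ yields the statement of the lemma.

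The main obstacle is step (ii) above: getting a clean constant that is at most $4$ requires the somewhat tight estimate $((e^{-\gamma}-1)/\gamma)^2 \le (e-1)^2$ over $\gamma \in [-1,1]$, because naive bounds such as $|e^{-\gamma}-1| \le |\gamma|$ (valid only for $\gamma \ge 0$) fail in the regime $p_i < q_i$ and $\delta q_i$ close to $1$. Once this monotonicity observation is in place, the rest is bookkeeping.
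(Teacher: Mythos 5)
Your proof is correct and follows essentially the same route as the paper: both factor via \cref{lem-chisq-mult}, bound each factor with $1+x\le e^x$, compute the Bernoulli $\chi^2$ as $(e^{-\delta p_i}-e^{-\delta q_i})^2/(e^{-\delta q_i}(1-e^{-\delta q_i}))$, and then bound this per-coordinate quantity by $4\delta(p_i-q_i)^2/q_i$ via the elementary estimates $e^x-1\ge x$ and $|e^x-1|\lesssim |x|$ on $[-1,1]$. The only cosmetic difference is that you obtain the constant through $(e-1)^2<4$ (via the monotonicity of $\gamma\mapsto(1-e^{-\gamma})/\gamma$) whereas the paper uses the slightly looser $|e^x-1|\le 2|x|$ directly.
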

\begin{proof}
From \cref{fact-fdeltap}, both $F_{\delta}(P)$ and $F_{\delta}(Q)$ are product distributions, the 
distribution of the $i$-th component being $F_{\delta}(P_i)$ and $F_{\delta}(Q_i)$ respectively. Let $P=\langle p_1,p_2,\dots,p_n\rangle$ and $Q=\langle q_1,q_2,\dots,q_n\rangle$. Then $F_{\delta}(P_i)\sim \bern(1-e^{-\delta p_i})$ and $F_{\delta}(Q_i)\sim \bern(1-e^{-\delta q_i})$. 
\begin{align*}
\dchisqr(F_{\delta}(P),F_{\delta}(Q))&=\prod_i (1+\dchisqr(F_{\delta}(P_i),F_{\delta}(Q_i)))-1
&&\tag{From \cref{lem-chisq-mult}}\\
& \le \prod_i \exp(\dchisqr(F_{\delta}(P_i),F_{\delta}(Q_i))) -1 &&\tag{Since $e^x\ge (1+x)$ for $x\ge 0$}\\
&= \exp(\sum_i \dchisqr(F_{\delta}(P_i),F_{\delta}(Q_i)))-1\\
&= \exp\left(\sum_i (e^{-\delta p_i}-e^{-\delta q_i})^2\left({1\over e^{-\delta q_i}}+{1\over 1-e^{-\delta q_i}}\right)\right) -1
&& \tag{Since $F_{\delta}(P^{yes}_i)\sim \bern(1-e^{-\delta p_i})$ and $F_{\delta}(Q_i)\sim \bern(1-e^{-\delta q_i})$}\\
&= \exp(\sum_i (e^{-\delta p_i}-e^{-\delta q_i})^2/e^{-\delta q_i}(1-e^{-\delta q_i})) -1\\
&= \exp(\sum_i (e^{\delta (q_i-p_i)}-1)^2/(e^{\delta q_i}-1)) -1\\
&= \exp(\sum_i (e^{\delta (q_i-p_i)}-1)^2/(e^{\delta q_i}-1)) -1\\
&\le \exp(\sum_i (2\delta (p_i-q_i))^2/\delta q_i)) -1
&&\tag{Since $(e^x-1)\ge x$ and $(|e^{x}-1|\le 2|x|$ for $0<|x|<1$}\\
&= \exp(4\delta \chi^2(P,Q))-1.\\
\end{align*}
\end{proof}

\ignore{
\begin{proof}
From \cref{fact-fdeltap}, both $F_{\delta}(P)$ and $F_{\delta}(Q)$ are product distributions, the 
distribution of the $i$-th component being $F_{\delta}(P_i)$ and $F_{\delta}(Q_i)$ respectively. Let $P=\langle p_1,p_2,\dots,p_n\rangle$ and $Q=\langle q_1,q_2,\dots,q_n\rangle$. Then $F_{\delta}(P_i)\sim \bern(1-e^{-\delta p_i})$ and $F_{\delta}(Q_i)\sim \bern(1-e^{-\delta q_i})$. 
\begin{align*}
\dchisqr(F_{\delta}(P),F_{\delta}(Q))&=\prod_i (1+\dchisqr(F_{\delta}(P_i),F_{\delta}(Q_i)))-1
&&\tag{From \cref{lem-chisq-mult}}\\
& \le \prod_i \exp(\dchisqr(F_{\delta}(P_i),F_{\delta}(Q_i))) -1 &&\tag{Since $e^x\ge (1+x)$ for $x\ge 0$}\\
&= \exp(\sum_i \dchisqr(F_{\delta}(P_i),F_{\delta}(Q_i)))-1\\
&= \exp\left(\sum_i (e^{-\delta p_i}-e^{-\delta q_i})^2\left({1\over e^{-\delta q_i}}+{1\over 1-e^{-\delta q_i}}\right)\right) -1
&& \tag{Since $F_{\delta}(P^{yes}_i)\sim \bern(1-e^{-\delta p_i})$ and $F_{\delta}(Q_i)\sim \bern(1-e^{-\delta q_i})$}\\
&= \exp(\sum_i (e^{-\delta p_i}-e^{-\delta q_i})^2/e^{-\delta q_i}(1-e^{-\delta q_i})) -1\\
&= \exp(\sum_i (e^{\delta (q_i-p_i)}-1)^2/(e^{\delta q_i}-1)) -1\\
&= \exp(\sum_i (e^{\delta (q_i-p_i)}-1)^2/(e^{\delta q_i}-1)) -1\\
&\le \exp(\sum_i (2\delta (p_i-q_i))^2/\delta q_i)) -1
&&\tag{Since $(e^x-1)\ge x$ and $(|e^{x}-1|\le 2|x|$ for $0<|x|<1$}\\
&= \exp(4\delta \chi^2(P,Q))-1\\
\end{align*}
\end{proof}
}

We are set to present the main lower bound result of this section.

\ssLB*

\begin{proof}
We start with the hard distributions $P^{yes},P^{no}$ and $Q$ from \cref{thm-unstr-chisq2sample}.  Then $\dchisqr(P^{yes},Q)\le \epsilon^2/216$ and $\dtv(P^{no},Q)\ge \epsilon$ for some constant $0<\epsilon<1$.
We apply the reduction of \cref{def-fdeltap} with $\delta=1/3$ to these three distributions. Then from \cref{lem-dtv-pfdeltap} and \cref{lem-chisqFdeltaPQ} we get the following two inequalities:
\begin{itemize}
\item $\dchisqr(F_{\delta}(P^{yes}),F_{\delta}(Q))\le \exp(4\chi^2(P^{yes},Q)/3)-1 < \epsilon^2/160$.
\item $\dtv(F_{\delta}(P^{no}),F_{\delta}(Q))> (1/3e^{1/3})\epsilon$.
\end{itemize}
It follows if we can distinguish $\dchisqr(F_{\delta}(P^{yes}),F_{\delta}(Q))\le \epsilon^2/160$ versus $\dtv(F_{\delta}(P^{no}),F_{\delta}(Q))> (1/3e^{1/3})\epsilon$, then we are able to decide the hard instance of \cref{thm-unstr-chisq2sample}. Moreover, in order to simulate each sample from the distribution $F_{1/2}(P)$, we need $\poi(1/2)$ samples from $P$. So, if we need $m$ samples in total, from the additive property of the Poisson distribution, we need $\poi(m/2)=O(m)$ samples from $P$ in total, except for $\exp(-m)$ probability. It follows, if we can decide the problem given in the theorem statement in $o(n/\log n)$ samples, we can decide the hard problem of \cref{thm-unstr-chisq2sample} in $o(n/\log n)$ samples as well. This leads to a contradiction. Replacing the constant $\epsilon$ by $3e^{1/3}\epsilon_1$, we get \cref{2S-LB}. 
\end{proof}

\subsection{Hardness of $\dkl$-vs-$\dtv$ Tolerant Identity Testing}
In this section we show that for an unknown product distribution $P$ and a known product distribution $Q$ over $\{0,1\}^n$, distinguishing $\dkl(P,Q)\le \epsilon^2/9$ versus $\dtv(P,Q)> \epsilon$, for a constant $\epsilon$, cannot be decided in general with a truly sublinear sample complexity. We use a reduction to the following  hardness result, for identity testing of unstructured distributions over $n$ items under $\mathrm{KL}$-tolerance, given in~\citet*{Daskalakis:2018:DDS:3174304.3175479}. We restate the theorem with changes in the constants. For a probability distribution $P=\langle p_1,p_2,\dots, p_n\rangle$ over $n$ items, $||P||_2^2=\sum_i p_i^2$.

\begin{restatable}{theoremRe}{unstrKLonesample}\label{thm-unstr-KL1sample}
There exists a constant $0<\epsilon<1$ and three distributions $P^{yes}, P^{no}$ and $Q$, each over the sample space $[n]$ such that: (1) $\dkl(P^{yes},Q)\le \epsilon^2/216$, whereas $\dtv(P^{no},Q)\ge \epsilon$, (2)  $||P^{yes}||_2^2=O(\log^2 n/n)$, and (3) given only sample accesses to one of $P^{yes}$ or $P^{no}$, and complete knowledge of $Q$, distinguishing $P^{yes}$ versus $P^{no}$ with probability $>4/5$, requires $\Omega(n/\log n)$ samples.
\end{restatable}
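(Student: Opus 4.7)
The plan is to adapt the standard $\Omega(n/\log n)$ lower bound construction for tolerant identity testing, carefully tuning parameters so that the yes-case is close in $\dkl$ (not just in $\dtv$), while still keeping $\|P^{yes}\|_2^2 = O(\log^2 n/n)$. The overall strategy is to use Ingster's chi-square method: construct a family of random ``no-instances'' $\{P^{no}_\sigma\}_{\sigma}$ and argue that no algorithm taking $o(n/\log n)$ samples can distinguish the mixture $\bar{P^{no}} = \mathbb{E}_\sigma[P^{no}_\sigma]$ from a planted $P^{yes}$.

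First I would design $Q$ as a two-scale distribution: partition $[n]$ into a ``heavy'' block $H$ of size roughly $n/\log n$ with mass $c\log n/n$ per element, and a ``light'' block $L$ of size $n - |H|$ with mass $c'/n$ per element, with constants chosen so that $Q$ is a probability distribution. This immediately gives $\|Q\|_2^2 = \Theta(\log^2 n / n)$, which will force any sufficiently-close $P^{yes}$ to have the same order of magnitude for $\|P^{yes}\|_2^2$. Then I would define $P^{yes}_\sigma$ and $P^{no}_\sigma$ by pairing up the heavy elements and perturbing each pair by $\pm\eta q_i$ according to independent signs $\sigma_i \in \{\pm 1\}$, with perturbation magnitude $\eta = \Theta(1/\sqrt{\log n})$. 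The yes-family consists of small perturbations with the right scale to make $\dkl$ small; the no-family uses perturbations that are large enough to make $\dtv$ a constant but whose \emph{second moments} match those of $Q$.

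Next I would verify the distance bounds. For each perturbation, $\dkl(P^{yes}_\sigma, Q) = \sum_i q_i \ln(1+\eta\sigma_i)$ expands via Taylor to $\sum_i q_i(\eta\sigma_i - \eta^2\sigma_i^2/2 + O(\eta^3))$; the first-order term has magnitude controlled by the signed perturbation, and the second-order term gives $\dkl(P^{yes}_\sigma, Q) = O(\eta^2) = O(1/\log n)$. Choosing $\eta$ so that $\eta^2 \le \epsilon^2/216$ is compatible with $\eta = \Theta(1/\sqrt{\log n})$. For the no-case, because the perturbations are $\Omega(1)$-fraction of the local mass on $\Theta(n/\log n)$ heavy elements, $\dtv(P^{no}_\sigma, Q) \ge \epsilon$ holds with high probability over $\sigma$. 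The $\ell_2$ bound is then immediate because the perturbations only multiply each entry by $(1 \pm \eta)$.

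Finally the main obstacle, which is Ingster's argument: I would bound $\dchisqr(\bar{P^{no}}^{\otimes m}, Q^{\otimes m}) = \mathbb{E}_{\sigma,\sigma'}\bigl[(1 + \langle P^{no}_\sigma/Q - 1, P^{no}_{\sigma'}/Q - 1\rangle_Q)^m\bigr] - 1$, where the inner product contribution from each pair is of order $\eta^2 \sigma_i \sigma'_i$ with independent $\pm 1$ signs. Expanding the exponent of the exponential approximation yields a bound of order $\exp(m^2 \eta^4 |H|) - 1$; plugging in $|H| = \Theta(n/\log n)$ and $\eta^2 = \Theta(1/\log n)$ this is $o(1)$ precisely when $m = o(n/\log n)$, giving the claimed lower bound by Le Cam's two-point method applied to the planted versus random hypotheses. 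The delicate step is the ``moment matching'' in the no-construction — the perturbations must be mean-zero under a symmetry of the sign distribution so that the first-order $\chi^2$ contribution vanishes and only the fourth-moment term survives; any other construction would give only $\Omega(\sqrt{n})$ instead of $\Omega(n/\log n)$.
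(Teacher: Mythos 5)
Your proposal is a genuinely different approach from the paper's, but it contains a gap that is fatal for this particular statement. The paper does not re-derive the lower bound at all: it simply cites Theorem~6.2 of \citet*{Daskalakis:2018:DDS:3174304.3175479} for the $\Omega(n/\log n)$ hardness, and then adds a short self-contained observation tracing through the explicit Valiant--Valiant construction $p^{-}_{\log k,\phi}$ to verify that $\|P^{yes}\|_2^2 = O(\log^2 n/n)$.

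The gap in your argument is in the final chi-square step, and it is not a fixable constant. Carrying out the Ingster bound correctly for your two-scale construction, the pairwise term in the mixture chi-square is $\eta^2 \sum_{i \in H} q_i \,\sigma_i\sigma'_i$ (you need the factor $q_i$ that comes from the $\chi^2$-weighted inner product), and after taking the moment generating function over Rademacher signs one gets a bound of the form $\exp\!\left(\tfrac{1}{2}m^2\eta^4\sum_{i\in H}q_i^2\right)$, not $\exp(m^2\eta^4|H|)$. With $|H| = \Theta(n/\log n)$ and $q_i = \Theta(\log n / n)$ on the heavy block, one has $\sum_{i\in H} q_i^2 = \Theta(\log n / n)$, so the exponent is $\Theta(m^2\eta^4 \log n/n)$, and with $\eta^2 = \Theta(1/\log n)$ this is $\Theta(m^2/(n\log n))$. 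That is $o(1)$ only when $m = o(\sqrt{n\log n})$, which falls far short of $\Omega(n/\log n)$. Your own stated formula $\exp(m^2\eta^4|H|) - 1$ also does not yield the claim: plugging $m = n/\log n$, $\eta^4 = 1/\log^2 n$, $|H| = n/\log n$ gives an exponent of order $n^3/\log^5 n$, which is huge, not $o(1)$.

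The deeper issue is that a plain Ingster/Paninski chi-square argument against a sign-flip mixture is structurally limited to $\tilde{O}(\sqrt{n})$-type lower bounds. The $\Omega(n/\log n)$ lower bound for tolerant identity testing requires the much stronger ``CLT for fingerprints'' machinery of \citet*{ValiantV10} (or the later wishful-thinking-theorem treatments): one constructs $P^{yes}$ and $P^{no}$ so that their probability \emph{histograms} match in all moments up to order $\Theta(\log n)$, and then argues via a Poissonized CLT that their fingerprint distributions are statistically close. Your remark about making ``the first-order $\chi^2$ contribution vanish'' is not what moment matching means in that framework --- the first-order term vanishes automatically for any mean-zero sign mixture, and that alone buys you nothing beyond $\sqrt{n}$. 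To repair the proof you would essentially have to reproduce the Valiant--Valiant construction, at which point you are better off citing it as the paper does, and confining your new work to the $\|P^{yes}\|_2^2$ estimate (which the paper obtains by noting that in $p^{-}_{\log k,\phi}$ each probability is at most $\phi/8k$ on support of size $n = 32k\log k/\phi$, giving $\|P^{yes}\|_2^2 \le \phi\log k/2k = O(\log^2 n/n)$).
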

\begin{proof}
The proof of this Theorem appears in~\citet*{Daskalakis:2018:DDS:3174304.3175479} (in Theorem 6.2 of this version), except the fact $||P^{yes}||_2^2=O(\log^2 n/n)$ is not explicitly claimed. We prove this claim in the following, by observing from the original construction given in the paper by~\citet*{ValiantV10}.

The hard distribution $P^{yes}$ is the distribution $p^{-}_{\log k,\phi}$
as defined in Definition 12 of~\citet*{ValiantV10}. We use the following facts about this distribution $p^{-}_{\log k,\phi}$, given in Fact 11, Definition 12 and (in the end of the second paragraph in the proof of) Lemma 13 in~\citet*{ValiantV10}:
\begin{itemize}
\item $\phi$ is a small enough constant
\item The support size $n$ and the parameter $k$ are related as $n=32k\log k/\phi$
\item The `un-normalized' mass at each point is $x/32k$, where $j=\log k$ and $x\le 4j$
\item The `normalizing constant' $c_2$ (which makes the probability values sum up to 1) is at most $\phi/j$ where $j=\log k$
\end{itemize}
From these facts we conclude each probability mass is $c_2\cdot x/32k\le \phi/8k$, where $n=32k\log k/\phi$ for some constant $\phi$. Hence, $||P^{yes}||_2^2 \le \phi^2/64k^2\cdot 32k\log k/\phi=\phi\log k/2k=O(\log^2 n/n)$.
\end{proof}

We use the reduction given in \cref{def-fdeltap}. We establish the following lemma, relating KL distances between the original and the reduced distributions.  

\begin{restatable}{lemmaRe}{KLFdelta}\label{lem-KLFdeltaPQ}
$\dkl(F_{\delta}(P),F_{\delta}(Q))\le \left(\delta+{\delta^2\over 2}\right) \dkl(P,Q)+ {3\delta^2\over 2} ||P||_2^2$, for any $0<\delta\le 1$.
\end{restatable}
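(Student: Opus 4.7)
My plan is to factor the KL divergence coordinate-wise via additivity for product distributions, and then bound each coordinate's Bernoulli KL by lifting it to a Poisson KL via the data processing inequality. This route actually proves the strictly stronger bound $\dkl(F_\delta(P),F_\delta(Q)) \le \delta\,\dkl(P,Q)$, from which the stated inequality follows immediately because $\tfrac{\delta^2}{2}\dkl(P,Q) \ge 0$ and $\tfrac{3\delta^2}{2}\|P\|_2^2 \ge 0$.

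In detail: first, by \cref{fact-fdeltap}, $F_\delta(P) = \prod_i \bern(1-e^{-\delta p_i})$ and likewise for $Q$, so additivity of KL across product distributions gives
\[
\dkl(F_\delta(P),F_\delta(Q)) = \sum_{i=1}^n \dkl(\bern(1-e^{-\delta p_i}),\bern(1-e^{-\delta q_i})).
\]
Next, I observe that if $X\sim \poi(\delta p_i)$ then $\mathbb{1}[X\ge 1]\sim \bern(1-e^{-\delta p_i})$, so the Bernoulli arises as a deterministic push-forward of a Poisson under thresholding. The data processing inequality therefore yields $\dkl(\bern(1-e^{-\delta p_i}),\bern(1-e^{-\delta q_i})) \le \dkl(\poi(\delta p_i),\poi(\delta q_i))$, and the standard closed form $\dkl(\poi(\lambda_1),\poi(\lambda_2)) = \lambda_1 \ln(\lambda_1/\lambda_2) - \lambda_1 + \lambda_2$ evaluates the right-hand side to $\delta p_i\ln(p_i/q_i) + \delta(q_i - p_i)$. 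Summing over $i$, the term $\delta\sum_i(q_i-p_i)$ vanishes because $P$ and $Q$ are both probability distributions, leaving exactly $\delta\,\dkl(P,Q)$.

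An alternative, more elementary route (which presumably explains the specific constants $\delta + \delta^2/2$ and $3\delta^2/2$ in the claim) is to expand each Bernoulli KL as $(1-e^{-\delta p_i})\ln\tfrac{1-e^{-\delta p_i}}{1-e^{-\delta q_i}} + \delta(q_i - p_i)\,e^{-\delta p_i}$ and Taylor-expand $1-e^{-x}$ and $e^{-x}$ to second order in $\delta$, separating a linear-in-$\delta$ contribution that sums to $\delta\,\dkl(P,Q)$ from quadratic-in-$\delta$ corrections whose factors $p_i^2$ and $p_i\ln(p_i/q_i)$ sum to $\|P\|_2^2$ and $\dkl(P,Q)$ respectively. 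The main obstacle in this direct route will be controlling $\ln\tfrac{1-e^{-\delta p_i}}{1-e^{-\delta q_i}}$ uniformly, since its sign depends on whether $p_i < q_i$ or $p_i > q_i$; two-sided bounds such as $x - x^2/2 \le 1 - e^{-x} \le x$ valid for all $x \ge 0$ would be the right tool to handle both regimes simultaneously.
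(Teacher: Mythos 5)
Your proposal is correct and takes a genuinely different route from the paper, and in fact proves a strictly stronger statement. Both you and the paper begin with \cref{fact-fdeltap} and additivity of KL over the product coordinates, but then the arguments diverge sharply. The paper expands $\dkl\bigl(\bern(1-e^{-\delta p_i}),\bern(1-e^{-\delta q_i})\bigr)$ explicitly, massages it into the form $(1-e^{-\delta p_i})\ln\frac{e^{\delta p_i}-1}{e^{\delta q_i}-1}$, then splits on the sign of $p_i-q_i$ (since the sign of the logarithm changes), applies the one-sided bounds $1-e^{-x}\le x$ in one branch and $1-e^{-x}\ge x - x^2/2$ in the other, and then bounds the logarithm using $x\le e^x-1\le x(1+x)$; collecting terms produces precisely the coefficients $\delta+\delta^2/2$ and $3\delta^2/2$. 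This is exactly the elementary route you sketch as your alternative, and the sign issue you flag is precisely what the paper's case split addresses. Your primary route---recognizing $\bern(1-e^{-\delta p_i})$ as the push-forward of $\poi(\delta p_i)$ under $\mathbb{1}[\cdot\ge 1]$, invoking the data processing inequality, and using the closed form $\dkl(\poi(\lambda_1),\poi(\lambda_2))=\lambda_1\ln(\lambda_1/\lambda_2)-\lambda_1+\lambda_2$ so that the $\delta(q_i-p_i)$ terms telescope to zero over $i$---is both cleaner and yields the sharper bound $\dkl(F_\delta(P),F_\delta(Q))\le\delta\,\dkl(P,Q)$, which immediately implies the lemma since the remaining terms on the right are nonnegative. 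The DPI step is legitimate: the coordinate Poissons $c_i\sim\poi(\delta p_i)$ are independent (by Poissonization) and the map $c_i\mapsto\mathbb{1}[c_i\ge1]$ is a deterministic kernel. Your approach buys a tighter constant and avoids all the Taylor-expansion bookkeeping and case analysis; the paper's direct expansion is what produces the particular $\delta^2$ correction terms that appear in the lemma statement, but as you note these are not needed once the DPI bound is in hand.
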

\begin{proof}
We use the following fact about the $\mathrm{KL}$-distance between two product distributions.
\begin{factRe}\label{lem-kl-additive}
For two distributions $P=\prod_{i=1}^n P_i$ and $Q=\prod_{i=1}^nQ_i$
over the same discrete sample space, it holds that
$\dkl(P,Q)= \sum_{i=1}^n \dkl(P_i,Q_i)$.
\end{factRe}
From \cref{fact-fdeltap}, both $F_{\delta}(P)$ and $F_{\delta}(Q)$ are product distributions, the 
distribution of the $i$-th component being $F_{\delta}(P_i)$ and $F_{\delta}(Q_i)$ respectively. Let $P=\langle p_1,p_2,\dots,p_n\rangle$ and $Q=\langle q_1,q_2,\dots,q_n\rangle$. Then $F_{\delta}(P_i)\sim \bern(1-e^{-\delta p_i})$ and $F_{\delta}(Q_i)\sim \bern(1-e^{-\delta q_i})$. 
\begin{align*}
\dkl & (F_{\delta}(P),F_{\delta}(Q))\\&=\sum_i \dkl(F_{\delta}(P_i),F_{\delta}(Q_i))\\
&=\sum_i \left[(1-e^{-\delta p_i})\ln \left({1-e^{-\delta p_i}\over 1-e^{-\delta q_i}}\right)+e^{-\delta p_i}\ln {e^{-\delta p_i} \over e^{-\delta q_i}}\right]\\
&=\sum_i \ln \left({1-e^{-\delta p_i}\over 1-e^{-\delta q_i}}\right)+\sum_i e^{-\delta p_i} \ln {
e^{-\delta p_i} (1-e^{-\delta q_i}) \over
e^{-\delta q_i} (1-e^{-\delta p_i})
}\\
&=\sum_i \ln { e^{\delta q_i}(e^{\delta p_i}-1)\over e^{\delta p_i}(e^{\delta q_i}-1)}+
\sum_i e^{-\delta p_i} \ln \left({ e^{\delta q_i}-1\over e^{\delta p_i}-1}\right)\\
&=\sum_i \ln {e^{\delta q_i} \over e^{\delta p_i}} + \sum_i \ln \left({ e^{\delta p_i}-1\over e^{\delta q_i}-1}\right)+\sum_i e^{-\delta p_i} \ln \left({ e^{\delta q_i}-1\over e^{\delta p_i}-1}\right)\\
&=\sum_i (q_i-p_i) + \sum_i (1-e^{-\delta p_i})\ln \left({ e^{\delta p_i}-1\over e^{\delta q_i}-1}\right)\\
&=\sum_i (1-e^{-\delta p_i})\ln \left({ e^{\delta p_i}-1\over e^{\delta q_i}-1}\right) &&\tag{Since $\sum_i p_i=\sum_i q_i = 1$}\\
&= \sum_{p_i>q_i} (1-e^{-\delta p_i})\ln \left({ e^{\delta p_i}-1\over e^{\delta q_i}-1}\right) + \sum_{q_i>p_i} (1-e^{-\delta p_i})\ln \left({ e^{\delta p_i}-1\over e^{\delta q_i}-1}\right)\\
&\le \sum_{p_i>q_i} \delta p_i\ln \left({ e^{\delta p_i}-1\over e^{\delta q_i}-1}\right) + \sum_{q_i>p_i} \left(\delta p_i -{1\over 2}\delta^2p_i^2\right)\ln \left({ e^{\delta p_i}-1\over e^{\delta q_i}-1}\right)\\
&=\sum_i \delta p_i\ln \left({ e^{\delta p_i}-1\over e^{\delta q_i}-1}\right) + \sum_{q_i>p_i} {\delta^2 p_i^2\over 2} \ln \left({ e^{\delta q_i}-1\over e^{\delta p_i}-1}\right)\\ 
&\le \sum_i \delta p_i\ln \left({\delta p_i (1+\delta p_i )\over \delta q_i}\right) + \sum_{q_i>p_i} {\delta^2 p_i^2\over 2} \ln \left(\delta q_i (1+\delta q_i)\over \delta p_i\right)\\ 
&=\delta\left(\sum_i p_i \ln {p_i \over q_i} + \sum_i p_i \ln (1+\delta p_i)\right) + {\delta^2\over 2}\left( \sum_{q_i > p_i} p_i^2 \ln {q_i \over p_i} + \sum_{q_i > p_i} p_i^2 \ln (1+\delta q_i)\right)\\
&\le \left(\delta+{\delta^2\over 2}\right) \sum_i p_i \ln {q_i \over p_i} + {3\delta^2\over 2} \sum_i p_i^2\\
&=\left(\delta+{\delta^2\over 2}\right) \dkl(P,Q)+ {3\delta^2\over 2} ||P||_2^2.
\end{align*}
\end{proof}

Now we present the lower bound for closeness testing of product distributions. 

\sLB*

\begin{proof}
We start with the distributions $P^{yes},P^{no}$ and $Q$ from the hardness result \cref{thm-unstr-KL1sample}. Then $\dkl(P^{yes},Q)\le \epsilon^2/216$, $||P^{yes}||_2^2=O(\log^2 n/n)$ and $\dtv(P^{no},Q)\ge \epsilon$ for some constant $0<\epsilon<1$.
We apply the reduction of \cref{def-fdeltap}, with $\delta=1/3$ to these three distributions. Then from \cref{lem-dtv-pfdeltap} and \cref{lem-KLFdeltaPQ} we get the following two:
\begin{itemize}
\item $\dkl(F_{\delta}(P^{yes}),F_{\delta}(Q))\le \epsilon^2/160$, for any large enough $n$.
\item $\dtv(F_{\delta}(P^{no}),F_{\delta}(Q))> (1/3e^{1/3})\epsilon$.
\end{itemize}
It follows if we can distinguish $\dkl(F_{\delta}(P^{yes}),F_{\delta}(Q))\le \epsilon^2/160$ versus $\dtv(F_{\delta}(P^{no}),F_{\delta}(Q))> (1/3e^{1/3})\epsilon$, then we will be able to decide the hard instance of \cref{thm-unstr-chisq2sample}. Moreover, in order to simulate each sample from the distribution $F_{1/2}(P)$, we need $\poi(1/2)$ samples from $P$. So, if we need $m$ samples in total, from the additive property of the Poisson distribution, we need $\poi(m/2)=O(m)$ samples from $P$ in total, except for $\exp(-m)$ probability. It follows, if we can decide the problem given in the Theorem statement in $o(n/\log n)$ samples, we can decide the hard problem of \cref{thm-unstr-KL1sample} in $o(n/\log n)$ samples as well. This leads to a contradiction. Replacing the constant $\epsilon$ by $3e^{1/3}\epsilon_1$, we get \cref{1S-LB}.
\end{proof}

Before moving on to the next section, we note that recently the question of $\dtv$-versus-$\dtv$ tolerant testing problem for uniformity testing of distributions over $[n]$ was settled to be $\Theta({n\over \log n}{1\over \epsilon^2})$ by \citet*{DBLP:journals/tit/JiaoHW18}. In particular, this gives a stronger guarantee for \cref{thm-unstr-chisq2sample} and \cref{thm-unstr-KL1sample} when $\epsilon$ is not a constant. This directly strengthens our \cref{2S-LB}; also \cref{1S-LB} whenever $||P_{yes}||_2^2=O(\epsilon^2)$, giving us a $\Omega({n\over \log n}{1\over \epsilon^2})$ lower bound for any $\epsilon$.

\subsection{Hardness of non-tolerant $\dtv$ Identity Testing for General Alphabets} \citet*{DaskalakisDK18} and \citet*{DBLP:conf/colt/CanonneDKS17} have given optimal lower bounds for non-tolerant testing w.r.t. $\dtv$ distance when $\lvert\Sigma\rvert=2$. In this section, we generalize their result for $\Sigma>2$ case and get an optimal lower bound in regard to \cref{1S-UB}.
We show the following theorem, generalizing the proof of \citet*{DaskalakisDK18} specifically.
\thmMain*

Our hard distributions are as follows:

$P=$ the uniform distribution over $[\ell]^n$.

$Q=$ a radom distribution from the mixture $\left\{\left\{{1\over \ell}\left(1\pm{\epsilon\over \sqrt{n}}\right)\right\}^{{\ell\over 2}}\right\}^n$. Each distribution of the mixture is a product distribution, whose $i$-th component is a distribution over $[\ell]$, which randomly assigns probability values either ${1\over \ell}\left(1+{\epsilon\over \sqrt{n}}\right),{1\over \ell}\left(1-{\epsilon\over \sqrt{n}}\right)$ or ${1\over \ell}\left(1-{\epsilon\over \sqrt{n}}\right),{1\over \ell}\left(1+{\epsilon\over \sqrt{n}}\right)$, based on a random vector from $\{0,1\}^{{\ell\over 2}}$, for every consecutive sample space items from $[\ell]$.

First claim we show is that each member of the mixture is $\Theta(\epsilon)$ far from $P$ in $\dtv$ distance.

\begin{claimRe}\label{claim:far}
Let $Q^*$ be any member of $Q$. Then $\dtv(P,Q^*)\ge\Theta(\epsilon)$.
\end{claimRe}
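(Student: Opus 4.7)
The plan is to reduce the claim to a one-dimensional TV comparison between two binomial distributions, whose separation is a classical $\Theta(\epsilon)$.

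For each coordinate $i$, let $U_i \subseteq [\ell]$ denote the $\ell/2$ symbols on which $Q^*_i(j) = \tfrac{1}{\ell}\bigl(1+\tfrac{\epsilon}{\sqrt{n}}\bigr)$, and define the coordinate-wise folding $f : [\ell]^n \to \{0,1\}^n$ by $f(x)_i = \mathbf{1}[x_i \in U_i]$. Because $f$ is a deterministic post-processing, $\dtv(P, Q^*) \ge \dtv(f(P), f(Q^*))$. Since each $|U_i| = \ell/2$, the push-forward $f(P)$ is uniform on $\{0,1\}^n$; and by the product structure of $Q^*$, the push-forward $f(Q^*)$ is the product Bernoulli distribution in which each coordinate is independently $\bern\bigl(\tfrac{1}{2} + \tfrac{\epsilon}{2\sqrt{n}}\bigr)$. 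This reduces the claim to the $|\Sigma|=2$ case.

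Next I post-process once more by the Hamming weight $W(y) = \sum_i y_i$, reducing the task to lower bounding $\dtv(\mathrm{Bin}(n,1/2),\mathrm{Bin}(n,p))$ with $p = \tfrac{1}{2} + \tfrac{\epsilon}{2\sqrt{n}}$. The two binomials have means differing by $\tfrac{\sqrt{n}\epsilon}{2}$ while both have standard deviation $\Theta(\sqrt{n})$, so after centering and scaling their means differ by $\Theta(\epsilon)$ in units of the common standard deviation. Using the distinguishing event $E = \{W \ge n/2\}$: $\Pr_{\mathrm{Bin}(n,1/2)}[E] = \tfrac{1}{2} + O(1/\sqrt{n})$ by near-symmetry, while Berry--Esseen applied to the shifted Bernoulli sum gives $\Pr_{\mathrm{Bin}(n,p)}[E] \ge \Phi(\epsilon) - O(1/\sqrt{n}) = \tfrac{1}{2} + \Omega(\epsilon)$ for constant $\epsilon \in (0,1]$. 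Taking the difference yields $\dtv(P, Q^*) \ge \Omega(\epsilon)$ for all sufficiently large $n$.

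The main technical step is the normal approximation used to extract the $\Theta(\epsilon)$ gap between the two binomials. This is routine via Berry--Esseen applied to a sum of $n$ i.i.d.\ bounded Bernoullis; moreover, this exact calculation already appears in the $|\Sigma|=2$ product-testing lower bounds of \citet*{DBLP:conf/colt/CanonneDKS17} and \citet*{DBLP:conf/colt/DaskalakisP17}, so one may simply invoke their analysis of the perturbed-uniform construction instead of redoing it.
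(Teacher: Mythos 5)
Your proof is correct and takes essentially the same route as the paper: both reduce by a coordinate-wise two-to-one folding map (you use the indicator of the heavy symbol set $U_i$ directly; the paper first invokes permutation symmetry to fix a canonical $Q^*$ and then folds by parity of the symbol index, which amounts to the same thing), landing at $\dtv\bigl(\bern(1/2)^n,\bern(\tfrac12+\tfrac{\epsilon}{2\sqrt{n}})^n\bigr)\ge\Omega(\epsilon)$. The only cosmetic difference is that the paper cites this last bound as a standard fact from \citet*{DBLP:conf/colt/CanonneDKS17} while you unpack it via Hamming weight and Berry--Esseen, which is a valid and self-contained way to finish.
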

\begin{proof}
Note that all members of the mixture $Q$ are permutations of each other. Since $P$ is fixed to the uniform distribution, all of them have the same $\dtv$ to $P$. We fix $Q^*$ to be the distribution from $Q$, corresponding to $\{0,1\}^{\ell/2}$ at every component.

It is a known fact that applying a common function to the sample space items can only reduce $\dtv$. We apply the function which is parity of $x\in [\ell]$ component wise. Resulting sample space becomes $\{0,1\}^n$, $Q^*$ becomes $\bern\left(1+{\epsilon\over \sqrt{n}}\right)^n$ and $P$ becomes $\bern\left({1\over 2}\right)^n$. It is a standard fact that the $\dtv$ of the later pair is at least $\Theta(\epsilon)$ (see eg. \cite*{DBLP:conf/colt/CanonneDKS17}).
\end{proof}

Next we show that distingishing $k$ samples from $P$ and $Q$ is hard. Let $P^{\otimes k},Q^{\otimes k}$ be their distributions. Noting that the components of both the mixture and the uniform distribution are independent and symmetric. We can upper bound them by $n$ copies of the first component's distribution, using Pinsker's inequality and linearity of KL. 

\begin{align}\label{eqn:pinsker}
\dtvsqr(P^{\otimes k},Q^{\otimes k}) \lesssim \dkl(Q^{\otimes k},P^{\otimes k}) \le n\cdot\dkl(Q_1^{\otimes k},P_1^{\otimes k})
\end{align}

Computing $\dhelsqr$ or $\dkl$ are hard for the multinomial unlike \citet*{DaskalakisDK18} (cf. Lemma 17). So, we use a reduction to simplify the calculations.

Recall $P_1^{\otimes k}$ is the distribution of the first $k$ samples when $P$ is the uniform distribution over $[\ell]$ and $Q_1^{\otimes k}$ is the same when we take a random distribution from the $2^{{\ell\over 2}}$ size mixture.

We reduce $P_1$ to the distribution $\cP_1=\ber({1\over \ell})^l$ and any distribution from the mixture $Q^*_1=\langle q_1, \dots,q_{\ell}\rangle$ to $\cQ^*_1=\ber(q_1)\times\dots\times \ber(q_{\ell})$. We claim that this reduction changes KL by a constant factor, for any particular (randomly) chosen pair $P_1,Q^*_1$ to start with.

\begin{claimRe}\label{claim:near}
$\dhel(Q^*_1,P_1)\le \dkl(\cQ^*_1,\cP_1)$ and $\dkl({Q_1}^{\otimes k},P_1^{\otimes k})\le 6\dkl({\cQ_1}^{\otimes k},{\cP_1}^{\otimes k})$. 
\end{claimRe}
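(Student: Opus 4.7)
The plan is to exploit additivity of $\dkl$ in two different ways: once to reduce the tensorized statement to a one-sample inequality via tensorization of $\dkl$, and once to decompose $\dkl(\cQ^*_1, \cP_1)$ into a piece matching $\dkl(Q^*_1, P_1)$ plus a nonnegative remainder that is itself a KL divergence to the uniform distribution.

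For the second inequality, reading $Q_1$ as the particular realization $Q^*_1$ (consistent with the preceding text ``for any particular (randomly) chosen pair''), tensorization of $\dkl$ gives $\dkl(Q_1^{\otimes k}, P_1^{\otimes k}) = k\,\dkl(Q^*_1, P_1)$ and $\dkl(\cQ_1^{\otimes k}, \cP_1^{\otimes k}) = k\,\dkl(\cQ^*_1, \cP_1)$, so it suffices to prove the single-sample bound $\dkl(Q^*_1, P_1) \le 6\,\dkl(\cQ^*_1, \cP_1)$. Using additivity of KL across the $\ell$ independent Bernoulli coordinates of $\cQ^*_1$ and $\cP_1$, I would expand
\begin{align*}
\dkl(\cQ^*_1, \cP_1) &= \sum_{i=1}^\ell \dkl\bigl(\bern(q_i), \bern(1/\ell)\bigr) \\
&= \sum_{i=1}^\ell q_i \ln(\ell q_i) \;+\; \sum_{i=1}^\ell (1-q_i)\ln\frac{(1-q_i)\ell}{\ell-1}.
\end{align*}
The first sum is exactly $\dkl(Q^*_1, P_1)$, since $P_1$ is uniform on $[\ell]$. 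The key observation for the second sum is that $r_i := (1-q_i)/(\ell-1)$ satisfies $\sum_i r_i = (\ell - \sum_i q_i)/(\ell-1) = 1$, so $R := (r_1,\ldots,r_\ell)$ is itself a probability distribution on $[\ell]$ and a direct rewrite shows the second sum equals $(\ell-1)\,\dkl(R, P_1) \ge 0$. This already yields the sharper bound $\dkl(Q^*_1, P_1) \le \dkl(\cQ^*_1, \cP_1)$, comfortably within the claimed slack of $6$.

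For the first half of the claim I read it as $\dhelsqr(Q^*_1, P_1) \le \dkl(\cQ^*_1, \cP_1)$; the un-squared form appears not to hold in the parameter regime being used, where $\dhel(Q^*_1, P_1)$ scales like $\epsilon/\sqrt{n}$ while $\dkl(\cQ^*_1, \cP_1)$ scales only like $\epsilon^2/n$. Under this reading, the inequality follows immediately from the folklore bound $2\dhelsqr \le \dkl$ in Lemma~\ref{lem-distcompare} applied to the pair $(Q^*_1, P_1)$, chained with the single-sample KL comparison just derived.

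I do not anticipate any substantive obstacle. The whole argument hinges on recognizing that the leftover $(1-q_i)$-summands in the Bernoulli KL expansion reorganize into a well-defined KL divergence to the uniform distribution on $[\ell]$, which makes their nonnegativity (and hence the desired inequality) immediate.
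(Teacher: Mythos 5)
Your proof is correct, and it takes a genuinely different route from the paper's. The paper proves the KL comparison indirectly: it Taylor-expands $\dkl(Q^*_1,P_1)$ to get the upper bound $\le \epsilon^2/n$, Taylor-expands each Bernoulli term of $\dkl(\cQ^*_1,\cP_1)$ to get the lower bound $\ge \epsilon^2/(3n)$, and then compares the two numerical estimates to conclude the factor-$6$ bound. Your argument instead gives an exact algebraic identity: writing out the Bernoulli KL sum, the $q_i$-terms reproduce $\dkl(Q^*_1,P_1)$ verbatim, and the leftover $(1-q_i)$-terms reorganize as $(\ell-1)\,\dkl(R,P_1)$ where $R=((1-q_i)/(\ell-1))_i$ is a genuine probability vector. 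This yields the decomposition
\[
\dkl(\cQ^*_1,\cP_1)=\dkl(Q^*_1,P_1)+(\ell-1)\,\dkl(R,P_1)\ge \dkl(Q^*_1,P_1),
\]
which is sharper (constant $1$ instead of $6$), is exact rather than asymptotic, and does not rely on the specific parameter scaling $\epsilon/\sqrt{n}$. It is also structure-agnostic: it applies to any $Q^*_1$ with full support, not just the particular hard instances in the lower bound construction.

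You are also right to flag the first half of the claim. As stated, $\dhel(Q^*_1,P_1)\le\dkl(\cQ^*_1,\cP_1)$ cannot hold in the relevant regime, since the left side scales like $\epsilon/\sqrt{n}$ while the right side scales like $\epsilon^2/n$; the intended statement must be $\dhelsqr(Q^*_1,P_1)\le\dkl(\cQ^*_1,\cP_1)$. The paper's proof in fact never addresses this half of the claim at all, and even its concluding sentence writes $\dhel$ where it clearly means $\dkl$ (``we get $\dhel({Q_1^*}^{\otimes k},P_1^{\otimes k})\le 6\dkl(\dots)$''), so there are typos in both the statement and the proof. Under the corrected squared reading, your chain $2\dhelsqr(Q^*_1,P_1)\le\dkl(Q^*_1,P_1)\le\dkl(\cQ^*_1,\cP_1)$, via \cref{lem-distcompare} and your decomposition, settles it cleanly. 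The only caveat, inherited from the paper itself and not introduced by you, is that passing from the per-realization inequality to the mixture statement $\dkl(Q_1^{\otimes k},P_1^{\otimes k})\le 6\,\dkl(\cQ_1^{\otimes k},\cP_1^{\otimes k})$ is asserted rather than argued; both you and the paper defer to the sentence ``since this holds for the reduction on any chosen starting pair,'' which glosses over the convexity direction for KL on the lower-bounding side.
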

\begin{proof}
\begin{align*}
\dkl(Q_1,P_1)&=\sum_j Q_{ij}\log {Q_{1j}\over P_{1j}}\\
&=\sum_{j:Q_{1j}>{1\over \ell}} {1\over \ell}\left(1+{\epsilon\over \sqrt{n}}\right) \log \left(1+{\epsilon\over \sqrt{n}}\right)+\sum_{j:Q_{1j}<{1\over \ell}} {1\over \ell}\left(1-{\epsilon\over \sqrt{n}}\right) \log \left(1-{\epsilon\over \sqrt{n}}\right)\\
&\le {1\over 2}\left(1+{\epsilon\over \sqrt{n}}\right) {\epsilon\over \sqrt{n}}+ {1\over 2}\left(1-{\epsilon\over \sqrt{n}}\right) \left(-{\epsilon\over \sqrt{n}}\right)\\
&={\epsilon^2\over n}
\end{align*}

\begin{align*}
\dkl(\cQ_1,\cP_1)&={\ell\over 2}\dkl\left(\ber\left({1\over \ell}\left(1+{\epsilon\over \sqrt{n}}\right),\ber\left({1\over\ell}\right)\right)\right)+\\
&{\ell\over 2}\dkl\left(\ber\left({1\over \ell}\left(1-{\epsilon\over \sqrt{n}}\right)\right),\ber\left({1\over\ell}\right)\right)\\
\end{align*}
\begin{align*}
&\dkl\left(\ber\left({1\over \ell}\left(1+{\epsilon\over \sqrt{n}}\right),\ber\left({1\over\ell}\right)\right)\right)\\
&={1\over \ell}\left(1+{\epsilon\over \sqrt{n}}\right)\log (1+{\epsilon\over \sqrt{n}})+\left(1-{1\over \ell}-{\epsilon\over \ell\sqrt{n}}\right)\log \left(1-{\epsilon\over (\ell-1)\sqrt{n}}\right)\\
&\ge {1\over \ell}\left(1+{\epsilon\over \sqrt{n}}\right)\left({\epsilon\over \sqrt{n}}-{\epsilon^2\over 2n}\right)+\left(
1-{1\over\ell}-{\epsilon\over \ell\sqrt{n}}
\right)\left(-{\epsilon\over \sqrt{n}(\ell-1)}-{\epsilon^2\over n(\ell-1)^2}\right)\\
&\ge{\epsilon^2\over 3n\ell}
\end{align*}
Therefore, $\dkl(\cQ_1,\cP_1) \ge \dkl(Q_1,P_1)/6$. Using linearity of KL, we get $\dhel({Q_1^*}^{\otimes k},P_1^{\otimes k})\le 6\dkl({\cQ_1^*}^{\otimes k},\cP_1^{\otimes k})$. Since this holds for the reduction on any chosen starting pair, we get that in general for the mixture, 
\[\dkl({Q_1}^{\otimes k},P_1^{\otimes k})\le 6\dkl({\cQ_1}^{\otimes k},{\cP_1}^{\otimes k})\]
\end{proof}

Henceforth we focus on upper bounding $\dkl(\cQ_1^{\otimes k},\cP_1^{\otimes k})$. The reduction makes it a Boolean product distribution. \citet*{DaskalakisDK18} gave such upper bounds when every component is randomly mixed and when the probabilities are close to $1/2$. Instead we need to mix every pairs of components and need to make the probabilities close to $1/\ell$.

Let $p_+={1\over \ell}\left(1+{\epsilon\over\sqrt{n}}\right)$ and $p_-={1\over \ell}\left(1-{\epsilon\over\sqrt{n}}\right)$.

\begin{proof} (of \cref{thm:main})
We firstly note that it suffices to upper bound the joint distribution of the count of 1's in the samples $\cQ_1^{\otimes k}$ and $\cP_1^{\otimes k}$ (\citet*{DaskalakisDK18}, Lemma 17).
By symmetry, we can focus on the mixture on the first two components $1,2$. Recall these are actually the first two of the $\ell$ components of the reduced distribution (where the reduction was performed on the first component of the original distribution). Note that \citet*{DaskalakisDK18} could instead focus on a single component.

\begin{align}\label{eqn:12}
\dkl(Q_1^{\otimes k},P_1^{\otimes k})\lesssim \dkl(\cQ_1^{\otimes k},\cP_1^{\otimes k})={\ell\over 2}\dkl({\cQ_1^{\otimes k}}_{12},{\cP_1^{\otimes k}}_{12})\le {\ell\over 2}\dchisqr(R^{\otimes k}_{12},S^{\otimes k}_{12})
\end{align}
 
We use $R=\cQ_1$ and $S=\cP_1$ for simplicity.
Then, $R^{\otimes k}_{12}$ and $S^{\otimes k}_{12}$ denote the joint distribution of the count of 1s in $k$ samples at the first 2 components w.r.t the distributions $\ber\left({1\over \ell}\right)$ and the $2^{\ell/2}$-sized mixture $\ber\left({1\over \ell}\left(1\pm{\epsilon\over \sqrt{n}}\right)\right)$ (the former and the later are due to our reduction).

\begin{align*}
&1+\dchisqr(R^{\otimes k}_{12},S^{\otimes k}_{12})\\
&=\sum_{i=0}^k\sum_{j=0}^k {\left[
{1\over 2}{k\choose i}(p_+)^i(1-p_+)^{k-i} + {1\over 2}{k\choose j}(p_-)^j(1-p_-)^{k-j}
\right]^2
\over 
{k\choose i} \left({1\over \ell}\right)^i \left(1-{1\over \ell}\right)^{k-i}
{k\choose j} \left({1\over \ell}\right)^j \left(1-{1\over \ell}\right)^{k-j}
}\\
&={\ell^{2k}\over 4}\sum_{i=0}^k\sum_{j=0}^k {k\choose i}{k\choose j} [
\left(p_+^2\right)^i \left({(1-p_+)^2\over(\ell-1)}\right)^{k-i}
\left(p_-^2\right)^j \left({(1-p_-)^2\over(\ell-1)}\right)^{k-j} +\\
&\qquad\qquad\qquad\qquad\qquad\left(p_-^2\right)^i \left({(1-p_-)^2\over(\ell-1)}\right)^{k-i}
\left(p_+^2\right)^j \left({(1-p_+)^2\over(\ell-1)}\right)^{k-j} +\\
&\qquad\qquad\qquad\qquad\qquad2\left(p_+p_-\right)^i \left({(1-p_+)(1-p_)\over (\ell-1)}\right)^{k-1}\left(p_-p_+\right)^j \left({(1-p_)(1-p_+)\over (\ell-1)}\right)^{k-j}
]\\
&={\ell^{2k}\over 4}\left[
2\left(p_+^2+{(1-p_+)^2\over \ell-1}\right)^k\left(p_-^2+{(1-p_-)^2\over \ell-1}\right)^k+2\left(p_+p_-+{(1-p_+)(1-p_-)\over (\ell-1)}\right)^{2k}
\right]
\end{align*}

\begin{align*}
p_+^2+{(1-p_+)^2\over (\ell-1)}&={1\over \ell^2}\left(1+{\epsilon\over \sqrt{n}}\right)^2+{1\over (\ell-1)}\left(1-{1\over \ell-{\epsilon\over l\sqrt{n}}}\right)^2
&={1\over \ell}+{\epsilon^2\over n\ell(\ell-1)} &&\tag{upon simplification}
\end{align*}

\begin{align*}
p_+^2+{(1-p_+)^2\over (\ell-1)}&= {1\over \ell}+{\epsilon^2\over n\ell(\ell-1)}&&\tag{upon simplification}
\end{align*}

\begin{align*}
\left(p_+p_-+{(1-p_+)(1-p_)\over (\ell-1)}\right)&={1\over \ell}-{\epsilon^2\over n\ell(\ell-1)}&&\tag{upon simplification}
\end{align*}

Therefore, 
\begin{align*}
1+\dchisqr(R^{\otimes k}_{12},S^{\otimes k}_{12})
&\le {\ell^{2k}\over 2}\left[\left({1\over \ell}+{\epsilon^2\over n\ell(\ell-1)}\right)^{2k}+\left({1\over \ell}-{\epsilon^2\over n\ell(\ell-1)}\right)^{2k}\right]\\
&\approx 1+{2k\choose 2}\left(\epsilon^2 \over n(\ell-1)\right)^2+\dots
\end{align*}

We get  that if $k=o(\sqrt{nl}\epsilon^{-2})$, then $\dchisqr(R^{\otimes k}_{12},S^{\otimes k}_{12})=o({1\over nl})$. Then we get from \cref{eqn:pinsker} and \cref{eqn:12}, $\dtv(P^{\otimes k},Q^{\otimes k})=o(1)$, establishing \cref{thm:main}. 
\end{proof}


\section{Tolerant Testing in $\dhel$ for High-Dimensional Distributions}
In this section, we give an algorithm for tolerant testing of two high-dimensional distributions w.r.t. the Hellinger distance. More specifically, given samples from such a pair of unknown distributions $P$ and $Q$ our goal would be to distinguish between the cases: $\dhel(P,Q) \le \epsilon/2$ versus $\dhel(P,Q) > \epsilon$ with 2/3 probability, which can be amplified to $1-\delta$  using the majority of $O(\log {1\over \delta})$ repetitions, for any $0<\delta,\epsilon<1$. This generalizes the work of \citet*{bhattacharyya2020efficient}, who gave such tolerant testers w.r.t. $\dtv$ using distance approximation. 

We start with a result that additively estimates $\dhelsqr(P,Q)$, when we have access to both the p.m.f.s and also to independent samples from $P$.

\begin{theoremRe}\label{thm:distEst}
Consider $P$ and $Q$ be two unknown distributions over $\Sigma^n$. Suppose we have access to two circuits $\xi_P(x)$ and $\xi_Q(x)$ which on input $x$, outputs $P(x)$ and $Q(x)$ respectively. Then we can output a number $e$ such that $|e-\dhelsqr(P,Q)| \le \epsilon$ with 2/3 probability for any $0<\epsilon<1$, using $ 3\epsilon^{-2}$ independent samples from $P$ and $3\epsilon^{-2} $ calls to each of $\xi_P(x)$ and $\xi_Q(x)$.
\end{theoremRe}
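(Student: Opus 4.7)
}
The plan is to use a straightforward empirical mean estimator based on rewriting the Hellinger affinity as an expectation under $P$. Recall that
\[
\dhelsqr(P,Q) \;=\; 1 - \sum_{x\in\Sigma^n}\sqrt{P(x)Q(x)} \;=\; 1 - \mathrm{E}_{x\sim P}\!\left[\sqrt{Q(x)/P(x)}\right],
\]
where the second equality uses the fact that $x\sim P$ so $P(x)>0$ almost surely (points with $P(x)=0$ contribute nothing to the sum, and points with $Q(x)=0$ simply yield a zero summand). So the estimation problem reduces to estimating the mean of the random variable $Y := \sqrt{Q(x)/P(x)}$ for $x\sim P$.

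The key observation is that $Y$ has small second moment: since
\[
\mathrm{E}[Y^2] \;=\; \sum_{x}P(x)\cdot\frac{Q(x)}{P(x)} \;=\; \sum_x Q(x) \;=\; 1,
\]
we immediately get $\mathrm{Var}(Y)\le \mathrm{E}[Y^2] = 1$. My algorithm draws $m = 3\epsilon^{-2}$ i.i.d.\ samples $x_1,\dots,x_m$ from $P$, queries $\xi_P(x_i)$ and $\xi_Q(x_i)$ for each $i$ (using $3\epsilon^{-2}$ calls to each oracle, matching the claimed budget), and outputs
\[
e \;:=\; 1 - \frac{1}{m}\sum_{i=1}^m \sqrt{\xi_Q(x_i)/\xi_P(x_i)}.
\]
By linearity, the empirical average $\bar Y := \tfrac{1}{m}\sum_i Y_i$ has mean $1-\dhelsqr(P,Q)$ and variance at most $1/m$. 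Chebyshev's inequality then gives
\[
\Pr\!\left[\,\lvert \bar Y - \mathrm{E}[Y]\rvert > \epsilon\,\right] \;\le\; \frac{\mathrm{Var}(\bar Y)}{\epsilon^2} \;\le\; \frac{1}{m\epsilon^2} \;=\; \frac{1}{3},
\]
so with probability at least $2/3$ we have $|e - \dhelsqr(P,Q)|\le\epsilon$, as required.

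There is essentially no hard step: the only mild subtlety is handling the cases $P(x_i)=0$ (which occurs with probability $0$) and $Q(x_i)=0$ (define the summand as $0$, consistent with $\sqrt{P(x_i)Q(x_i)}=0$). The rest of the paper's results on tolerant $\dhel$-testing of Bayes nets will then follow by combining this estimator with efficient $\xi_P,\xi_Q$ circuits obtained from the known Bayes-net factorization \cref{eqn:bnfactor}, but that is beyond the present statement.
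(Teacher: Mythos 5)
Your proof is correct and matches the paper's argument essentially verbatim: rewrite $1-\dhelsqr(P,Q)$ as $\mathrm{E}_{x\sim P}[\sqrt{Q(x)/P(x)}]$, bound the second moment of the estimator by $\sum_x Q(x)=1$, and apply Chebyshev with $m=3\epsilon^{-2}$ samples. The remark on the edge cases $P(x)=0$ and $Q(x)=0$ is a small addition not spelled out in the paper but does not change the substance.
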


\begin{proof}
\begin{align*}
1-\dhelsqr(P,Q)&=\sum_{x\in \Sigma^n} \sqrt{P(x)Q(x)}\\
&=\sum_{x\in \Sigma^n} \sqrt{Q(x)\over P(x)}P(x)\\
&=\e_{x\sim P} \left[\sqrt{Q(x)\over P(x)}\right]&&\tag{since $P(x)\neq 0$}
\end{align*}

Let $f(x) = \sqrt{Q(x)\over P(x)}$. 
Therefore, it suffices to estimate $\e_{x\sim P} \left[f(x)\right]$ additively.
Note that, $\var_{x\sim P}[f(x)] \le \e_{x\sim P} \left[f^2(x)\right]\ab=\sum_x Q(x)=1$. We define our estimator to be $e$, the average of $(1-f(x))$ over $R$ samples from $P$. Then $e$ satisfies $\e[e]=\dhelsqr(P,Q)$ and $\var[e]\le 1/R$. Chebyshev's inequality gives us that for $R\ge 3\epsilon^{-2}$, $|e-\dhelsqr(P,Q))|\le \epsilon$ with at least 2/3 probability. 
\end{proof}
\ignore{
The following corollary generalizes \cref{thm:distEst} when only certain approximations to the p.m.f.s are available.
\begin{corollaryRe}
Consider $R$ and $S$ be two unknown distributions over $\Sigma^n$. Suppose we have access to two circuits $\xi_P(x)$ and $\xi_Q(x)$ which on input $x$, outputs $\hat{P}(x)$ and $\hat{Q}(x)$ respectively such that:
\begin{itemize}[label=--]
\item there exists two distributions $P$ and $Q$ such that $\dhel(P,R)\le \gamma$ and $\dhel(Q,S)\le \gamma$
\item $\max\{|{\hat{P}(x)\over P(x)}-1|,{\hat{Q}(x)\over Q(x)}-1|\}\le \beta$ where $0<\beta<1$. 
 \end{itemize}

Then we can output a number $\hat{e}$ such that $|\hat{e}-\dhelsqr(R,S)| \le (\epsilon+{2\beta\over 1-\beta}+8\gamma)$ with 2/3 probability for any $0<\epsilon,\beta<1$, using $ 3\left({1+\beta\over 1-\beta}\right)\epsilon^{-2}$ independent samples from $P$ and $3\left({1+\beta\over 1-\beta}\right)\epsilon^{-2} $ calls to each of $\xi_P(x)$ and $\xi_Q(x)$.
\end{corollaryRe}
\begin{proof}
\begin{align*}
1-\dhelsqr(P,Q)&=\e_{x\sim P} \left[\sqrt{Q(x)\over P(x)}\right]\\
&=\e_{x\sim P} \left[\sqrt{\hat{Q}(x)\over \hat{P}(x)}\right]+\phi
\end{align*}
where 
\begin{align*}
|\phi|&=\e_{x\sim P} \left|\sqrt{\hat{Q}(x)\over \hat{P}(x)}-\sqrt{{Q}(x)\over {P}(x)}\right|\\
&\le {2\beta\over 1-\beta}[\e_{x\sim P}\sqrt{{Q}(x)\over {P}(x)}]\\
&\le {2\beta\over 1-\beta}(1-\dhelsqr(P,Q))\\
&\le {2\beta\over 1-\beta}
\end{align*}

In this case, our estimator is $\hat{e}$, the average of $(1-\hat{f}(x))$ over $R$ samples from $P$ where $\hat{f}(x)=\sqrt{\hat{Q}(x)\over \hat{P}(x)}$. Then $|\e[\hat{e}]-\dhelsqr(P,Q)| \le {2\beta\over 1-\beta}$ and $\var[\hat{e}]= {1\over R}(\var_{x\sim P} [\hat{f}(x)]) \le {1\over R}(\e_{x\sim P} [\hat{f}^2(x)])={1\over R}\sum_x P(x){\hat{Q}(x)\over \hat{P}(x)} \le {1\over R}{1+\beta \over 1-\beta}$. Hence, $R\ge 3\left({1+\beta\over 1-\beta}\right)\epsilon^{-2}$ ensures $|\hat{e}-\dhelsqr(P,Q)| \le (\epsilon+{2\beta\over 1-\beta})$ with at least 2/3 probability.

The triangle inequality of $\dhel$ gives us $|\dhel(P,Q)-\dhel(R,S)|\le 2\gamma$ and hence $|\dhelsqr(P,Q)-\dhelsqr(R,S)|\le 8\gamma$, whence the result follows.
\end{proof}
}
\subsection{Application: Bayesian Networks}
\citet*{bhattacharyya2020efficient} have given the following Algorithm for learning an unknown Bayesian network on a known graph of indegree at most $d$.

\begin{restatable}{theoremRe}{bnlearn}\label{thm:bnlearning}
There is an algorithm that given a parameter $\epsilon>0$ and sample access to an unknown Bayesian network distribution $P$ on a known directed acyclic graph $G$ of in-degree at most $d$, returns a Bayesian network $\hat{P}$ on $G$ such that $\dhel(P,\hat{P}) \leq \epsilon$ with probability $\geq 9/10$. Letting $\Sigma$ denote the range of each variable $X_i$, the algorithm takes $m=O(|\Sigma|^{d+1}n\log(|\Sigma|^{d+1} n)\epsilon^{-2})$ samples and runs in $O(mn)$ time.  
\end{restatable}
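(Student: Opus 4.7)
The plan is to build $\hat{P}$ by estimating each conditional probability table (CPT) of the Bayes net separately from the empirical counts, and to bound the total Hellinger error by combining the chain-rule subadditivity of Hellinger distance for Bayes nets on a common DAG with a per-CPT concentration argument.

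First I would establish the key subadditivity: for two distributions $P$ and $Q$ that are Bayes nets on the same DAG $G$ of indegree $\le d$, letting $\Pi_i$ denote the parent set of $X_i$,
\[
\dhelsqr(P,Q) \;\le\; \sum_{i=1}^n \, \e_{\pi \sim P_{\Pi_i}}\bigl[\dhelsqr\bigl(P_{X_i\mid \Pi_i=\pi},\, Q_{X_i\mid \Pi_i=\pi}\bigr)\bigr].
\]
This is a standard chain-rule/localization inequality for Hellinger distance (the Bayes-net analogue of \cref{lem-helloc-equality}, which is the empty-graph case), and it reduces the task to learning each CPT row well on average under $P$.

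Next I would define the estimator. Draw $m = C\,|\Sigma|^{d+1}\,n\,\log(|\Sigma|^{d+1} n)\,\epsilon^{-2}$ i.i.d.\ samples from $P$, and for every node $i$ and every parent configuration $\pi \in \Sigma^{|\Pi_i|}$, let $N_{i,\pi}$ be the number of samples with $\Pi_i=\pi$ and let $\hat P_{X_i\mid\pi}$ be the empirical conditional distribution of $X_i$ among those samples (setting $\hat P_{X_i\mid\pi}$ to an arbitrary fixed distribution if $N_{i,\pi}=0$). I would then split the parent configurations into ``heavy'' ones, where $P_{\Pi_i}(\pi) \ge \tau := \epsilon^2/(10n|\Sigma|^d)$, and ``light'' ones. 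The light configurations contribute at most $\sum_i \sum_{\pi \text{ light}} P_{\Pi_i}(\pi)\cdot 1 \le n\cdot|\Sigma|^d\cdot\tau \le \epsilon^2/10$ to the subadditive bound, using $\dhelsqr \le 1$.

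For heavy configurations, a multiplicative Chernoff bound gives $N_{i,\pi} \ge m\tau/2$ simultaneously for all $(i,\pi)$ with probability $\ge 1 - (n|\Sigma|^d)\exp(-m\tau/8) \ge 1 - 1/20$ by our choice of $m$ (this is exactly where the $\log(|\Sigma|^{d+1} n)$ factor enters). Condition on this event. Invoke the folklore bound that the empirical distribution of a discrete distribution on $|\Sigma|$ symbols from $k$ samples satisfies $\e[\dhelsqr(\hat D,D)] \le (|\Sigma|-1)/(2k)$ (\cref{lem-empirical-Hellinger} up to the constant; this is essentially \cref{thm:HellLearn}). Applied conditionally to each heavy $(i,\pi)$ with $k = N_{i,\pi}$ and then taking expectation over the samples, the total expected contribution of heavy configurations to the subadditive bound is at most
\[
\sum_i \sum_{\pi \text{ heavy}} P_{\Pi_i}(\pi)\cdot \frac{|\Sigma|}{m\tau/2}
\;\le\; \frac{2n|\Sigma|^{d+1}}{m\tau}
\;\le\; \epsilon^2/10
\]
by the choice of $m$. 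A Markov step then yields $\dhelsqr(P,\hat P)\le \epsilon^2$ with probability $\ge 9/10$ overall; rescaling $\epsilon$ by a constant absorbs the slack.

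Finally, for the runtime claim: the only work is bucketing each of the $m$ samples into the at most $n$ $(i,\pi)$ counters (one counter update per coordinate per sample, so $O(mn)$ time total), then reading off the empirical conditional from the counts. The main obstacle will be the conditional-expectation step: the $N_{i,\pi}$ are random, so the folklore $|\Sigma|/k$ bound for empirical Hellinger error must be combined with the Chernoff event controlling $N_{i,\pi}$ from below — I would either (i) condition on $N_{i,\pi} \ge m\tau/2$ and apply the bound with $k = N_{i,\pi}$, using that the Hellinger error is monotone nonincreasing in $k$ in expectation, or (ii) use Poissonized sampling (draw $\mathrm{Poi}(m)$ samples first) to make the $N_{i,\pi}$ independent Poissons and decouple the analysis entirely, as in the upper-bound sections of the paper.
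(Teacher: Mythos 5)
First, a point of comparison: the paper does not prove \cref{thm:bnlearning} at all --- it imports it verbatim from \citet*{bhattacharyya2020efficient} --- so your attempt is measured against the standard CPT-by-CPT argument rather than an internal proof. Your overall route (chain-rule subadditivity of $\dhelsqr$ plus per-row empirical estimation with a heavy/light split over parent configurations) is indeed the natural one, and your worry at the end about conditioning is not the problem: conditioned on which samples fall in bucket $(i,\pi)$, their $X_i$-values are i.i.d.\ from the CPT row, so the folklore $\e[\dhelsqr(\hat D,D)]\le(|\Sigma|-1)/(2k)$ bound applies conditionally. However, two steps as written are genuinely wrong. The ``key subadditivity'' with constant $1$ and expectation under $P$ is false for non-empty graphs: take a two-node chain with $X_1\sim\bern(0.1)$ under $P$ and $\bern(0.9)$ under $Q$, identical conditionals for $X_2$ at $x_1=0$ and disjoint (point masses at $0$ and $1$) at $x_1=1$; then $\dhelsqr(P,Q)=0.7$ while your right-hand side equals $0.4+0.1=0.5$. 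The correct statement is the square-Hellinger subadditivity of \citet*{DBLP:conf/colt/DaskalakisP17}, which carries an extra constant factor; this only costs constants, but it must be invoked in that form.

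The more serious gap is the heavy-configuration accounting. With $\tau=\epsilon^2/(10n|\Sigma|^d)$ and $m=C|\Sigma|^{d+1}n\log(|\Sigma|^{d+1}n)\epsilon^{-2}$ one has $m\tau=\Theta(|\Sigma|\log(|\Sigma|^{d+1}n))$, so your bound $2n|\Sigma|^{d+1}/(m\tau)$ is $\Theta\bigl(n|\Sigma|^d/\log(|\Sigma|^{d+1}n)\bigr)$, which is nowhere near $\epsilon^2/10$; the displayed chain ``$\le\epsilon^2/10$ by the choice of $m$'' simply fails, and plugging only the uniform lower bound $N_{i,\pi}\ge m\tau/2$ can never work, since it would force $m=\Omega(n^2|\Sigma|^{2d+1}\epsilon^{-4})$ up to logs. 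The fix is to use the multiplicative Chernoff event at its full strength: for every heavy $\pi$, $N_{i,\pi}\ge\frac12 mP_{\Pi_i}(\pi)$, so the weighted per-row error is $P_{\Pi_i}(\pi)\cdot O\bigl(|\Sigma|/(mP_{\Pi_i}(\pi))\bigr)=O(|\Sigma|/m)$, the weights cancel, and summing over at most $|\Sigma|^d$ heavy rows per node and $n$ nodes gives $O(n|\Sigma|^{d+1}/m)\le\epsilon^2/C'$, which is exactly why the stated sample complexity suffices (the $\log(|\Sigma|^{d+1}n)$ factor is only needed for the union bound over rows, as you say). With these two repairs --- the correct subadditivity constant and the proportional, rather than uniform, lower bound on $N_{i,\pi}$ --- your argument goes through and matches the cited result.
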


We get the following result for tolerant resting of Bayesian networks in Hellinger distance.
\bnTolH*
\begin{proof}
First we learn $P$ and $Q$ using \cref{thm:bnlearning} such that $\dhel(P,\hat{P})\le \epsilon/12$ and $\dhel(Q,\hat{Q})\le \epsilon/12$. This step costs $m=O(|\Sigma|^{d+1}n\log(|\Sigma|^{d+1} n)\epsilon^{-2})$ samples, runs in $O(mn)$ time, and succeeds with 4/5 probability. Note that $\hat{P}$ and $\hat{Q}$, once learnt, can be sampled and evaluated correctly in $O(n)$ time.

Next we estimate $\dhelsqr(\hat{P},\hat{Q})$ up to an additive $\epsilon^2/9$ error using \cref{thm:distEst}. This step costs $O(n\epsilon^{-4})$ time and no further samples and succeeds with 4/5 probability. Due to the triangle inequality of $\dhel$, in the first case, $\dhelsqr(\hat{P},\hat{Q})\le 20\epsilon^2/36$ and in the second case $\dhelsqr(\hat{P},\hat{Q})>21\epsilon^2/36$, thus separating the two cases.
\end{proof}

\section{Non-tolerant Closeness Testers}\label{sec:nont-2sample}
\subsection{$\dhel$-tester}\label{sec:inHelDistance}
A non-tolerant tester for 2-sample testing of product distributions was given in~\citet*{DBLP:conf/colt/CanonneDKS17}. Their tester distinguishes $P=Q$ from
$\dtv(P,Q)\ge \epsilon$ with sample complexity $O(\max\{n^{3/4}/\epsilon,\ab\sqrt{n}/\epsilon^2\})$. Using the relation, $\dtv \ge \dhelsqr$ from \cref{lem-distcompare}, we immediately get a tester for distinguishing $P=Q$ from $\dhel(P,Q)\ge \epsilon$, with sample complexity $O(\max\{n^{3/4}/\epsilon^2,\sqrt{n}/\epsilon^4\})$. Here, we show an improved tester with $O(n^{3/4}/\epsilon^2)$ sample complexity in \cref{algo-nontolerant-hellinger-2sample}. We analyze its correctness and complexity below.

\begin{algorithm2e}
\caption{Given samples from two unknown distributions $P=P_1\times \dots \times P_n$ and $Q=Q_1 \times \dots \times Q_n$ over $\Sigma^n$, decides $P=Q$ (`yes') versus $\dhel(P,Q)\ge \epsilon$ (`no'). Let $\ell=|\Sigma|$.}
\label{algo-nontolerant-hellinger-2sample}
\tcc{Approximately identify heavy and light partitions}
Take $m$ samples from $P$ and $Q$. Let $U'\subseteq [n]\times [\ell]$ be the set of indices $(i,j)$, such that at least one sample from either
$P$ or $Q$ has hit symbol $j\in \Sigma$ in the coordinate $i$\;
Let $V'=[n]\times [\ell] \setminus U'$\;
\DontPrintSemicolon
\;
\tcc{Poisson sampling}
For each $i\in [n]$, sample $M_i \sim \poi(m)$ independently \;
For each $i\in [n]$, sample $M'_i \sim \poi(m)$ independently \;
Let $M=\max_i\{M_i\}$ and $M'=\max_i\{M'_i\}$\;
If $\max\{M,M'\}\ge 2m$ output `no'\;
Take $M$ samples $X^{1},\dots,X^{M}$ from $P$\;
Take $M'$ samples $Y^{1},\dots,Y^{M'}$ from $Q$\;
For every $(i,j)$, let $W_{ij}$ be the number of occurrences of symbol $j\in \Sigma$ in the $i$-th coordinate of the sample subset $X^{1},\dots,X^{M_i}$\;
For every $(i,j)$, let $V_{ij}$ be the number of occurrences of symbol $j\in \Sigma$ in the $i$-th coordinate of the sample subset$Y^{1},\dots,Y^{M'_i}$\;
\;

\tcc{Test the heavy partition}
$W_{heavy}=\sum_{(i,j) \in U'}{(W_{ij}-V_{ij})^2-(W_{ij}+V_{ij}) \over (W_{ij}+V_{ij})}$\;
If $W_{heavy}>m\epsilon^2/120$ output `no'\;
\;

\tcc{Test the light partition}
$W_{light}=\sum_{(i,j) \in V'} (W_{ij}-V_{ij})^2-(W_{ij}+V_{ij})$\;
If $W_{light}>m^2\epsilon^4/1000n\ell$ output `no'\;
Output `yes'\;
\end{algorithm2e}

Let $P=\prodp$ and $Q=\prodq$, with $P_i=\langle p_{i1},\dots,p_{i\ell}\rangle$ and $Q_i=\langle q_{i1},\dots,q_{i\ell}\rangle$ as probability vectors. We assume $\min_{i,j} p_{ij}\ge \epsilon^2/50n\ell$ and  $\min_{i,j} q_{ij}\ge \epsilon^2/50n\ell$, without loss of generality using the reduction of \cref{lem-modification}. 

Analysis of \cref{algo-nontolerant-hellinger-2sample} can be divided into two cases: `heavy' and `light'. Let $V\subseteq [n]\times [\ell]$ be the `light' set of indices $(i,j)$, such that $\max\{p_{ij},q_{ij}\}<1/m$. The remaining indices in $U=[n]\times [\ell] \setminus V$ are `heavy'. The following important lemma shows that for each case, a certain sum must deviate from zero substantially, for the `no' class.
\begin{lemmaRe}\label{lem-dhelcases}
Suppose $\dhel(P,Q)\ge \epsilon$. Suppose $\min_{i,j} p_{ij}\ge \epsilon^2/50n\ell$ and  $\min_{i,j} q_{ij}\ge \epsilon^2/50n\ell$. Then at least one of the following two must hold:
\begin{enumerate}
\item $\sum_{(i,j)\in V} (p_{ij}-q_{ij})^2\ge \epsilon^4/25n\ell$
\item $\sum_{(i,j)\in U} {(p_{ij}-q_{ij})^2\over p_{ij}+q_{ij}}\ge \epsilon^2$.
\end{enumerate}
\end{lemmaRe}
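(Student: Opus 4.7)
The plan is to bound $\dhelsqr(P,Q)$ from above by a sum over pairs $(i,j)$, split this sum along the heavy/light partition $U \cup V = [n]\times[\ell]$, and show that if neither inequality (1) nor (2) held, the total would be strictly less than $2\epsilon^2$, contradicting $\dhel(P,Q) \ge \epsilon$.

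First I would convert the hypothesis on $\dhel(P,Q)$ into a coordinate-wise inequality. By \cref{lem-helloc-equality} and the bound $\prod_i(1-a_i) \ge 1 - \sum_i a_i$ applied to $a_i = \dhelsqr(P_i,Q_i) \in [0,1]$, we have $\dhelsqr(P,Q) \le \sum_i \dhelsqr(P_i,Q_i)$. Using the identity $(\sqrt{p}-\sqrt{q})^2 = (p-q)^2/(\sqrt{p}+\sqrt{q})^2$ inside the definition of $\dhelsqr(P_i,Q_i)$ and summing over $i$ and $j$, we obtain
\begin{equation}\label{eq:helstart}
2\epsilon^2 \;\le\; 2\,\dhelsqr(P,Q) \;\le\; \sum_{(i,j)\in[n]\times[\ell]} \frac{(p_{ij}-q_{ij})^2}{(\sqrt{p_{ij}}+\sqrt{q_{ij}})^2}.
\end{equation}

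Next I would split the right-hand sum of \eqref{eq:helstart} over $U$ and $V$ and upper-bound each piece using the natural size estimates on the denominator. Since $(\sqrt{p_{ij}}+\sqrt{q_{ij}})^2 \ge p_{ij}+q_{ij}$, the heavy contribution is at most $\sum_{(i,j)\in U}(p_{ij}-q_{ij})^2/(p_{ij}+q_{ij})$. For light indices, the lower bound $\min_{i,j}\{p_{ij},q_{ij}\} \ge \epsilon^2/(50n\ell)$ gives $(\sqrt{p_{ij}}+\sqrt{q_{ij}})^2 \ge p_{ij}+q_{ij} \ge \epsilon^2/(25n\ell)$, so the light contribution is at most $(25n\ell/\epsilon^2)\sum_{(i,j)\in V}(p_{ij}-q_{ij})^2$. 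Combining,
\begin{equation}\label{eq:twosum}
2\epsilon^2 \;\le\; \sum_{(i,j)\in U}\frac{(p_{ij}-q_{ij})^2}{p_{ij}+q_{ij}} \;+\; \frac{25n\ell}{\epsilon^2}\sum_{(i,j)\in V}(p_{ij}-q_{ij})^2.
\end{equation}

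Finally I would conclude by contradiction: if both (1) and (2) failed, i.e.\ $\sum_{(i,j)\in V}(p_{ij}-q_{ij})^2 < \epsilon^4/(25n\ell)$ and $\sum_{(i,j)\in U}(p_{ij}-q_{ij})^2/(p_{ij}+q_{ij}) < \epsilon^2$, then the right-hand side of \eqref{eq:twosum} would be strictly less than $\epsilon^2 + (25n\ell/\epsilon^2)\cdot\epsilon^4/(25n\ell) = 2\epsilon^2$, contradicting \eqref{eq:twosum}. Hence at least one of the two inequalities must hold.

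The only step with any real content is the choice of denominator bound on each partition; the lower floor $\epsilon^2/(50n\ell)$ coming from the randomization step of \cref{lem-modification} is precisely what makes the $\epsilon^4/(25n\ell)$ threshold in (1) match the $\epsilon^2$ threshold in (2) so that the two contributions balance and exhaust $2\epsilon^2$. Everything else is a straightforward chain of elementary inequalities, so I do not anticipate a genuine obstacle.
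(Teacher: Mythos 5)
Your proof is correct, and it takes the same essential route as the paper with a mildly different organization. The paper passes immediately to the $(p+q)$-denominator quantity via Fact~\ref{fact-triangledist} to obtain $\sum_{i,j}\tfrac{(p_{ij}-q_{ij})^2}{p_{ij}+q_{ij}}\ge 2\epsilon^2$, then does a disjunction on whether the $U$- or $V$-part of that sum is at least $\epsilon^2$, converting the $V$-case into statement (1) separately. You instead keep the exact Hellinger expression $\sum_{i,j}\tfrac{(p_{ij}-q_{ij})^2}{(\sqrt{p_{ij}}+\sqrt{q_{ij}})^2}=2\sum_i\dhelsqr(P_i,Q_i)\ge 2\epsilon^2$, upper-bound the $U$ and $V$ pieces by the two target quantities directly (the $U$-piece by $(\sqrt{p}+\sqrt{q})^2\ge p+q$, the $V$-piece by $p+q\ge\epsilon^2/25n\ell$), and close with a single contradiction. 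The ingredients are identical -- subadditivity of $\dhelsqr$ for products and the $\epsilon^2/(50n\ell)$ floor from the randomization step -- so neither version is stronger; what yours buys is a slightly tighter organization where the two constants $\epsilon^4/(25n\ell)$ and $\epsilon^2$ visibly balance to exhaust the $2\epsilon^2$ budget, and you dispense with invoking Fact~\ref{fact-triangledist} as a separate black box by noting that it is just the one-sided inequality $(\sqrt{p}+\sqrt{q})^2\ge p+q$ applied to the heavy part.
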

\begin{proof}
If $\dhel(P,Q)\ge \epsilon$, \cref{lem-helloc-equality} gives us $\sum_i \dhelsqr(P_i,Q_i) \ge \epsilon^2$.  We use the following standard Fact to get $\sum_{i=1}^n \sum_{j=1}^\ell {(p_{ij}-q_{ij})^2\over p_{ij}+q_{ij}} \ge 2 \sum_i \dhelsqr(P_i,Q_i) \ge 2\epsilon^2$.
\begin{factRe}\label{fact-triangledist}(see eg. \citet*{Daskalakis:2018:DDS:3174304.3175479})
For two distributions $P=\{p_1,\dots,p_\ell\}$ and $Q=\{q_1,\dots,q_\ell\}$, $\sum_j {(p_j-q_j)^2\over p_j+q_j}\ge 2 \dhelsqr(P,Q)$.
\end{factRe}
It follows that at least one of 1) $\sum_{(i,j)\in V} {(p_{ij}-q_{ij})^2\over p_{ij}+q_{ij}}$ or
2) $\sum_{(i,j)\in U} {(p_{ij}-q_{ij})^2\over p_{ij}+q_{ij}}$  
is at least $\epsilon^2$.

In the first case, 
\begin{align*}
\sum_{(i,j)\in V} (p_{ij}-q_{ij})^2&=\sum_{(i,j)\in V}  (\sqrt{p_{ij}}-\sqrt{q_{ij}})^2(\sqrt{p_{ij}}+\sqrt{q_{ij}})^2\\
&\ge (2\epsilon^2/25n\ell) \cdot \sum_{(i,j)\in V}  (\sqrt{p_{ij}}-\sqrt{q_{ij}})^2 &&\tag{as $\min_{i,j} \min\{p_{ij}, q_{ij}\}\ge \epsilon^2/50n\ell$}\\
&\ge (2\epsilon^2/25n\ell) \cdot \sum_{(i,j)\in V} {(p_{ij}-q_{ij})^2 \over (\sqrt{p_{ij}}+\sqrt{q_{ij}})^2}\\
&\ge (\epsilon^2/25n\ell)\cdot \sum_{(i,j)\in V} {(p_{ij}-q_{ij})^2 \over p_{ij}+q_{ij}}\\
&\ge \epsilon^4/25n\ell.
\end{align*}
\end{proof}
The following lemma shows $U'$ ($V'$), as obtained in Lines 1-2 of \cref{algo-nontolerant-hellinger-2sample}, could be an acceptable proxy for $U$ ($V$).
\begin{lemmaRe}\label{lem:checkProxy}
Let $U',V'$ be as in \cref{algo-nontolerant-hellinger-2sample}. Let $m=\Omega(\sqrt{n\ell}/\epsilon^2)$ for some sufficiently large constant. Then with probability at least $0.63$ in each case, the following holds:
\begin{enumerate}
\item $\sum_{(i,j)\in U} {(p_{ij}-q_{ij})^2\over p_{ij}+q_{ij}}\ge \epsilon^2$ implies $\sum_{(i,j)\in U \cap U'} {(p_{ij}-q_{ij})^2\over p_{ij}+q_{ij}}\ge \epsilon^2/20$
\item $\sum_{(i,j)\in V} (p_{ij}-q_{ij})^2\ge \epsilon^4/25n\ell$ implies $\sum_{(i,j)\in V'} (p_{ij}-q_{ij})^2\ge \epsilon^4/500n\ell$
\end{enumerate}
\end{lemmaRe}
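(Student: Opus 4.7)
The idea is that each $(i,j) \in U$, having $\max\{p_{ij},q_{ij}\} \geq 1/m$, survives to $U'$ with probability at least $1 - 1/e$. Specifically, $\pr[(i,j) \notin U'] = (1-p_{ij})^m(1-q_{ij})^m \leq (1-1/m)^m \leq 1/e$. Writing $w_{ij} = (p_{ij}-q_{ij})^2/(p_{ij}+q_{ij})$ and $W_U = \sum_{(i,j) \in U} w_{ij} \geq \epsilon^2$, linearity of expectation gives $\e[L] \leq W_U/e$ for the ``lost mass'' $L := \sum_{(i,j) \in U \setminus U'} w_{ij}$. Markov's inequality applied to $L$ with threshold $(19/20) W_U$ then yields $\pr[L > (19/20) W_U] \leq (1/e)/(19/20) = 20/(19e)$, so with probability $\geq 1 - 20/(19e) > 0.6$ the retained mass $W_U - L$ is at least $W_U/20 \geq \epsilon^2/20$.

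\textbf{Part 2 setup and variance bound.} For $(i,j) \in V$ we have $p_{ij}, q_{ij} < 1/m$, hence $\pr[(i,j) \in V'] = (1-p_{ij})^m(1-q_{ij})^m \geq (1-1/m)^{2m}$, which approaches $e^{-2}$ for large $m$. With $T_V := \sum_{(i,j) \in V \cap V'} (p_{ij}-q_{ij})^2$ and $W_V := \sum_{(i,j) \in V} (p_{ij}-q_{ij})^2 \geq \epsilon^4/(25n\ell)$, linearity gives $\e[T_V] \geq (1-1/m)^{2m} W_V$. A direct Markov bound would be too weak here; the plan is to apply Chebyshev, and the main work will be a sharp variance estimate using negative correlation of the indicators $Z_{ij} := \mathbf{1}[(i,j) \in V']$. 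Across different coordinates the $Z_{ij}$'s are independent by the product structure of $P$ and $Q$; within a single coordinate $i$, the multinomial structure of the samples from $P$ and from $Q$ gives, for $j \ne j'$,
\begin{equation*}
\pr[Z_{ij}Z_{ij'} = 1] = (1 - p_{ij} - p_{ij'})^m (1 - q_{ij} - q_{ij'})^m \leq (1-p_{ij})^m(1-p_{ij'})^m(1-q_{ij})^m(1-q_{ij'})^m = \e[Z_{ij}]\e[Z_{ij'}],
\end{equation*}
so all pairwise covariances $\mathrm{Cov}(Z_{ij},Z_{i'j'})$ are nonpositive. Consequently,
\begin{equation*}
\var(T_V) \leq \sum_{(i,j) \in V} (p_{ij}-q_{ij})^4 \pr[(i,j) \in V'] \leq \Big(\max_{(i,j) \in V}(p_{ij}-q_{ij})^2\Big)\, \e[T_V] \leq \frac{4}{m^2}\e[T_V],
\end{equation*}
using $|p_{ij}-q_{ij}| < 2/m$ on $V$.

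\textbf{Concluding Part 2 and main obstacle.} Chebyshev's inequality then gives $\pr[T_V \leq W_V/20] \leq \var(T_V)/(\e[T_V] - W_V/20)^2 = O(n\ell/(m^2\epsilon^4))$, which drops below $0.37$ once $m \geq C\sqrt{n\ell}/\epsilon^2$ for a sufficiently large constant $C$ (choosing $m$ large enough so that $(1-1/m)^{2m} \geq 1/10$ ensures $\e[T_V] - W_V/20 \geq W_V/20$, absorbing the gap into the constant). Hence $T_V \geq W_V/20 \geq \epsilon^4/(500n\ell)$ with probability at least $0.63$, and the claim follows because $\sum_{(i,j) \in V'}(p_{ij}-q_{ij})^2 \geq T_V$. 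The hard part is the Part 2 variance bound: without the negative-correlation observation (which requires noting that the joint probability of several symbols being absent in multinomial samples factorizes in a strictly sub-multiplicative way), a naive estimate would force $m$ to scale like $n\ell/\epsilon^4$, well beyond the stated sample complexity $\Omega(\sqrt{n\ell}/\epsilon^2)$.
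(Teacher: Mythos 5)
Your proof is correct in substance but takes a genuinely different route from the paper. The paper treats the weighted indicators as ``independent random variables'' and applies a Chernoff bound in both parts (after a small case split in Part 1 to guarantee each summand lies in $[0,1]$); you instead use Markov's inequality on the lost mass in Part 1 and Chebyshev with an explicit variance bound in Part 2. Your observation that the $Z_{ij}$'s are negatively correlated within a coordinate — and hence that cross terms in the variance are nonpositive — is exactly the fact that makes the paper's appeal to independence morally valid: those indicators are \emph{not} independent for a fixed $i$, and the paper's Chernoff bound really relies on negative association, which it never states. So your proof is in this respect more careful, at the cost of being quantitatively weaker: Chernoff gives the paper a clean $1 - 1/e \approx 0.632$ in Part 1 and exponentially small failure in Part 2, whereas your Markov step in Part 1 only delivers $1 - 20/(19e) \approx 0.613$, which falls short of the stated constant $0.63$. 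This small deficit does not break the downstream union-bound bookkeeping in the paper (whose constants are loose anyway), but if you want to match the lemma as stated you would need either the paper's case split plus a Chernoff/negative-association bound, or a sharper tail argument than plain Markov on $L$. One minor slack: on $V$ you have $|p_{ij}-q_{ij}| < 1/m$ (not just $< 2/m$), since $|p_{ij}-q_{ij}| \le \max\{p_{ij},q_{ij}\}$ for nonnegative reals; using this would tighten your variance bound by a factor of $4$, though it changes nothing asymptotically.
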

\begin{proof}
Note that $(i,j) \in V'$ with probability = $(1-p_{ij})^m(1-q_{ij})^m$, and $(i,j) \in U'$ with the remaining probability.

({\it Proof of 1:}) Let $U''=U \cap U'$. Suppose there exists $(i,j) \in U$ such that ${(p_{ij}-q_{ij})^2\over p_{ij}+q_{ij}}\ge \epsilon^2/20$. Then this $(i,j) \in U''$ with probability $1-(1-p_{ij})^m(1-q_{ij})^m \ge 1-(1-1/m)^m \ge 0.63$, in which case the result follows. Otherwise, we consider sum of the independent random variables, $S=\sum_{(i,j)} 1_{(i,j)\in U''} {20(p_{ij}-q_{ij})^2\over \epsilon^2(p_{ij}+q_{ij})}$, each of which is in $[0,1]$. $\mathrm{E}[S]=\sum_{(i,j)\in U} (1-(1-p_{ij})^m(1-q_{ij})^m) {20(p_{ij}-q_{ij})^2\over \epsilon^2(p_{ij}+q_{ij})} \ge 12.6$. We apply Chernoff's bound to get $S\ge 6.3$ with probability 0.63.

({\it Proof of 2:}) Let $V''=V \cap V'$. We consider sum of the independent random variables, $S=\sum_{(i,j)} 1_{(i,j)\in V''} m^2 (p_{ij}-q_{ij})^2$, each of which is in $[0,1]$. $\mathrm{E}[S]=\sum_{(i,j)\in V} (1-p_{ij})^m(1-q_{ij})^m m^2 (p_{ij}-q_{ij})^2 > (1-1/m)^{2m} \sum_{(i,j)\in V} m^2 (p_{ij}-q_{ij})^2 \ge m^2\epsilon^4/250n\ell$, for $m\ge 4$. We apply Chernoff's bound to get $S\ge m^2\epsilon^4/500n\ell$ except for probability at most $\exp(-m^2\epsilon^4/3000n\ell)$.
\end{proof}
Combining \cref{lem-dhelcases} and \cref{lem:checkProxy}, we get for the `no' case, one of the two conditions of \cref{lem:checkProxy} must hold. \cref{algo-nontolerant-hellinger-2sample} uses the two tests $W_{heavy}$ and $W_{light}$ to check these two conditions separately. To analyze them, we use certain important results from~\citet*{DBLP:conf/colt/CanonneDKS17}, assuming $W_{ij}\sim \poi(p_{ij})$ and $V_{ij}\sim \poi(q_{ij})$ for every $i,j$, which holds due to Poisson sampling. We assume the check of Line 7 goes through except 1/50 probability, using the concentration of Poisson distribution.
\paragraph{Analysis of $W_{heavy}$} 
\begin{lemmaRe}[Obtained from Claims 37 and 38 of~\citet*{DBLP:conf/colt/CanonneDKS17}]\label{lem:heavy1}
If $P=Q$ then $\e[W_{heavy}]=0$. If  $\sum_{(i,j)\in U\cap U'} {(p_{ij}-q_{ij})^2\over p_{ij}+q_{ij}}\ge \epsilon^2/20$ then $\e[W_{heavy}]\ge m\epsilon^2/60$.
In both cases $\var[W_{heavy}] \le 7n\ell+15\e[W_{heavy}]$
\end{lemmaRe}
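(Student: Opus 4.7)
The plan is to follow the CDVV-style analysis for this Poisson chi-squared-like statistic, using the fact that Poissonization makes the pairs $(W_{ij},V_{ij})$ mutually independent across $(i,j)$, with $W_{ij}\sim\poi(mp_{ij})$ and $V_{ij}\sim\poi(mq_{ij})$. The central per-term calculation is to evaluate, for independent $X\sim\poi(p)$ and $Y\sim\poi(q)$ (here $p=mp_{ij}$, $q=mq_{ij}$), the expectation $\e\bigl[((X-Y)^2-(X+Y))/(X+Y)\bigr]$ with the convention that the term is $0$ when $X+Y=0$. Conditioning on $N:=X+Y$, which is $\poi(p+q)$, and using the classical fact that $X\mid N\sim \mathrm{Bin}(N,p/(p+q))$, a direct computation from the binomial variance gives
\[
\e\!\left[\frac{(X-Y)^2-(X+Y)}{X+Y}\,\Big|\,N\right]=\frac{(p-q)^2(N-1)}{(p+q)^2}\quad\text{for }N\ge 1.
\]
Taking the outer expectation yields $\frac{(p-q)^2}{(p+q)^2}\bigl(p+q-(1-e^{-(p+q)})\bigr)$.

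The first claim ($P=Q\Rightarrow\e[W_{heavy}]=0$) is then immediate since $p_{ij}=q_{ij}$ makes every per-term expectation zero. For the second claim, I would condition on $U'$, which is measurable with respect to samples independent of those producing the $W_{ij},V_{ij}$. For $(i,j)\in U\cap U'$, the definition of $U$ gives $m(p_{ij}+q_{ij})\ge 1$, so the factor $p+q-(1-e^{-(p+q)})$ is at least $(1/e)(p+q)$ by a one-variable calculus check on $f(\lambda)=\lambda-1+e^{-\lambda}$ for $\lambda\ge 1$. Each such term therefore contributes at least $\tfrac1e\cdot\tfrac{m(p_{ij}-q_{ij})^2}{p_{ij}+q_{ij}}$ to $\e[W_{heavy}]$, and since every other term in the sum has non-negative expectation (as $f(\lambda)\ge 0$ for all $\lambda\ge 0$), summing over $U\cap U'$ and invoking the hypothesis $\sum_{(i,j)\in U\cap U'}\tfrac{(p_{ij}-q_{ij})^2}{p_{ij}+q_{ij}}\ge \epsilon^2/20$ gives $\e[W_{heavy}]\ge (m/e)\cdot\epsilon^2/20\ge m\epsilon^2/60$.

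For the variance, independence across $(i,j)$ again gives $\var[W_{heavy}]=\sum_{(i,j)\in U'}\var[Z_{ij}]$ where $Z_{ij}$ is the individual summand. I would bound $\var[Z_{ij}]$ by computing $\e[Z_{ij}^2]$ via the same two-step conditioning: conditional on $N_{ij}$, $Z_{ij}$ is a quadratic function of a $\mathrm{Bin}(N_{ij},\cdot)$ random variable, whose conditional second moment can be bounded by a constant plus a multiple of the conditional expectation squared over $N_{ij}$. Taking the outer expectation and summing, the ``constant'' contributions yield the $7n\ell$ additive term (one constant per index in $U'\subseteq[n]\times[\ell]$), while the cross terms reassemble into a multiple of $\e[W_{heavy}]$, yielding the $15\,\e[W_{heavy}]$ term. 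The main obstacle is the bookkeeping of small-$N_{ij}$ corrections (especially $N_{ij}\in\{0,1\}$, where the convention $Z_{ij}=0$ matters) and squeezing the constants down to the stated $7$ and $15$; fortunately this parallels the proofs of Claims~37 and~38 of \citet*{DBLP:conf/colt/CanonneDKS17} essentially verbatim once the alphabet $[\ell]$ is folded into the index set.
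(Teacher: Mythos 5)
Your proposal correctly reconstructs the argument behind Claims~37 and~38 of \citet*{DBLP:conf/colt/CanonneDKS17}, which the paper cites rather than reproves: the conditional-on-$N$ computation gives exactly $\e[Z_{ij}\mid N]=(p-q)^2(N-1)/(p+q)^2$, the outer expectation is $\frac{(p-q)^2}{(p+q)^2}(p+q-1+e^{-(p+q)})$, and the calculus bound $\lambda-1+e^{-\lambda}\ge\lambda/e$ for $\lambda\ge 1$ correctly yields $\e[W_{heavy}]\ge m\epsilon^2/(20e)\ge m\epsilon^2/60$ since $20e<60$; the variance bookkeeping you outline is the same conditional second-moment accounting as in the cited claims. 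This is the same approach as the source the paper relies on, carried out at greater explicit detail than the paper itself provides.
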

By the application of Chebyshev's inequality we get the following.
\begin{lemmaRe}\label{lem:heavy2}
Let $m=\Omega(\sqrt{n\ell}/\epsilon^2)$ for a sufficiently large constant. Then the following holds
except for probability $\le 1/25$ in each case,
\begin{enumerate}
\item $P=Q$ implies $W_{heavy}\le m\epsilon^2/120$
\item $\sum_{(i,j)\in U\cap U'} {(p_{ij}-q_{ij})^2\over p_{ij}+q_{ij}}\ge \epsilon^2/20$ implies $W_{heavy}> m\epsilon^2/120$.
\end{enumerate}
\end{lemmaRe}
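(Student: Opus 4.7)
The plan is to derive both parts of the lemma by a straightforward application of Chebyshev's inequality, using the mean and variance bounds from \cref{lem:heavy1}. In each case we pick the threshold $m\epsilon^2/120$ so that, assuming $m$ is at least $C\sqrt{n\ell}/\epsilon^2$ for a suitable constant $C$, the deviation we need to control is large compared to the standard deviation of $W_{heavy}$.

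For the first part (completeness), when $P=Q$ we have $\e[W_{heavy}]=0$ and $\var[W_{heavy}]\le 7n\ell$, so Chebyshev gives
\[
\Pr\bigl[W_{heavy} > m\epsilon^2/120\bigr] \;\le\; \frac{\var[W_{heavy}]}{(m\epsilon^2/120)^2} \;\le\; \frac{7n\ell\cdot 120^2}{m^2\epsilon^4}.
\]
Choosing $C$ large enough that $m^2\epsilon^4 \ge 25\cdot 7 \cdot 120^2 \cdot n\ell$ makes this at most $1/25$.

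For the second part (soundness), we have $\mu := \e[W_{heavy}]\ge m\epsilon^2/60$, so the threshold $m\epsilon^2/120$ sits below $\mu/2$. It is therefore enough to show $|W_{heavy}-\mu|\le \mu/2$ with probability at least $24/25$. By Chebyshev,
\[
\Pr\bigl[|W_{heavy}-\mu|\ge \mu/2\bigr] \;\le\; \frac{4\var[W_{heavy}]}{\mu^2} \;\le\; \frac{4(7n\ell+15\mu)}{\mu^2} \;=\; \frac{28n\ell}{\mu^2}+\frac{60}{\mu}.
\]
Substituting $\mu\ge m\epsilon^2/60$ gives two terms of the form $\Theta(n\ell/(m^2\epsilon^4))$ and $\Theta(1/(m\epsilon^2))$. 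The first is controlled exactly as in the completeness case (and dominates when $m\epsilon^2 \asymp \sqrt{n\ell}$), while the second is even smaller because $m\epsilon^2 \ge C\sqrt{n\ell} \ge C$. Picking $C$ sufficiently large so that both terms are each at most $1/50$ yields total failure probability at most $1/25$.

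There is no real obstacle here; the only thing to be careful about is keeping track of the two variance contributions $7n\ell$ and $15\e[W_{heavy}]$ so that the same $m=\Omega(\sqrt{n\ell}/\epsilon^2)$ threshold (with an appropriate hidden constant) suffices in both cases. Both cases then follow by a single choice of the universal constant $C$, and the stated $1/25$ error probability is achieved for each.
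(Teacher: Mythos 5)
Your proposal is correct and takes exactly the approach the paper intends: the paper's entire justification for \cref{lem:heavy2} is the one-line remark "By the application of Chebyshev's inequality we get the following," and you have simply filled in the details, applying Chebyshev around mean $0$ with the $7n\ell$ variance bound for completeness, and around $\mu \ge m\epsilon^2/60$ using the full $7n\ell + 15\mu$ bound for soundness, with the hidden constant in $m = \Omega(\sqrt{n\ell}/\epsilon^2)$ absorbing both failure probabilities.
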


\paragraph{Analysis of $W_{light}$}
\begin{lemmaRe}[Obtained from Proposition 6 of~\citet*{ChanDVV14} and Claim 35 of~\citet*{DBLP:conf/colt/CanonneDKS17}]\label{lem:light1}
$\e[W_{light}]=m^2\sum_{(i,j)\in V'} {(p_{ij}-q_{ij})^2}$ and $\var[W_{light}] \le 80m^3\sqrt{b}\sum_{(i,j)\in V'} {(p_{ij}-q_{ij})^2}+8m^2b$, where $b=\max\{\sum_{(i,j) \in V'}p_{ij}^2,\sum_{(i,j) \in V'}q_{ij}^2\}$. Furthermore, $b\le 50n\ell/m^2$ for a sufficiently large $m$, except for probability at most $1/50$.
\end{lemmaRe}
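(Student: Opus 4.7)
The plan is to exploit the independence established by Poissonization, then combine a direct Poisson moment computation with a Markov-type bound on $b$. By the Poisson draws of $M_i, M'_i$ in Lines~3--4 together with Poisson thinning in each coordinate, the counts $\{W_{ij}\}$ are mutually independent with $W_{ij}\sim\poi(mp_{ij})$, and similarly $\{V_{ij}\}$ are independent $\poi(mq_{ij})$ variables, with the two families independent of each other. The set $V'$ is determined solely by the first-stage samples in Lines~1--2 and is independent of the $W_{ij}, V_{ij}$, so we may condition on any realization of $V'$ throughout the mean/variance analysis.

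Write $Z_{ij}=(W_{ij}-V_{ij})^2-(W_{ij}+V_{ij})$. For independent $X\sim\poi(\alpha)$ and $Y\sim\poi(\beta)$, the identity $\e[(X-Y)^2]=(\alpha+\beta)+(\alpha-\beta)^2$ gives $\e[Z_{ij}]=m^2(p_{ij}-q_{ij})^2$, and summing proves $\e[W_{light}]=m^2\sum_{(i,j)\in V'}(p_{ij}-q_{ij})^2$. For the variance, a standard factorial-moment computation for independent Poissons (exactly Proposition~6 of \citet*{ChanDVV14}) yields
\[
\var[Z_{ij}]=4(\alpha+\beta)(\alpha-\beta)^2+2(\alpha+\beta)^2,\qquad \alpha=mp_{ij},\ \beta=mq_{ij}.
\]
Since the $Z_{ij}$ are independent, $\var[W_{light}]=\sum_{(i,j)\in V'}\var[Z_{ij}]$. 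The second term contributes at most $2\sum m^2(p_{ij}+q_{ij})^2\le 4m^2\sum(p_{ij}^2+q_{ij}^2)\le 8m^2 b$. For the first term, note that for every $(i,j)\in V'$, $(p_{ij}+q_{ij})^2\le 2(p_{ij}^2+q_{ij}^2)\le 2\bigl(\sum p^2+\sum q^2\bigr)\le 4b$, hence $\max_{(i,j)\in V'}(p_{ij}+q_{ij})\le 2\sqrt{b}$, giving
\[
4\sum_{(i,j)\in V'}(\alpha+\beta)(\alpha-\beta)^2\le 4m\cdot 2\sqrt{b}\cdot m^2\sum_{(i,j)\in V'}(p_{ij}-q_{ij})^2=8m^3\sqrt{b}\sum(p_{ij}-q_{ij})^2,
\]
which is absorbed into the stated $80m^3\sqrt{b}\sum(p_{ij}-q_{ij})^2$.

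Finally, the bound on $b$ is an expectation-plus-Markov step. For any $(i,j)$, $\mathrm{Pr}[(i,j)\in V']=(1-p_{ij})^m(1-q_{ij})^m\le e^{-mp_{ij}}$, so by calculus (the maximum of $t^2 e^{-mt}$ is attained at $t=2/m$),
\[
\e\!\Bigl[\sum_{(i,j)\in V'}p_{ij}^2\Bigr]\le\sum_{i\in[n],\,j\in[\ell]}p_{ij}^2 e^{-mp_{ij}}\le n\ell\cdot\frac{4}{e^2 m^2}.
\]
Markov then gives $\sum_{(i,j)\in V'}p_{ij}^2\le 25n\ell/m^2$ except with probability at most $4/(25 e^2)<1/100$; a symmetric argument handles the $q$-sum, and a union bound yields $b\le 50 n\ell/m^2$ except with probability at most $1/50$. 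The only mildly delicate step is the Poisson fourth-moment expansion establishing the closed form for $\var[Z_{ij}]$; since it is routine but tedious, I would simply cite CDVV Proposition~6 rather than redo it, keeping the remainder of the argument (independence via Poissonization, the $\max(p+q)\le 2\sqrt{b}$ trick, and the Markov bound) fully self-contained.
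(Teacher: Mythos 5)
The paper does not reprove this lemma; it simply cites \citet*{ChanDVV14} (Proposition 6) and \citet*{DBLP:conf/colt/CanonneDKS17} (Claim 35). Your reconstruction is self-contained and follows the same route: Poissonization for independence, the closed-form mean/variance of the CDVV statistic $Z_{ij}=(W_{ij}-V_{ij})^2-(W_{ij}+V_{ij})$, the bound $\max_{(i,j)\in V'}(p_{ij}+q_{ij})\le 2\sqrt{b}$ to absorb the $(\alpha+\beta)(\alpha-\beta)^2$ term into $O(m^3\sqrt{b})\sum(p_{ij}-q_{ij})^2$, and a Markov argument for $b$. The mean, the variance identity, and the variance bounds all check out (with considerable slack, $8$ versus the stated $80$).

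There is, however, one numerical slip in the Markov step. You claim that $\sum_{(i,j)\in V'}p_{ij}^2\le 25n\ell/m^2$ fails with probability at most $4/(25e^2)<1/100$, but $4/(25e^2)\approx 0.0217$, which is \emph{not} less than $1/100$. With the constants as written, the union bound over the $p$- and $q$-sums gives failure probability $\approx 0.043 > 1/50$, so the stated tail bound on $b$ is not actually established. The fix is essentially free: the lemma statement includes the qualifier ``for a sufficiently large $m$,'' which you never invoke. Since the paper's preprocessing guarantees $p_{ij}\ge\epsilon^2/(50n\ell)$ for every $(i,j)$, once $m\gg n\ell/\epsilon^2$ each $p_{ij}$ exceeds $2/m$, and $t^2(1-t)^m$ is decreasing past its peak, so $\e[\sum_{(i,j)\in V'}p_{ij}^2]$ decays exponentially in $m\epsilon^2/(n\ell)$ and the Markov bound becomes as small as desired; alternatively, just track the constant in the threshold. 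Either way the conclusion survives, but as written the final ``$<1/100$, hence union bound gives $1/50$'' does not go through.
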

By the application of Chebyshev's inequality we get the following.
\begin{lemmaRe}\label{lem:light2}
Let $m=\Omega((n\ell)^{3/4}/\epsilon^2)$ for a sufficiently large constant. Then the following holds
except for probability $\le 1/50$ in each case,
\begin{enumerate}
\item $P=Q$ implies $W_{light}\le m^2\epsilon^4/1000n\ell$
\item $\sum_{(i,j)\in V'} (p_{ij}-q_{ij})^2\ge \epsilon^4/500n\ell$ implies $W_{light}> m^2\epsilon^4/1000n\ell$.
\end{enumerate}
\end{lemmaRe}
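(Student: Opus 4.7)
The plan is to apply Chebyshev's inequality to $W_{light}$ in each of the two cases, using the moment estimates supplied by \cref{lem:light1}. Throughout I will condition on the event $\{b \le 50 n\ell / m^2\}$, which by \cref{lem:light1} fails with probability at most $1/50$; the remaining slack will be absorbed by Chebyshev itself after choosing constants appropriately.

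For case~1, since $P = Q$ we have $p_{ij} = q_{ij}$ for all $(i,j)$, so $\e[W_{light}] = 0$ and the variance simplifies to $\var[W_{light}] \le 8 m^2 b \le 400 n\ell$. Chebyshev's inequality then gives
\[
\Pr\!\left[W_{light} > \frac{m^2 \epsilon^4}{1000 n\ell}\right] \;\le\; \frac{400 n\ell \cdot (1000 n\ell)^2}{m^4 \epsilon^8} \;=\; O\!\left(\frac{(n\ell)^3}{m^4 \epsilon^8}\right),
\]
which is at most, say, $1/100$ whenever $m \ge C (n\ell)^{3/4}/\epsilon^2$ for a large enough constant $C$.

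For case~2, set $T := \sum_{(i,j) \in V'} (p_{ij} - q_{ij})^2 \ge \epsilon^4/(500 n\ell)$, so that $\e[W_{light}] = m^2 T \ge m^2 \epsilon^4/(500 n\ell)$; this is at least twice the decision threshold $m^2 \epsilon^4/(1000 n\ell)$, hence it suffices to show $|W_{light} - \e[W_{light}]| \le \e[W_{light}]/2$ with high probability. Substituting $b \le 50 n\ell/m^2$ into the variance bound of \cref{lem:light1} yields
\[
\var[W_{light}] \;\le\; 80 m^3 \sqrt{50 n\ell / m^2}\, T + 8 m^2 \cdot (50 n\ell/m^2) \;=\; O(m^2 \sqrt{n\ell})\, T + O(n\ell),
\]
so Chebyshev bounds the deviation probability by $O\!\left(\sqrt{n\ell}/(m^2 T) + n\ell/(m^4 T^2)\right)$. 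Using the lower bound on $T$, these two terms become $O((n\ell)^{3/2}/(m^2 \epsilon^4))$ and $O((n\ell)^3/(m^4 \epsilon^8))$ respectively, and both are at most $1/100$ under $m \ge C (n\ell)^{3/4}/\epsilon^2$ with $C$ sufficiently large. Union-bounding with the $1/50$ failure of the $b$-event (and tightening $C$ if needed) keeps the total failure probability within $1/50$ in each case.

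The argument is essentially a two-term Chebyshev calculation; the only thing that needs care is that both pieces of the variance, namely the $T$-linear cross term and the $T$-free term arising from sums of squared probabilities on $V'$, must simultaneously be dominated by $\e[W_{light}]^2 = m^4 T^2$. It is precisely this joint domination that forces the $(n\ell)^{3/4}$ dependence in the sample complexity, and no tool beyond \cref{lem:light1} and elementary Chebyshev is required.
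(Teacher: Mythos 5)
Your proof is correct and follows the same approach as the paper, which simply invokes Chebyshev's inequality on the moment bounds of \cref{lem:light1}; your calculation fills in the arithmetic the paper leaves implicit, and the key observation that both the $T$-linear and the $T$-free pieces of the variance must be dominated by $m^4T^2$ is exactly what pins down the $(n\ell)^{3/4}/\epsilon^2$ sample complexity. One small bookkeeping slip: conditioning on $\{b \le 50n\ell/m^2\}$ and then ``union-bounding with the $1/50$ failure of the $b$-event'' cannot literally bring the total below $1/50$, since $1/50$ plus any positive Chebyshev tail exceeds $1/50$; the paper's intent is that the $b$-bound failure from \cref{lem:light1} is charged separately in the algorithm's overall error budget, so the $1/50$ claimed in \cref{lem:light2} should refer only to the Chebyshev deviation (conditional on the $b$-event), which your constants easily achieve.
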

Together we get $O((n\ell)^{3/4}/\epsilon^2)$ samples are enough to distinguish $P=Q$ versus $\dhel(P,Q)\ge \epsilon$. 

\clexhel*

\subsection{$\dtv$-tester}
A 2-sample tester for distinguishing $P=Q$ from
$\dtv(P,Q)\ge \epsilon$, for product distributions over $\{0,1\}^n$, was given in~\citet*{DBLP:conf/colt/CanonneDKS17} with sample complexity $O(\sqrt{n}/\epsilon^2,\max\{n^{3/4}/\epsilon\})$. In the following, we generalize this result for product distributions over $\Sigma^n$ with sample complexity $m=O(\max\{\sqrt{n|\Sigma|}/\epsilon^2,(n|\Sigma|)^{3/4}/\epsilon\})$ which is optimal for $|\Sigma|=2$ \citep*{DBLP:conf/colt/CanonneDKS17}. Let $\ell=|\Sigma|$. We assume $\min_{i,j} p_{ij}\ge \epsilon/50n\ell$ and  $\min_{i,j} q_{ij}\ge \epsilon/50n\ell$ without loss of generality, using a reduction similar to \cref{lem-modification} \citep*{DBLP:conf/colt/CanonneDKS17,DBLP:conf/colt/DaskalakisP17}. Let $V\subseteq [n]\times [\ell]$ be the set of indices $(i,j)$, such that $\max\{p_{ij},q_{ij}\}<1/m$. Let $U=[n]\times [\ell] \setminus V$.
\begin{lemmaRe}\label{lem-dtvcases}
Suppose $\dtv(P,Q)\ge \epsilon$. Suppose $\min_{i,j} p_{ij}\ge \epsilon/50n\ell$ and  $\min_{i,j} q_{ij}\ge \epsilon/50n\ell$. Then at least one of the following two must hold: 
\begin{enumerate}
\item $\sum_{(i,j)\in V} (p_{ij}-q_{ij})^2\ge \epsilon^2/n\ell$ 
\item $\ab\sum_{(i,j)\in U} {(p_{ij}-q_{ij})^2\over p_{ij}+q_{ij}}\ge \epsilon^2/4$.
\end{enumerate}
\end{lemmaRe}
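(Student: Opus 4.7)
The plan is to mirror the proof of Lemma~\ref{lem-dhelcases}, adapted from the Hellinger setting to the total variation setting. I will extract two scalar sum inequalities from the hypothesis $\dtv(P,Q)\ge\epsilon$ and then case-split on where the TV mass lies. First, combining $\dtv \le \sqrt{2}\,\dhel$ with Fact~\ref{lem-helloc-equality} and Fact~\ref{fact-triangledist} yields
\[\sum_{(i,j)} \frac{(p_{ij}-q_{ij})^2}{p_{ij}+q_{ij}} \;\ge\; \epsilon^2.\]
Second, the coupling-style sub-additivity $\dtv(P,Q) \le \sum_i \dtv(P_i,Q_i) = \tfrac{1}{2}\sum_{(i,j)} |p_{ij}-q_{ij}|$ gives $\sum_{(i,j)} |p_{ij}-q_{ij}| \ge 2\epsilon$.

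With both inequalities in hand, I split the TV mass across $V$ and $U$. If $\sum_{(i,j)\in V}|p_{ij}-q_{ij}|\ge\epsilon$, Cauchy--Schwarz against $|V|\le n\ell$ gives condition~1 directly:
\[\sum_{V} (p_{ij}-q_{ij})^2 \;\ge\; \frac{\bigl(\sum_V |p_{ij}-q_{ij}|\bigr)^2}{|V|} \;\ge\; \frac{\epsilon^2}{n\ell}.\]
Otherwise $\sum_U |p_{ij}-q_{ij}|>\epsilon$, and I derive condition~2 by exploiting the Hellinger-based inequality. Using the elementary bound $(p-q)^2/(p+q) \le |p-q|$ on each light index shows that the $V$-contribution to the Hellinger sum is at most $\sum_V |p-q|<\epsilon$, so after rearranging, $\sum_U (p-q)^2/(p+q) \ge \epsilon^2 - \epsilon$, which, after a sharpening, delivers the stated $\epsilon^2/4$.

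The main obstacle is precisely that last sharpening: the naive subtraction $\epsilon^2-\epsilon$ does not dominate $\epsilon^2/4$ for small $\epsilon$. To close this gap I would refine via a per-coordinate Cauchy--Schwarz, using $\sum_{j\in U_i}(p_{ij}+q_{ij})\le 2$ to obtain $\sum_{j\in U_i}(p-q)^2/(p+q) \ge \tfrac{1}{2}\bigl(\sum_{j\in U_i}|p-q|\bigr)^2$, and invoke the minimum-probability hypothesis $p_{ij},q_{ij}\ge \epsilon/(50n\ell)$ to rule out that too much TV mass has leaked into the light set $V$ via the $\min(p+q)\ge \epsilon/(25n\ell)$ separation. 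This step is the technical heart of the argument and is the main point of departure from the Hellinger analog in Lemma~\ref{lem-dhelcases}; the exponent differences relative to the $\epsilon^4/(25 n\ell)$ light-side bound in the Hellinger case are exactly compensated by using the TV sub-additivity route in the light case and the Hellinger route in the heavy case.
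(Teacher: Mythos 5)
Your light case matches the paper: from $\dtv(P,Q)\le\sum_i\dtv(P_i,Q_i)=\tfrac12\sum_{(i,j)}|p_{ij}-q_{ij}|$ you get $\sum_{(i,j)}|p_{ij}-q_{ij}|\ge 2\epsilon$, and if at least $\epsilon$ of this sits in $V$, Cauchy--Schwarz against $|V|\le n\ell$ gives condition~1. That is exactly what the paper does.

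The heavy case is where you have a genuine gap, and it is one your proposal does not close. As you observe, the subtraction $\epsilon^2-\epsilon$ is useless (it is negative for $\epsilon<1$), and the sketched repair does not work either. Per-coordinate Cauchy--Schwarz indeed gives $\sum_{j\in U_i}\frac{(p_{ij}-q_{ij})^2}{p_{ij}+q_{ij}}\ge\tfrac12\bigl(\sum_{j\in U_i}|p_{ij}-q_{ij}|\bigr)^2$, but after summing over $i$ you are left with $\sum_i\bigl(\sum_{j\in U_i}|p_{ij}-q_{ij}|\bigr)^2$, which you can relate to $\bigl(\sum_{U}|p_{ij}-q_{ij}|\bigr)^2\ge\epsilon^2$ only by another Cauchy--Schwarz that costs a factor of $n$, giving $\epsilon^2/(2n)$, not $\epsilon^2/4$; and the floor $\min\{p_{ij},q_{ij}\}\ge\epsilon/50n\ell$ plays no role on the heavy side (in the paper it is only used to control $V$). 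The paper's heavy-case argument is a different single calculation and never touches the Hellinger sum $\sum\frac{(p-q)^2}{p+q}\ge\epsilon^2$: starting from $\sum_U|p_{ij}-q_{ij}|\ge\epsilon$, factor each term as $|\sqrt{p_{ij}}-\sqrt{q_{ij}}|\,|\sqrt{p_{ij}}+\sqrt{q_{ij}}|$, apply Cauchy--Schwarz to obtain $\bigl(\sum_U|p_{ij}-q_{ij}|\bigr)^2\le\bigl(\sum_U(\sqrt{p_{ij}}-\sqrt{q_{ij}})^2\bigr)\bigl(\sum_U(\sqrt{p_{ij}}+\sqrt{q_{ij}})^2\bigr)$, bound the second factor by $\sum_U 2(p_{ij}+q_{ij})$, and finish with $(\sqrt{p}-\sqrt{q})^2\le\frac{(p-q)^2}{p+q}$. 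If you want to prove the lemma, that factorization is the missing idea. (Be aware, though, that the paper then replaces $\sum_U 2(p_{ij}+q_{ij})$ by the constant $4$ even though $\sum_{(i,j)}(p_{ij}+q_{ij})=2n$, so the heavy-case constant $\epsilon^2/4$ itself deserves scrutiny; either way, your Hellinger-subtraction route is not the one the paper uses and does not reach the stated bound.)
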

\begin{proof}
We get $\sum_{(i,j)\in V} |p_{ij}-q_{ij}|+\sum_{(i,j)\in U} |p_{ij}-q_{ij}| = 2\sum_i \dtv(P_i,Q_i) \ge 2\dtv(P,Q) \ge 2\epsilon$, the second last inequality from the following Fact.
\begin{factRe}
For two product distributions $P=\Pi_{i=1}^n P_i$ and $Q=\Pi_{i=1}^n Q_i$, $\dtv(P,Q) \le \sum_i \dtv(P_i,Q_i)$. 
\end{factRe}
 Hence at least one of $\sum_{(i,j)\in V} |p_{ij}-q_{ij}|$ or $\sum_{(i,j)\in U} |p_{ij}-q_{ij}|$ is at least $\epsilon$.

In the first case, we get
\begin{align*}
\qquad\qquad\quad\sum_{(i,j)\in V} (p_{ij}-q_{ij})^2 \sum_{(i,j) \in V} 1 &\ge (\sum_{(i,j) \in V} |p_{ij}-q_{ij}|)^2 &&\tag{Cauchy-Schwarz inequality}\\
\sum_{(i,j)\in V} (p_{ij}-q_{ij})^2& \ge  \epsilon^2/n\ell.
\end{align*}

In the second case, the proof is similar to that of the standard Facts $\dtv\le \dhel$ and \cref{fact-triangledist} (see eg. \citet*{Daskalakis:2018:DDS:3174304.3175479} for both).
\begingroup
\allowdisplaybreaks
\begin{align*}
\qquad\qquad\epsilon^2&\le (\sum_{(i,j)\in U} |p_{ij}-q_{ij}|)^2\\
&= (\sum_{(i,j)\in U} |\sqrt{p_{ij}}-\sqrt{q_{ij}}||\sqrt{p_{ij}}+\sqrt{q_{ij}}|)^2\\
&\le (\sum_{(i,j)\in U} (\sqrt{p_{ij}}-\sqrt{q_{ij}})^2 )(\sum_{(i,j)\in U} (\sqrt{p_{ij}}+\sqrt{q_{ij}})^2 )&&\tag{Cauchy-Schwarz inequality}\\
&\le (\sum_{(i,j)\in U} (\sqrt{p_{ij}}-\sqrt{q_{ij}})^2 )(\sum_{(i,j)\in U}2(p_{ij}+q_{ij}))\\
&\le 4\sum_{(i,j)\in U} {(p_{ij}-q_{ij})^2 \over (\sqrt{p_{ij}}+\sqrt{q_{ij}})^2}\\
&\le 4\sum_{(i,j)\in U} {(p_{ij}-q_{ij})^2 \over p_{ij}+q_{ij}}.
\end{align*}
\endgroup

\end{proof}
We skip the rest of the details of the algorithm and its analysis since it closely follows that of \cref{sec:inHelDistance}. We identify the partitions $U$ and $V$ approximately by checking which indices are hit in $m$ samples. For $m=\Omega(\sqrt{n\ell}/\epsilon)$, this approximation is acceptable using a result similar to \cref{lem:checkProxy}. \cref{lem:heavy1}, \cref{lem:heavy2} and \cref{lem:light1} are as before up to the constants. Only in \cref{lem:light2}, the threshold for the light part changes to $m^2\epsilon^2/40n\ell$, and the sample complexity for the light part changes to $m=\Theta((n\ell)^{3/4}/\epsilon)$.
\nontolDTV*

\paragraph{Acknowledgement} We thank Clément Canonne for his comments on an earlier version of this paper. We also thank the anonymous reviewers of ALT 21 for improving the paper.

\bibliography{reflist}
\newpage
\appendix
\section{Proof of \cref{thm-adk}}
\adk*
\begin{proof}
The test $T$ of~\citet*{AcharyaDK15} is given by $T=\Sigma_{i=1}^K {(N_i-ms_i)^2 - N_i\over ms_i}$.
\begingroup
\allowdisplaybreaks
\begin{align*}
\e[T]&=\e\left[\sum_i {(N_i-ms_i)^2-N_i\over ms_i}\right]\\
&=\sum_i {\e[(N_i-ms_i)^2-N_i]\over ms_i}\\
&=\sum_i {\e[N_i^2]+m^2s_i^2-2ms_i\e[N_i]-\e[N_i]\over ms_i}\\
&=\sum_i {mr_i(1+mr_i)+m^2s_i^2-2ms_i\cdot mr_i-mr_i\over ms_i} &&\tag{Since $N_i\sim \poi(mr_i)$}\\
&=m\sum_i {(r_i-s_i)^2\over s_i}.
\end{align*}
\endgroup

\begingroup
\allowdisplaybreaks
\begin{align*}
\var[T]&=\var\left[\sum_i {(N_i-ms_i)^2-N_i\over ms_i}\right]\\
&=\sum_i \var\left[{(N_i-ms_i)^2-N_i\over ms_i}\right] &&\tag{Since $N_i$s are independent for different $i$s}\\
&=\sum_i {1\over m^2s_i^2}\var[N_i^2-(2ms_i+1)N_i] \\
&=\sum_i {1\over m^2s_i^2} [\var[N_i^2]+(2ms_i+1)^2\var[N_i]-2(2ms_i+1)\mathrm{Cov}(N_i^2,N_i)]\\
&=\sum_i {1\over m^2s_i^2} [(\e[N_i^4]-\e^2[N_i^2])+(2ms_i+1)^2(\e[N_i^2]-\e^2[N_i])\\&\qquad\qquad\qquad\qquad\qquad\qquad\qquad\qquad\qquad\qquad-2(2ms_i+1)(\e[N_i^3]-\e[N_i^2]\e[N_i])]\\
&=\sum_i {1\over m^2s_i^2} [\lambda(1+5\lambda+4\lambda^2)+\lambda(2ms_i+1)^2-2(2ms_i+1)(\lambda(2\lambda+1))]&&\tag{Since $N_i\sim \poi(\lambda)$, where $\lambda=mr_i$}\\
&=\sum_i {1\over m^2s_i^2} \lambda[\lambda+(2ms_i-2\lambda)^2]\\
&=\sum_i {r_i^2\over s_i^2}+\sum_i 4mr_i{(r_i-s_i)^2\over s_i^2}.
\end{align*}
\endgroup
We bound the above two summations separately.
\begingroup
\allowdisplaybreaks
\begin{align*}
\sum_i {r_i^2\over s_i^2} &= \sum_i {(r_i-s_i)^2+2s_i(r_i-s_i)+s_i^2\over s_i^2}\\
&= \sum_i {(r_i-s_i)^2\over s_i^2} + 2\sum_i {r_i-s_i\over s_i} +\sum_i 1\\
&\le 2\left( \sum_i {(r_i-s_i)^2\over s_i^2}+\sum_i 1 \right) &&\tag{Using $a^2+1\ge 2a$}\\
&\le 2\left({50K\over \epsilon^2}\sum_i {(r_i-s_i)^2\over s_i}+K\right) &&\tag{Using $s_i \geq \epsilon^2/50K$}\\
&=2({50K\over \epsilon^2}{\e[T]\over m}+K)\\
&\le \sqrt{K}\e[T]+2K. &&\tag{Using $m\ge c\sqrt{K}/\epsilon^2$ for c sufficiently large}
\end{align*}
\endgroup
\begingroup
\allowdisplaybreaks
\begin{align*}
\sum_i 4mr_i{(r_i-s_i)^2\over s_i^2}&\le 4m\sqrt{\sum_i {r_i^2\over s_i^2}}\sqrt{\sum_i {(r_i-s_i)^4\over s_i^2}}
&&\tag{Using Cauchy-Schwarz inequality}\\
&\le 4m\sqrt{\sqrt{K}\e[T]+2K} \sum_i {(r_i-s_i)^2\over s_i}\\
&\le 4\e[T] (K^{1/4}\sqrt{\e[T]}+\sqrt{2K}).
\end{align*}
\endgroup
Together we get
\begingroup
\allowdisplaybreaks
\begin{align*}
\var[T] &\le \sqrt{K}\e[T]+2K + 4\e[T] (K^{1/4}\sqrt{\e[T]}+\sqrt{2K})\\
&\le 2K + 7\sqrt{K}\e[T]+4K^{1/4}(\e[T])^{3/2}.
\end{align*}
\endgroup
\end{proof}
\end{document}